\documentclass[12pt,a4paper]{article}

\usepackage{graphicx}
\usepackage{xcolor}
\usepackage{geometry}
\usepackage{amsmath}
\usepackage{amssymb}
\usepackage{booktabs}
\usepackage{longtable}
\usepackage{array}
\usepackage{enumitem}
\usepackage{setspace}
\usepackage{microtype}
\usepackage{fancyhdr}
\usepackage{titlesec}
\usepackage{multirow}
\usepackage{amsthm}

\usepackage{mathtools}
\usepackage{bm}
\usepackage{thmtools}

\usepackage{float}
\usepackage[format=hang, font={small, it}, labelfont=bf]{caption}
\usepackage{algorithm}
\usepackage{algorithmic}
\usepackage{makecell}
\usepackage{ragged2e}
\usepackage{array}
\usepackage{booktabs}

\usepackage{mathrsfs}
\usepackage{xfrac}
\usepackage[overload]{empheq}
\usepackage{authblk}
\usepackage{tabularx}

\usepackage[square,numbers, sort&compress]{natbib}
\newcommand{\dd}{\mathrm{d}}
\newcommand{\DD}{\mathcal{D}}

\newcommand{\Mf}{\widetilde{M}}

\titleformat{\section}{\large\bfseries}{\thesection}{1em}{}
\titleformat{\subsection}{\normalsize\bfseries}{\thesubsection}{1em}{}
\titleformat{\subsubsection}{\normalsize\itshape}{\thesubsubsection}{1em}{}

\usepackage{hyperref}
\hypersetup{colorlinks=true, linkcolor=blue, citecolor=blue, urlcolor=blue}

\numberwithin{equation}{section}
\allowdisplaybreaks[4]

\declaretheoremstyle[
 headfont=\bfseries,
 bodyfont=\normalfont,
 postheadspace=1em,
 numbered=yes
]{thmstyle}
\declaretheorem[style=thmstyle, numberwithin=section]{theorem}
\declaretheorem[style=thmstyle, sibling=theorem]{lemma}
\declaretheorem[style=thmstyle, sibling=theorem]{proposition}
\declaretheorem[style=thmstyle, sibling=theorem]{definition}
\declaretheorem[style=thmstyle, sibling=theorem]{remark}
\declaretheorem[style=thmstyle, sibling=theorem]{corollary}

\declaretheorem[style=thmstyle, sibling=theorem]{assumption}

\begin{document}

\begin{titlepage}
\centering
\vspace*{2.5cm}

{\Large\bfseries{Contact Large Deviations of Stochastic Vector Bundles}\\[1.4cm]}
\vspace{2cm}

{
\large
\textbf{D.Y. ZHONG}$^{*1,2}$
}

\vspace{.5cm}

{
\normalsize
$^{1}$State Key Laboratory of Hydroscience and Engineering, Tsinghua University, Beijing 100084, China\\
$^{2}$Department of Hydraulic Engineering, Tsinghua University, Beijing 100084, China
}

\vspace{0.5cm}

{
\normalsize
\textit{$^{*}:$ Corresponding author: zhongdy@tsinghua.edu.cn}
}

\vfill

{
\large
September $14^{th}$ 2026
}

\end{titlepage}

\newpage
\section*{Abstract}

\addcontentsline{toc}{section}{Abstract}

\vspace{0.5cm}

Large deviation theory lacks a geometric foundation for its rate functions and fluctuation symmetries. This paper develops a contact large deviation theory on stochastic vector bundles, in which the rate function, the scaled cumulant generating function, and a Gallavotti--Cohen-type fluctuation duality all follow from the contact 1-form. The constraint function acts as a generalized Lagrangian; the least constraint principle yields the dynamics, and the contact potential is built order by order from the master equation, producing a coupled Hamilton--Jacobi--transport system governed by the invariant density, drift, and fluctuation tensor. The contact path measure satisfies a large deviation principle with rate function given by the constraint action; the scaled cumulant generating function obeys a stationary eigenvalue equation with a Donsker--Varadhan variational characterization. The entropy production rate, the fluctuation--dissipation combination $e=\tfrac12 g^T Ag-\sigma$, is the physical observable; the time-reversal involution $J:(t,y,\phi)\mapsto(t,y,-\phi-\nabla\ln\rho)$ with reversed drift $v^{\mathrm{rev}}=-v+Ag$ yields a Gallavotti--Cohen-type duality $\lambda_{\mathrm{fwd}}(q)=\lambda_{\mathrm{rev}}(q+1)+\lambda_{\mathrm{fwd}}(-1)$. The classical one-dimensional GC symmetry is recovered in the reversible case $Ag=0$.  

\vspace{0.5cm}

\noindent\textbf{Keywords}: contact geometry, large deviation theory, stochastic dynamics, Gallavotti--Cohen fluctuation theorem, Hamilton--Jacobi equation

\noindent

\newpage
\tableofcontents
\listoftheorems
\listoftables

\newpage

\section{Introduction}\label{sec:intro}

The statistical description of nonequilibrium systems, from turbulent flows to molecular machines, rests on the theory of large deviations, the mathematical framework that quantifies the probability of rare events far from typical behavior. Since its foundational works by Varadhan~\cite{varadhan1966asymptotic}, Freidlin and Wentzell~\cite{freidlin1998random}, and Donsker and Varadhan~\cite{donsker1975asymptoticI,donsker1975asymptoticII,donsker1976asymptoticIII}, large deviation theory (LDT) has grown into a mature field with applications ranging from statistical mechanics~\cite{lebowitzspohn1999gallavotti} to stochastic control and information theory. The G\"artner--Ellis theorem~\cite{gartner1977large,ellis1984large}, which connects the scaled cumulant generating function (SCGF) to the rate function via the Legendre--Fenchel transform, provides the central analytical tool. Modern textbooks such as Dembo and Zeitouni~\cite{dembo1998large} offer a comprehensive treatment of the subject.

Despite its power, LDT is fundamentally a probabilistic theory: it begins with a probability measure on path space and derives the asymptotic behavior of rare events. The geometric structure underlying these results, the reason why rate functions, SCGFs, and fluctuation--dissipation relations take the forms they do, is not immediately apparent from the probabilistic construction. A complementary perspective, originating from the geometric mechanics tradition of Arnold~\cite{arnold1989mathematical}, seeks to derive dynamical and statistical structure from first-principles geometric constraints. In this tradition, contact geometry has recently emerged as the natural framework for dissipative and nonequilibrium systems~\cite{bravetti2017contact,bravetti2017contactHamiltonian,kholodenko2013applications}, generalizing the symplectic geometry of conservative Hamiltonian mechanics to systems with energy exchange and stochastic fluctuation.

The contact-geometric approach to thermodynamics and statistical physics has deep roots. Arnold~\cite{arnold1990contact} first observed that Gibbsian equilibrium thermodynamics admits a natural formulation as a contact structure on the manifold of thermodynamic variables. Herglotz's generalized variational principle~\cite{herglotz1930beruehrungstransformationen,guenther1996herglotz}, which considers action-dependent Lagrangians, provides the variational foundation for contact Hamiltonian dynamics~\cite{deleon2019contact,vermeeren2019contact}. Bravetti and collaborators~\cite{bravetti2015liouville,bravetti2016thermostat,bravetti2019contactGeometryThermo} developed contact Hamiltonian mechanics systematically, showing that it unifies reversible and irreversible thermodynamics, dissipative mechanics, and statistical ensembles under a single geometric umbrella. The connection between contact geometry and stochastic dynamics has been further explored in recent works on stochastic contact Hamiltonian systems~\cite{zhan2025numerical} and their numerical integration.

Turbulence provides a particularly rich testing ground for the interplay between large deviations and geometric structure. The problem of intermittency, the deviation of higher-order structure functions from the {K}olmogorov 1941 prediction $\zeta_p=p/3$~\cite{kolmogorov1941local,frisch1995turbulence}, has driven decades of research. The refined similarity hypothesis~\cite{kolmogorov1962refinement} introduced the idea of a fluctuating dissipation field, laying the groundwork for multifractal models~\cite{frisch1995turbulence}. The She--L\'ev\^eque model~\cite{she1994universal} provided a particularly successful phenomenological description of anomalous scaling based on a log-Poisson cascade picture~\cite{dubrulle1994intermittency,she1995quantized}, recovering the $4/5$ law exactly and matching experimental data over a wide range of {R}eynolds numbers. Yet the geometric origin of these scaling laws, why the {P}oisson structure and the hierarchical cascade appear as universal features, remains an open question.

This paper develops a contact large deviation theory (CLDT): a geometric formulation of large deviation structure in which the rate function, the SCGF, and fluctuation symmetries all emerge from the contact 1-form of a stochastic vector bundle. The starting point is not a probability measure but the contact manifold $(\mathcal{E}, \Theta)$ with canonical contact 1-form $\Theta = H(t,y,\phi)\,dt - \phi_i\,dy^i$, where the contact potential $H(t,y,\phi)$ encodes both the deterministic drift and the stochastic fluctuation tensor. The constraint function $\varepsilon = \phi_i\dot{y}^i - H$, which measures the deviation of system trajectories from the contact distribution $\ker(\Theta)$, plays the role of a generalized Lagrangian. The least constraint principle $\delta\int\varepsilon\,dt = 0$, which extends the Herglotz variational principle to the stochastic setting, yields the contact dynamical equations as extremality conditions. The contact-geometric foundation is drawn from our earlier construction~\cite{zhong2025least}; the present contribution is the large-deviation structure built on top of it.

From this geometric foundation, we derive the full large deviation structure. The contact path measure, constructed as the projective limit of discrete-time $y$-step kernels, satisfies a large deviation principle with rate function given by the constraint action. The G\"artner--Ellis theorem and a Gallavotti--Cohen-type fluctuation duality between forward and time-reversed ensembles follow from the contact structure without additional probabilistic assumptions. The entropy production rate $e=\tfrac12g^TAg-\sigma$ is the physical observable; the time-reversal involution $J:(t,y,\phi)\mapsto(t,y,-\phi-\nabla\ln\rho)$ with reversed drift $v^{\mathrm{rev}}=-v+Ag$ yields a Gallavotti--Cohen-type duality $\lambda_{\mathrm{fwd}}(q)=\lambda_{\mathrm{rev}}(q+1)+\lambda_{\mathrm{fwd}}(-1)$; the classical one-dimensional GC symmetry is recovered in the reversible case $Ag=0$. In the physically relevant case of a second-order truncated contact potential $H = H^{(0)} + v\cdot\phi + \tfrac{1}{2}\phi^T A\phi$, the theory yields closed-form results: the rate function takes a Freidlin--Wentzell form, the SCGF satisfies an eigenvalue equation, and the tilted Hamiltonian generates a contact Hamilton--Jacobi equation.

The paper is organized as follows. Section~\ref{sec:found} introduces the contact manifold, the contact dynamical equations, and the constraint function, establishing the least constraint principle and the contact Hamilton--Jacobi theorem. Section~\ref{sec:H} constructs the contact potential via the master equation of the constraint function, deriving the order-by-order consistency equations and the projected Lie transport for the fluctuation tensor. Section~\ref{sec:variational} develops the contact path measure and proves the large deviation principle, deriving the rate function, the SCGF, the G\"artner--Ellis theorem, the time-reversal symmetry, and the Gallavotti--Cohen duality. Section~\ref{sec:geometric-essence} makes the correspondence between classical large-deviation objects and contact-geometric structures explicit.

\section{Contact Geometry and Hamilton--Jacobi Theory}\label{sec:found}

This section introduces the contact manifold, the natural geometric setting for stochastic dynamics. The concepts and equations are based on the contact-geometric formulation of stochastic vector bundles~\cite{zhong2025least}; we restate only the essential elements to keep the presentation self-contained. The Einstein summation convention is adopted throughout this paper, where repeated indices imply summation over the full space.

\subsection{Stochastic Vector Bundles and Contact Manifold}

This subsection introduces the stochastic vector bundle and its contact structure, which together provide the geometric foundation for the entire theory.

\subsubsection{Stochastic Vector Bundles}
Let $(\Omega, \mathcal{F}, \mathbb{P})$ denote a complete filtered probability space, where $\Omega$ is the sample space, $\mathcal{F}$ is the $\sigma$-algebra of measurable events, $\{\mathcal{F}_t\}_{t \in T} \subseteq \mathcal{F}$ is the filtration adapted to the stochastic evolution, and $\mathbb{P}$ is the probability measure. $M$ denotes an $n$-dimensional smooth base manifold with local coordinates $\gamma^i \in \mathbb{R}^n$ ($i=1,2,\dots,n$).

A stochastic process is an $\mathcal{F}_t$-measurable map $\gamma: \Omega \hookrightarrow M$ that maps sample outcomes to points on the base manifold. A stochastic vector bundle is a smooth vector bundle $\pi: E \to M$, where the total space $E$ is the collection of all realizations (sample paths) of the stochastic process, equipped with a measurable section (stochastic vector field) $Y: U\subset M \to E$ defined on an open subset $U$ satisfying the bundle projection condition $\pi \circ Y = \mathrm{Id}_U$.

The local section $Y$ is assigned a probability measure $P$ on the total space $E$: for an open set $U_y \subset E$ around $y$, the probability of the stochastic variable taking values in $U_y$ is
\begin{equation}
 P(Y \in U_y) \equiv {P}\left(\gamma^{-1}(Y^{-1}(U_y))\right).
 \label{eq:induced_prob_measure}
\end{equation}
For simplicity, it is denoted as $P(y)$ in the following study.

The probability measure space $\mathbb{P}$ of the stochastic vector bundle admits a natural infinite-order jet bundle structure $J^\infty(E, \mathbb{P})$, of which the local coordinates are given by $(y^i, P, P_{\mu_1}, P_{\mu_1\mu_2}, \dots, P_{\mu_1\mu_2\cdots\mu_\infty})$, where $P_{\mu_1\cdots\mu_k} = \partial_{\mu_1}\cdots\partial_{\mu_k} P(y)$ denotes the $k$-th order partial derivative of the probability measure with respect to the base coordinates.

\subsubsection{Contact Manifold and Coordinate Setting}
Based on the stochastic vector bundle $\pi: E \to M$ and its infinite-order jet bundle $J^\infty(E, \mathbb{P})$, 
denote $\mathcal{E} = T^*E\times \mathbb{R}$ be the extended phase space, we define the $(2n+1)$-dimensional contact manifold $(\mathcal{E}, \Theta)$ reduced from $J^\infty(E, \mathbb{P})$ via contact mapping, with the following coordinate system and core geometric properties: 
\begin{enumerate}[label=(\roman*)]
\item base coordinates $y^i \in \mathbb{R}^n$ ($i=1,2,\dots,n$), which are the state variables of the stochastic system defined on manifold $E$; 
\item fibre coordinates $\phi_i \in \mathbb{R}^n$ ($i=1,2,\dots,n$), which are the conjugate variables of the base coordinates encoding the local connection of the stochastic vector bundle, capturing the stochastic fluctuation, dissipation and path dependence of the system, given by $\phi=\phi_i dy^i= \sum_{1\le k \le \infty}\sum_{\mu_1<\cdots<\mu_k}B_i^{\mu_1\cdots\mu_k}P_{\mu_1\cdots\mu_k}dy^i$; and 
\item the evolution parameter $t \in \mathbb{R}$, which is the time variable governing the stochastic dynamics.
\end{enumerate}

\begin{definition}[Contact 1-form]\label{def:contact-form}
The contact 1-form on the extended phase space $T^*E\times\mathbb{R}$ is
\begin{equation}
 \Theta = H(t,y,\phi)\,dt - \phi_i\,dy^i,
 \label{eq:contact-form}
\end{equation}
where $H(t,y,\phi)$ is the contact probability potential.
\end{definition}

The contact 1-form $\Theta$ satisfies the following conditions to form a well-defined contact structure: 
\begin{enumerate}[label=(\roman*)]
\item Non-degeneracy: $\Theta \neq 0$ everywhere on the manifold; 
\item Volume form condition: $\Theta \wedge (d\Theta)^{\wedge n} \neq 0$, i.e., the top-form is a nowhere-vanishing volume form on the $(2n+1)$-dimensional manifold; 
\item Complete non-integrability: the contact distribution $\ker(\Theta)$ satisfies $[\ker(\Theta), \ker(\Theta)] \not\subset \ker(\Theta)$, which geometrically encodes the path dependence and irreversibility of stochastic dynamics.
\end{enumerate}

The contact 1-form $\Theta$ induces a canonical Whitney sum decomposition of the tangent space of the extended phase space $\mathcal{E}=T^*E\times \mathbb{R}$ at each point of the manifold:
\begin{equation}
 T\mathcal{E} = \ker(\Theta) \oplus \ker(d\Theta),
 \label{eq:tangent_space_decomposition}
\end{equation}
where $\ker(\Theta)$ is the contact distribution capturing the probability evolution of the system, and $\ker(d\Theta)$ is the subspace of vector fields preserving the contact structure.

\subsection{Contact Dynamics and Constraint Function}

The contact dynamics governs the evolution of the stochastic system on the contact manifold, while the constraint function measures deviations from the contact distribution.

\subsubsection{Contact Dynamics}
For the contact manifold $(\mathcal{E}, \Theta)$ with smooth contact probability potential $H(t,y,\phi)$, the stochastic evolution of the system is governed by the contact dynamical equations.

The standard form of the contact dynamical equations is
\begin{equation}
 \begin{cases}
 \displaystyle \dot{y}^i =\;\;\, \frac{\partial H}{\partial \phi_i}, \\[6pt]
 \displaystyle \dot{\phi}_i = -\frac{\partial H}{\partial y^i}, \\[6pt]
 \displaystyle \frac{\dd H}{\dd t} = \frac{\partial H}{\partial t},
 \end{cases}
 \label{eq:contact-odes}
\end{equation}
where the first two equations describe the co-evolution of the system state and the local connection, and the third equation characterises the time dependence of the contact probability potential.

Defining the canonical Poisson bracket for smooth functions $F, G$ on the contact manifold as
\begin{equation}
 \{F, G\} \equiv \frac{\partial F}{\partial y^i}\frac{\partial G}{\partial \phi_i} - \frac{\partial F}{\partial \phi_i}\frac{\partial G}{\partial y^i},
 \label{eq:poisson-bracket}
\end{equation}
the first two equations of~\eqref{eq:contact-odes} can be rewritten in the compact Poisson bracket form:
\begin{equation}
 \begin{cases}
 \displaystyle \dot{y}^i = \{y^i, H\}, \\[6pt]
 \displaystyle \dot{\phi}_i = \{\phi_i, H\}.
 \end{cases}
 \label{eq:contact_dynamical_poisson}
\end{equation}

The flow generated by the contact dynamical equations satisfies the Lie derivative identity
\begin{equation}
\mathcal{L}_{X_H} \Theta = -\dd(\varepsilon(\gamma(t,y^i,\phi_i))) \quad \gamma \in \mathcal{E},
\qquad
(\mathcal{L}_{X_H} \Theta)(X_H) = -\frac{\dd\varepsilon}{\dd t},
\qquad
\mathcal{L}_{X_H} \dd\Theta = 0 ,
\label{eq:Lie-theta}
\end{equation}
where the first identity is Cartan's magic formula applied to $\iota_{X_H}\Theta=-\varepsilon$ and $\iota_{X_H}\dd\Theta=0$ (Eq.~\eqref{eq:contact-vf}). Contracting it onto the flow direction and using $\dd\varepsilon(X_H)=\dd\varepsilon/\dd t$, one finds $(\mathcal{L}_{X_H}\Theta)(X_H)=-\dd\varepsilon/\dd t$; this contraction vanishes precisely when the master equation $\dd\varepsilon/\dd t=0$ holds, so along the constraint-preserving flow the contact form is Lie-transported in the direction of $X_H$. The identity $\mathcal{L}_{X_H}\dd\Theta=0$, equivalently $\iota_{X_H}\dd\Theta=0$, is the geometric counterpart of Liouville's theorem in symplectic mechanics for dissipative stochastic systems~\cite{zhong2025least}.

The contact dynamical equations \eqref{eq:contact-odes} are formulated on the cotangent bundle $T^*\mathcal{E}$: the pair $(y^i, \phi_i)$ constitutes a conjugate coordinate pair in which $\phi_i \in T^*_y \mathcal{E}$ plays the role of the conjugate momentum, and the contact potential $H(t,y,\phi)$ is a function on $T^* \mathcal{E}$. The relation
\begin{equation}
\dot{y}^i = \frac{\partial H}{\partial \phi_i}
\label{eq:legendre_map}
\end{equation}
is the Legendre transform: it maps the cotangent bundle description $T^* \mathcal{E}$ (where $\phi_i$ is the independent dynamical variable) to the tangent bundle description $T\mathcal{E}$ (where $\dot{y}^i$ is the velocity). The contact potential $H$ serves as the generating function of this Legendre map. Equivalently, the constraint function $\varepsilon = \phi_i \dot{y}^i - H = \DD[H]$ is the Legendre transform of $H$; the stationarity condition $\partial \varepsilon / \partial \phi_i = 0$ recovers \eqref{eq:legendre_map}, and the inverse relation $\phi_i = \partial \varepsilon / \partial \dot{y}^i$ maps $T\mathcal{E} \to T^*\mathcal{E}$.

\subsubsection{Constraint Function}

For the contact manifold $(\mathcal{E}, \Theta)$ of the stochastic vector bundle, the constraint function $\varepsilon \in C^\infty(\mathcal{E})$ is a smooth scalar function defined as:
\begin{equation}
 \varepsilon = \phi_i \dot{y}^i - \dd P(\dot{y}) = \phi_i \dot{y}^i - H,
 \label{eq:constraint-op}
\end{equation}
with the following core geometric and physical properties. It quantifies the deviation of the system's evolution trajectory from the contact distribution $\ker(\Theta)$, and geometrically encodes both the dissipation and stochastic noise of the system. Its total derivative along the flow of the contact vector field is
\begin{equation}
 \frac{\dd\varepsilon}{\dd t}
 = \mathcal{L}_{X_H} \varepsilon
 =\frac{\partial \varepsilon}{\partial t} + \{\varepsilon, H\},
 \label{eq:constraint-evolution}
\end{equation}
and the conservation condition $\dd\varepsilon/\dd t=0$ holds if and only if the contact probability potential satisfies the master equation $\DD[\partial_t H]+\{\DD[H],H\}=0$ (Eq.~\eqref{eq:master-equation}). The master equation is thus precisely the structural condition under which $\varepsilon$ becomes a first integral of the open stochastic system. $\varepsilon$ is not an auxiliary constraint but generates intrinsic geometric constraint forces that govern the system's dynamics, arising from the contact structure and the stochastic nature of the system, thus unifying the effects of dissipation, noise, and geometric constraints into a single deterministic force term.

\subsection{Least Constraint Principle and Hamilton--Jacobi Equation}

The least constraint principle yields the contact dynamical equations as extremality conditions, and the Hamilton--Jacobi equation emerges as an algebraic consequence of the contact structure.

\subsubsection{Least Constraint Theorem}

For the contact manifold $(\mathcal{E}, \Theta)$ and the smooth constraint function $\varepsilon$, there exists a unique smooth vector field $X_H \in \ker(d\Theta) \subset T\mathcal{E}$ satisfying:
\begin{equation}
 \iota_{X_H} \Theta = -\varepsilon, \quad \iota_{X_H} d\Theta = 0.
 \label{eq:contact-vf}
\end{equation}
The flow generated by this vector field strictly preserves the contact structure of the manifold and uniquely determines the evolution trajectory of the stochastic system.

\begin{theorem}[Least constraint theorem]\label{thm:least-constraint}
The flow generated by the contact vector field $X_H$ corresponds to the extremal paths of the constraint action functional:
\begin{equation}
 \delta S = \delta \int_{t_1}^{t_2}\! \varepsilon \, dt = 0.
 \label{eq:least-constraint}
\end{equation}
\end{theorem}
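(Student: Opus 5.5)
The plan is to treat $\varepsilon = \phi_i\dot y^i - H(t,y,\phi)$ as a Lagrangian on the doubled configuration space with coordinates $(y,\phi)$ and velocities $(\dot y,\dot\phi)$. This is the phase-space (Hamilton) form of the action, $S=\int_{t_1}^{t_2}(\phi_i\dot y^i - H)\,dt$. We vary $y$ and $\phi$ independently, with the endpoint condition $\delta y(t_1)=\delta y(t_2)=0$; $\phi$ is left free at the endpoints, since no $\dot\phi$ appears in $\varepsilon$. Computing the first variation gives
\begin{equation*}
\delta S=\int_{t_1}^{t_2}\Big[\Big(\dot y^i-\frac{\partial H}{\partial \phi_i}\Big)\delta\phi_i+\phi_i\,\delta\dot y^i-\frac{\partial H}{\partial y^i}\delta y^i\Big]dt .
\end{equation*}
Integrating $\phi_i\delta\dot y^i$ by parts and discarding the boundary term $[\phi_i\delta y^i]_{t_1}^{t_2}=0$ leaves
\begin{equation*}
\delta S=\int_{t_1}^{t_2}\Big[\Big(\dot y^i-\frac{\partial H}{\partial \phi_i}\Big)\delta\phi_i-\Big(\dot\phi_i+\frac{\partial H}{\partial y^i}\Big)\delta y^i\Big]dt .
\end{equation*}
By the fundamental lemma of the calculus of variations, $\delta S=0$ for all admissible variations if and only if the first two equations of \eqref{eq:contact-odes} hold. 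The $\phi$-equation is exactly the Legendre relation \eqref{eq:legendre_map}, i.e.\ the stationarity $\partial\varepsilon/\partial\phi_i=0$ noted earlier.

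Next I would identify these extremals with integral curves of $X_H$ as characterized by \eqref{eq:contact-vf}. Write $X_H=\partial_t+a^i\partial_{y^i}+b_i\partial_{\phi_i}$, normalized so that the $t$-component is one. Compute
\begin{equation*}
d\Theta=dH\wedge dt-d\phi_i\wedge dy^i .
\end{equation*}
The condition $\iota_{X_H}d\Theta=0$ then forces the $dy^i$ and $d\phi_i$ coefficients to vanish, giving $a^i=\partial H/\partial\phi_i$ and $b_i=-\partial H/\partial y^i$. The $dt$ coefficient is $a^i\partial_{y^i}H+b_i\partial_{\phi_i}H$, which vanishes automatically. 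Then $\iota_{X_H}\Theta=H-\phi_i a^i=-(\phi_i\dot y^i-H)=-\varepsilon$, so the first condition in \eqref{eq:contact-vf} is consistent with and reproduces the definition \eqref{eq:constraint-op}. The third equation of \eqref{eq:contact-odes} follows from the chain rule along the flow, because $\{H,H\}=0$. Hence the integral curves of $X_H$ are exactly the solutions of \eqref{eq:contact-odes}, and thus exactly the extremals found above. The converse direction holds as well: every extremal, lifted to $(t,y,\phi)$, has tangent $X_H$. Uniqueness of $X_H$ follows from nondegeneracy of $d\Theta$ restricted to the $(y,\phi)$ directions.

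The main obstacle is the boundary treatment and the status of $\phi$ as an independent variable. Because $\varepsilon$ is affine in $\dot y$ and free of $\dot\phi$, the functional is degenerate as a Lagrangian in $(y,\phi)$. One must therefore justify that varying $\phi$ with free endpoints is legitimate, so that no boundary term in $\delta\phi$ arises. One must also note that, in the Herglotz-type reading, the action-dependence enters only through $\varepsilon$ itself; since $\varepsilon$ here does not depend explicitly on $S$, the Herglotz equations reduce to the ordinary Euler--Lagrange equations above. I would state this reduction explicitly, remarking that a nonzero $\partial\varepsilon/\partial S$ would contribute the usual dissipative correction, which is absent in the present formulation.
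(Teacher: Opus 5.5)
Your proposal is correct and follows the same route as the paper: vary $\int(\phi_i\dot y^i - H)\,dt$ in $y$ and $\phi$ independently, integrate by parts, and apply the fundamental lemma to obtain the first two equations of \eqref{eq:contact-odes}. Your additional check, that $\iota_{X_H}d\Theta=0$ with unit $t$-component forces $X_H=\partial_t+\frac{\partial H}{\partial\phi_i}\partial_{y^i}-\frac{\partial H}{\partial y^i}\partial_{\phi_i}$ and gives $\iota_{X_H}\Theta=-\varepsilon$, makes explicit the link between integral curves of $X_H$ and the extremals, which the paper leaves implicit.
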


This principle is the dissipative and stochastic counterpart of the least action principle in classical mechanics. The contact dynamical equations \eqref{eq:contact-odes} can be directly derived from it, as shown below.

The derivation of the contact dynamical equations from the variational principle \eqref{eq:least-constraint} is given in~\cite{zhong2025least}. Taking the variation of $\mathcal{S}=\int(\phi_i\dot y^i - H)\,\dd s$ with fixed endpoints and integrating by parts yields
\begin{equation*}
\delta\mathcal{S} = \int\!\left[\left(\dot y^i - \frac{\partial H}{\partial\phi_i}\right)\delta\phi_i - \left(\dot\phi_i + \frac{\partial H}{\partial y^i}\right)\delta y^i\right]\dd s,
\end{equation*}
which vanishes if and only if the contact dynamical equations \eqref{eq:contact-odes} hold.

\subsubsection{Contact Structure and the Hamilton--Jacobi Equation}

\begin{theorem}[Hamilton--Jacobi equation]\label{thm:hj}
Let $S(t,y)$ be a function on the base space, with fibre-coordinate functional dependence
\begin{equation}\label{eq:momentum-gradient}
\phi_i = \frac{\partial S}{\partial y^i}.
\end{equation}
Then $S$ satisfies the Hamilton--Jacobi equation
\begin{equation}\label{eq:hj}
\frac{\partial S}{\partial t} + H\!\left(t, y, \frac{\partial S}{\partial y}\right) = 0.
\end{equation}
\end{theorem}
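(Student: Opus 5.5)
The argument will be read off the contact 1-form $\Theta$ of Definition~\ref{def:contact-form}. The intended reading of the statement is that $S$ is the generating function of a Legendrian/integral submanifold of the contact structure. That is the graph
\begin{equation*}
L_S=\{(t,y,\phi):\ \phi_i=\partial_i S(t,y)\},
\end{equation*}
and we require it to lie in the region swept out by the extremals of Theorem~\ref{thm:least-constraint}, with the extremal action equal to $S$. Concretely, along trajectories of $X_H$ lying on $L_S$, the action $S(t,y(t))$ should accumulate the constraint function: $\frac{\dd}{\dd t}S(t,y(t))=\varepsilon$.

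\textbf{Step 1 (action derivative).} Start from $\mathcal S=\int\varepsilon\,\dd t$ with $\varepsilon=\phi_i\dot y^i-H$ as in Eq.~\eqref{eq:constraint-op}. Regard the value of $\mathcal S$ along extremals as a function $S(t,y)$ of the upper endpoint. The boundary term left over in the variation computed after Theorem~\ref{thm:least-constraint} is $\phi_i\,\delta y^i$ at $t_2$, which gives $\partial S/\partial y^i=\phi_i$. This is \eqref{eq:momentum-gradient}. By definition of $S$ as the accumulated action, the total derivative is
\begin{equation*}
\frac{\dd S}{\dd t}=\varepsilon=\phi_i\dot y^i-H .
\end{equation*}

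\textbf{Step 2 (chain rule and comparison).} By the chain rule,
\begin{equation*}
\frac{\dd S}{\dd t}=\partial_t S+\partial_i S\,\dot y^i=\partial_t S+\phi_i\dot y^i .
\end{equation*}
Subtracting this from the expression in Step 1 cancels the $\phi_i\dot y^i$ terms and leaves $\partial_t S+H(t,y,\partial_y S)=0$, which is \eqref{eq:hj}.

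\textbf{Step 2$'$ (geometric restatement).} The same computation can be phrased with the form itself. Pull $\Theta$ back to the $(n+1)$-dimensional submanifold $\{\phi=\partial_y S\}$, parametrised by $(t,y)$:
\begin{equation*}
\Theta|_{L_S}=H\,\dd t-\partial_i S\,\dd y^i=(H+\partial_t S)\,\dd t-\dd S .
\end{equation*}
The submanifold satisfies the contact condition, i.e.\ $\Theta|_{L_S}=-\dd S$ with $S$ playing the role of the contact potential of $\varepsilon$, exactly when the $\dd t$ coefficient $H+\partial_t S$ vanishes. This again gives \eqref{eq:hj}.

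\textbf{Main obstacle.} The statement as written assumes only $\phi=\nabla S$, and that alone does not force \eqref{eq:hj}. An extra condition is needed, and it must be made explicit. The choice I would commit to is that $S$ is the extremal constraint action, or equivalently that the graph is invariant under the flow of $X_H$ from Eq.~\eqref{eq:contact-vf}. Invariance under $X_H$ is the most natural choice here, because $\mathcal L_{X_H}\dd\Theta=0$ in Eq.~\eqref{eq:Lie-theta} then guarantees that $\phi_i=\partial_iS$ is preserved along the flow.

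Flow invariance requires $\dot\phi_i=\frac{\dd}{\dd t}\partial_i S$. Substituting the contact equations \eqref{eq:contact-odes} turns this into
\begin{equation*}
\partial_i\bigl(\partial_t S+H(t,y,\nabla S)\bigr)=0 .
\end{equation*}
This only shows that $\partial_t S+H$ is a function of $t$ alone. The remaining step is to absorb that function into $S$ by the normalisation $S\mapsto S-\int c(t)\,\dd t$, which I would state as part of the proof.
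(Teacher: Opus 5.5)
Your proposal is correct, and it reaches \eqref{eq:hj} partly by a different route from the paper. The paper's proof is essentially your Step~2$'$. It pulls $\Theta$ back to the graph $\phi=\nabla_y S$ and writes the result as $[\partial_t S+H]\,\mathrm{d}t-\mathrm{d}S$. It then imposes $\mathrm{d}\Theta|_{\mathcal{L}}=0$ and concludes that the bracket vanishes ``since $\mathrm{d}S$ is already exact''. Your Steps~1--2 use Hamilton's principal function instead: $S$ is the extremal value of $\int\varepsilon\,\mathrm{d}t$, the endpoint variation $\delta S=\phi_i\,\delta y^i-H\,\delta t$ gives $\nabla_y S=\phi$, and the $\phi_i\dot y^i$ terms cancel against the chain rule.

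Your route pays off on exactly the obstacle you flag, and that obstacle is a defect of the paper's argument itself, not only of the literal statement. On the graph, $\mathrm{d}\Theta|_{L_S}=\partial_i\bigl(\partial_t S+H(t,y,\nabla_y S)\bigr)\,\mathrm{d}y^i\wedge\mathrm{d}t$. So closedness (or exactness) only forces $\partial_t S+H=c(t)$, because $c(t)\,\mathrm{d}t$ is exact as well. This is precisely what your flow-invariance computation gives. Reaching \eqref{eq:hj} then needs one of two additions:
\begin{itemize}
\item your normalization $S\mapsto S-\int c\,\mathrm{d}t$, which leaves $\nabla_y S$ and the graph unchanged; or
\item the stronger condition $\Theta|_{L_S}=-\mathrm{d}S$, i.e.\ $\mathrm{d}S/\mathrm{d}t=\varepsilon$ along characteristics, which the paper tacitly uses right after the proof.
\end{itemize}
The paper's version buys brevity and a reading of HJ as a condition on $\Theta$ restricted to the graph. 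Yours makes the missing hypothesis explicit and ties directly to the dynamic-programming characterization of $\mathcal{S}_t$ used in Theorem~\ref{thm:contact-ldp}.

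Three small corrections to your write-up:
\begin{itemize}
\item The extremal-action hypothesis and flow invariance are not quite equivalent. The principal function satisfies \eqref{eq:hj} with $c\equiv0$ automatically, whereas an invariant gradient graph fixes $S$ only up to a $c(t)$ term.
\item $\mathcal{L}_{X_H}\mathrm{d}\Theta=0$ does not make a given graph invariant. It only ensures that the flowed image of a gradient graph is again a gradient graph, locally and while it stays a graph. Invariance is therefore an assumption, not a consequence.
\item Step~1 tacitly needs a caustic-free field of extremals for $S$ to be $C^1$ with $\nabla_y S=\phi$. Beyond caustics, the viscosity setting of Remark~\ref{rmk:hj-regularity} takes over.
\end{itemize}
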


\begin{remark}[Regularity and viscosity interpretation]\label{rmk:hj-regularity}
The HJ equation \eqref{eq:hj} is understood in the viscosity sense of Crandall--Lions~\cite{crandall1983viscosity}. Specifically, $S$ is required to be locally Lipschitz in $y$ and continuous in $t$; viscosity sub- and supersolutions are defined via test functions touching $S$ from below and above, respectively. The comparison principle (uniqueness of viscosity solutions) holds under standard coercivity and growth conditions on $H$. For smooth $H$ and smooth initial data $S(0,y)$, the method of characteristics yields a classical solution up to the caustic (gradient catastrophe) time; beyond caustics, the viscosity framework selects the unique physically relevant continuation. This regularity assumption is standing throughout the paper.
\end{remark}

\begin{proof}
Substituting $\phi_i=\frac{\partial S}{\partial y^i}$ into the contact 1-form (Definition~\ref{def:contact-form}) gives the pullback
\begin{equation}\label{eq:theta-pullback}
\Theta_{\mathcal{L}} = H\!\left(t, y, \frac{\partial S}{\partial y}\right)\dd t - \frac{\partial S}{\partial y^i}\,\dd y^i.
\end{equation}
Using $\partial S/\partial y^i\,\dd y^i = \dd S - (\partial S/\partial t)\,\dd t$, this becomes
\begin{equation}\label{eq:theta-exact}
\Theta_{\mathcal{L}} = \left[\frac{\partial S}{\partial t} + H\!\left(t, y, \frac{\partial S}{\partial y}\right)\right]\dd t - \dd S.
\end{equation}
The Lagrangian-submanifold condition requires the pullback 1-form to be exact ($\dd\Theta|_{\mathcal{L}}=0$). Since $\dd S$ is already exact, the bracketed term must vanish, yielding the HJ equation \eqref{eq:hj}, an algebraic consequence of the contact structure under the Lagrangian-submanifold constraint. The characteristic equations of this HJ equation coincide with the contact dynamical equations \eqref{eq:contact-odes} on the Lagrangian submanifold $\boldsymbol{\phi}=\nabla_y S$, confirming the self-consistency of the gradient propagation of $S$ with the fibre-coordinate dynamics.
\end{proof}

On the Lagrangian submanifold $\phi_i=\partial S/\partial y^i$, the constraint function $\varepsilon=\DD[H]$ evaluates to $\varepsilon|_{\boldsymbol{\phi}=\nabla_y S}$, and $S$ accumulates $\varepsilon$ along characteristics: $\dd S/\dd s = \varepsilon|_{\boldsymbol{\phi}=\nabla_y S}$.

\section{Construction of Contact Potentials}\label{sec:H}

\subsection{Master Equation and Homogeneous Expansion}

The master equation is the structural condition under which the constraint function becomes a first integral of the open stochastic system. Its homogeneous expansion yields a recursive system of order-based consistency conditions that determine the contact potential order by order.

\subsubsection{Master Equation of Constraint Function}

The Euler operator is defined as $E:= \phi_i \frac{\partial}{\partial \phi_i}$, and the operator $\DD$ is defined as $\DD := E - 1$. By the Euler operator, the constraint function $\varepsilon$ defined in equation~\eqref{eq:constraint-op} can be equivalently expressed as the action of the operator $\DD$ on the contact potential:
\begin{equation}
\varepsilon = \DD[H] = \phi_i \frac{\partial H}{\partial \phi_i} - H.
\label{eq:constraint_function}
\end{equation}

The total time derivative of the constraint function along the contact flow satisfies
\begin{equation}
\frac{\dd\varepsilon}{\dd t} = \frac{\partial \varepsilon}{\partial t} + \{\varepsilon, H\},
\label{eq:epsilon_evolution}
\end{equation}
where the canonical Poisson bracket $\{\cdot,\cdot\}$ is as defined in \eqref{eq:poisson-bracket}.
The conservation condition $\dd\varepsilon/\dd t=0$ is equivalent to the master equation that the contact potential must satisfy:
\begin{equation}
\DD\left[\frac{\partial H}{\partial t}\right] + \{\DD[H], H\} = 0.
\label{eq:master-equation}
\end{equation}
This equivalence follows by substituting $\varepsilon = \DD[H]$ and noting that $\DD[\partial_t H] = \phi_i \partial^2 H/\partial\phi_i\partial t - \partial H/\partial t = \partial\varepsilon/\partial t$.

\subsubsection{Homogeneous Expansion and Order-Based Consistency}\label{subsec:master}

\begin{definition}[$\phi$-homogeneous expansion]\label{def:H-expansion}
The contact potential admits a homogeneous Taylor expansion in $\phi$:
\begin{equation}\label{eq:H-expansion}
H(t,y,\phi) = \sum_{n=0}^{\infty} H^{(n)}(t,y,\phi), \qquad H^{(n)} = \frac{1}{n!}\,H^{(n)\,i_1 i_2 \cdots i_n}(t,y)\,\phi_{i_1}\phi_{i_2}\cdots\phi_{i_n},
\end{equation}
where $H^{(n)}$ is a homogeneous polynomial of degree $n$ in $\phi$, and $H^{(n)\,i_1\cdots i_n}$ is a fully symmetric $n$-th order tensor.
\end{definition}

By Euler's homogeneous function theorem, $E[H^{(n)}] = n\,H^{(n)}$, hence $\DD[H^{(n)}] = (n-1)\,H^{(n)}$. The constraint function therefore has the expansion
\begin{equation}\label{eq:epsilon-expansion}
\varepsilon = \DD[H] = \sum_{n=0}^{\infty}(n-1)\,H^{(n)} = -H^{(0)} + \sum_{n=2}^{\infty}(n-1)\,H^{(n)}.
\end{equation}
Substituting this into the master equation \eqref{eq:master-equation} and collecting terms by the homogeneous order $p$ of $\phi$ (noting that $\{H^{(n)}, H^{(m)}\}$ is homogeneous of degree $n+m-1$) yields the order-by-order consistency equation for each $p=0,1,2,\dots$:
\begin{equation}\label{eq:order-consistency}
(p-1)\,\frac{\partial H^{(p)}}{\partial t} + \sum_{n+m-1=p}(n-1)\,\{H^{(n)}, H^{(m)}\} = 0.
\end{equation}
The explicit forms for the low orders, which form the basis of the construction, are:

\begin{enumerate}[label=(\roman*)]
\item Order $p=0$ (zeroth-order transport):
\begin{equation}\label{eq:transport-zeroth}
\frac{\partial H^{(0)}}{\partial t} + H^{(1)i}(t,y)\,\frac{\partial H^{(0)}}{\partial y^i} = 0.
\end{equation}

\item Order $p=1$ (first-order degeneracy):
\begin{equation}\label{eq:degeneracy-first}
H^{(2)\,ij}(t,y)\,\frac{\partial H^{(0)}}{\partial y^j} = 0, \qquad \forall\, i=1,2,\dots,n.
\end{equation}

\item Order $p=2$: the consistency equation reads
\begin{equation}\label{eq:second-order-consistency}
\frac{\partial H^{(2)}}{\partial t} + \{H^{(2)}, H^{(1)}\} - 3\,\{H^{(0)}, H^{(3)}\} = 0.
\end{equation}

\item Orders $p\geq 3$: the general recurrence is
\begin{equation}\label{eq:general-recurrence}
\begin{split}
(p-1)\,\frac{\partial H^{(p)}}{\partial t} &+ (p-1)\,\{H^{(p)}, H^{(1)}\}\\
&+ \sum_{n=2}^{p-1}(n-1)\,\{H^{(n)}, H^{(p+1-n)}\} \\
&- (p+1)\,\{H^{(0)}, H^{(p+1)}\} = 0.
\end{split}
\end{equation}
\end{enumerate}

\begin{definition}[Material derivative and Jacobian]\label{def:material-deriv}
The material derivative and the Jacobian matrix are
\begin{equation}\label{eq:material-derivative}
\frac{D}{Dt} = \partial_t + H^{(1)}\cdot\nabla, \qquad M^{ij} = \frac{\partial H^{(1)i}}{\partial y^j}.
\end{equation}
\end{definition}

\subsection{Transport, Degeneracy, and Exact Closure}

The order-based consistency conditions decompose into transport equations that determine the time evolution of each order of the contact potential, degeneracy conditions that constrain the fluctuation tensor, and a structural closure condition that ensures exact satisfaction of the master equation at finite truncation order.

\subsubsection{Transport, Degeneracy, and Projected Lie Transport}\label{subsec:transport-deg}

\begin{proposition}[Transport and degeneracy conditions]\label{prop:transport-deg}
For the $N=2$ truncation $H = H^{(0)} + H^{(1)i}\phi_i + \tfrac12\,\phi_i A^{ij}\phi_j$ with $A^{ij}$ symmetric, the order-by-order consistency yields:
\begin{align}
\text{Zeroth-order transport:}\quad & \partial_t H^{(0)} + H^{(1)}\cdot\nabla H^{(0)} = 0, \label{eq:transport-zeroth-restated}\\
\text{First-order degeneracy:}\quad & A\,\nabla H^{(0)} = 0, \label{eq:degeneracy-first-matrix}\\
\text{Second-order transport:}\quad & \frac{D A}{Dt} = \Mf\,A + A\,\Mf^{T} + R_A, \label{eq:transport-second}
\end{align}
where $D/Dt=\partial_t+H^{(1)}\cdot\nabla$ (Definition~\ref{def:material-deriv}), $\Mf=PMP$ is the reduced Jacobian, $P$ is the orthogonal projector onto the level sets of $H^{(0)}$, and $R_A$ is the rotational compensator (both defined in the proof below).
\end{proposition}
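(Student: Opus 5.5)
The plan is to specialise the order-by-order consistency equation \eqref{eq:order-consistency} to the truncation $H=H^{(0)}+H^{(1)i}\phi_i+\tfrac12\phi_iA^{ij}\phi_j$, where $H^{(n)}=0$ for $n\ge 3$, and read off the orders $p=0,1,2$ in turn. For $p=0$ the only pairs with $n+m-1=0$ are $(n,m)=(0,1),(1,0)$. The weight $(n-1)$ kills $(1,0)$, leaving $-\partial_tH^{(0)}-\{H^{(0)},H^{(1)}\}=0$. Since $\{H^{(0)},H^{(1)}\}=\partial_iH^{(0)}H^{(1)i}$, this is \eqref{eq:transport-zeroth-restated}, i.e. \eqref{eq:transport-zeroth} written with $D/Dt$.

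For $p=1$ the time-derivative coefficient $(p-1)$ vanishes. The pairs are $(0,2),(1,1),(2,0)$, giving $-\{H^{(0)},H^{(2)}\}+\{H^{(2)},H^{(0)}\}=-2\{H^{(0)},H^{(2)}\}=-2\,\partial_iH^{(0)}A^{ij}\phi_j$. This must vanish identically in $\phi$, and $A$ is symmetric, so $A\nabla H^{(0)}=0$, which is \eqref{eq:degeneracy-first-matrix}.

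For $p=2$ the relevant pairs are $(0,3)$, which is absent in the truncation, $(1,2)$ with weight zero, $(2,1)$ and $(3,0)$, the last also absent. This leaves $\partial_tH^{(2)}+\{H^{(2)},H^{(1)}\}=0$. Computing the bracket, we have $\{H^{(2)},H^{(1)}\}=\tfrac12\partial_kA^{ij}\phi_i\phi_jH^{(1)k}-A^{ki}\phi_i\,\partial_kH^{(1)j}\phi_j$. Symmetrising the coefficient of $\phi_i\phi_j$ gives the unprojected Lie transport $DA/Dt=MA+AM^T$.

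The main step, and the expected obstacle, is to reconcile this unprojected law with the degeneracy constraint, which must be preserved in time. Write $n=\nabla H^{(0)}/|\nabla H^{(0)}|$ and $P=I-nn^T$. Differentiate $A\nabla H^{(0)}=0$ along the flow, using \eqref{eq:transport-zeroth-restated} differentiated in $y$, namely $D(\nabla H^{(0)})/Dt=-M^T\nabla H^{(0)}$. This shows that $MA+AM^T$ generally produces components along $n$ unless $M$ preserves the level-set foliation. I would therefore split $M=PMP+(M-PMP)$ and note that $A=PAP$ by degeneracy. Then $MA+AM^T=\Mf A+A\Mf^T+R_A$ with the definition
\[
R_A:=(I-P)MA+AM^T(I-P),
\]
which collects the normal (rotational) leakage. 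This turns the second-order consistency into \eqref{eq:transport-second} as an exact rewriting, not an approximation.

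I would then check that \eqref{eq:transport-second} keeps $A\nabla H^{(0)}=0$ invariant. The argument combines $P\nabla H^{(0)}=0$, $\Mf^T n\propto Pn=0$, and the evolution of $n$ under $D/Dt$. This should show that the projected part is tangent to the constraint set and that $R_A$ is precisely the compensator of the rotation of $P$, via $DP/Dt$. This last verification is the delicate point, and I would handle it by computing $D(PAP)/Dt$ explicitly.
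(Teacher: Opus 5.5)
Your proposal is correct and follows the paper's proof: orders $p=0,1,2$ of the consistency equation give the zeroth-order transport, the degeneracy $A\nabla H^{(0)}=0$, and $DA/Dt=MA+AM^{T}$. Using $A=PAP$, you then rewrite the last of these exactly as $\Mf A+A\Mf^{T}+R_A$ with $R_A=(I-P)MA+AM^{T}(I-P)$, which is the paper's compensator reached without its four-block expansion of $M$. For the invariance check the paper differentiates $A\nabla H^{(0)}$ directly using $D\nabla H^{(0)}/Dt=-M^{T}\nabla H^{(0)}$; your $D(PAP)/Dt$ route also closes, since on the constraint set $DP/Dt=PM^{T}(I-P)+(I-P)MP$ and hence $R_A=(DP/Dt)A+A\,(DP/Dt)$.
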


\begin{proof}
We derive \eqref{eq:transport-second} in five steps from the $N=2$ consistency condition.

\textit{Step 1 (Standard Lie transport from $N=2$ consistency).} For $N=2$ truncation ($H^{(n)}=0$ for $n>2$), the order-$p=2$ consistency equation \eqref{eq:second-order-consistency} reduces to
\begin{equation}\label{eq:standard-lie-derivation}
\frac{\partial H^{(2)}}{\partial t} + \{H^{(2)}, H^{(1)}\} = 0.
\end{equation}
We expand the Poisson bracket explicitly. Substituting $H^{(1)} = H^{(1)k}\phi_k$ into the definition $\{H^{(2)}, H^{(1)}\} = \frac{\partial H^{(2)}}{\partial y^k}\frac{\partial H^{(1)}}{\partial \phi_k} - \frac{\partial H^{(2)}}{\partial \phi_k}\frac{\partial H^{(1)}}{\partial y^k}$ gives
\begin{equation}\label{eq:poisson-expand}
\{H^{(2)}, H^{(1)}\} = H^{(1)k}\frac{\partial H^{(2)}}{\partial y^k} - H^{(2)kl}\phi_l\, M^{mk}\,\phi_m,
\end{equation}
where $M^{mk} = \partial H^{(1)m}/\partial y^{k}$ is the Jacobian matrix (Definition~\ref{def:material-deriv}). Translating the second term into matrix notation, $H^{(2)kl}\phi_l\, M^{mk}\,\phi_m$ represents the quadratic form $\phi^{T} H^{(2)} M^{T}\phi$. Symmetrising this quadratic form yields
\begin{equation}\label{eq:symmetrize}
\phi^{T} H^{(2)} M^{T}\phi = \tfrac{1}{2}\,\phi^{T}\!\left(M H^{(2)} + H^{(2)} M^{T}\right)\phi,
\end{equation}
since for any matrix $B$, $\phi^{T}B\phi = \phi^{T}\tfrac{1}{2}(B+B^{T})\phi$. Thus the Poisson bracket becomes
\begin{equation}\label{eq:poisson-simplified}
\{H^{(2)}, H^{(1)}\} = H^{(1)k}\frac{\partial H^{(2)}}{\partial y^k} - \tfrac{1}{2}\,\phi^{T}\!\left(M H^{(2)} + H^{(2)} M^{T}\right)\phi.
\end{equation}
Defining the material derivative $DH^{(2)}/Dt := \partial H^{(2)}/\partial t + H^{(1)k}\partial H^{(2)}/\partial y^k$ and matching the symmetric matrix coefficients on both sides of \eqref{eq:standard-lie-derivation} (using $H^{(2)} = \tfrac12\phi^T A\phi$ so that $DH^{(2)}/Dt = \tfrac12\phi^T(DA/Dt)\phi$) yields the standard Lie transport equation
\begin{equation}\label{eq:standard-lie}
\frac{D A}{Dt} = M A + A M^{T}.
\end{equation}
This equation is valid regardless of whether $H^{(0)}$ is constant or non-constant. However, when $H^{(0)}$ is non-constant, it does not automatically preserve the degeneracy constraint $A\,\nabla H^{(0)}=0$.

\textit{Step 2 (Jacobian decomposition into tangential and normal blocks).} To embed the degeneracy constraint into the transport equation, we decompose the dynamics on the right-hand side of \eqref{eq:standard-lie} into tangential and normal components relative to the level sets of $H^{(0)}$, using the orthogonal decomposition $I = P + (I - P)$ with
\begin{equation}\label{eq:projection}
P^{ij} = \delta^{ij} - \frac{\partial^i H^{(0)}\,\partial^j H^{(0)}}{|\nabla H^{(0)}|^2} \qquad (\nabla H^{(0)}\neq 0)
\end{equation}
(with $P=I$ when $\nabla H^{(0)}=0$). This yields the four-block decomposition of the Jacobian:
\begin{equation}\label{eq:jacobian-decomp}
\begin{split}
M &= \bigl(P + (I - P)\bigr)\, M\, \bigl(P + (I - P)\bigr) \\
&= \Mf + P\,M\,(I - P) + (I - P)\,M\,P + (I - P)\,M\,(I - P),
\end{split}
\end{equation}
where $\Mf = PMP$ is the reduced Jacobian. Substituting this decomposition into the right-hand side of \eqref{eq:standard-lie} and expanding:
\begin{equation}\label{eq:lie-expanded}
M A + A M^{T} = \Mf\,A + A\,\Mf^{T} + R_A^{\mathrm{full}},
\end{equation}
where $R_A^{\mathrm{full}}$ collects all terms involving at least one factor of $(I - P)$:
\begin{equation}\label{eq:R-full}
\begin{split}
R_A^{\mathrm{full}} ={}& \bigl[P\,M\,(I - P) + (I - P)\,M\,P + (I - P)\,M\,(I - P)\bigr]\,A \\
&+ A\,\bigl[(I - P)\,M^{T}\,P + P\,M^{T}\,(I - P) + (I - P)\,M^{T}\,(I - P)\bigr].
\end{split}
\end{equation}

\textit{Step 3 (Degeneracy eliminates normal terms).} The degeneracy condition $A\,\nabla H^{(0)} = 0$ \eqref{eq:degeneracy-first-matrix} implies that $A$ maps the normal direction $\nabla H^{(0)}$ to zero. Since $(I - P)$ is the orthogonal projector onto the span of $\nabla H^{(0)}$, we have
\begin{equation}\label{eq:A-projector-kill}
A\,(I - P) = 0, \qquad (I - P)\,A = 0,
\end{equation}
where the second identity follows from the symmetry of $A$. As immediate consequences, the $P$-adjacent terms yield
\begin{equation}\label{eq:A-projector-consequence}
P\,A = A, \qquad A\,P = A.
\end{equation}
Evaluating the six terms in $R_A^{\mathrm{full}}$ \eqref{eq:R-full} using \eqref{eq:A-projector-kill}--\eqref{eq:A-projector-consequence}, all terms vanish except two, leaving
\begin{equation}\label{eq:R-reduced}
R_A^{\mathrm{full}} = (I - P)\,M\,A + A\,M^{T}\,(I - P) = R_A,
\end{equation}
which is the rotational compensator:
\begin{equation}\label{eq:rotation-term}
R_A^{ij} = \frac{(A\,M^T\,\nabla H^{(0)})^i\,\partial^j H^{(0)} + \partial^i H^{(0)}\,(A\,M^T\,\nabla H^{(0)})^j}{|\nabla H^{(0)}|^2},
\end{equation}
a symmetric rank-at-most-$2$ correction tracking the rotation of the constraint direction. When $H^{(0)}=c$ is constant, $P=I$ and $R_A$ vanishes identically. When $\nabla H^{(0)} \neq 0$, writing $I - P = \mathbf{n}\otimes\mathbf{n}$ with $\mathbf{n} = \nabla H^{(0)}/|\nabla H^{(0)}|$, the compensator becomes
\begin{equation}\label{eq:RA-dyadic}
R_A = A\,M^{T}\,\mathbf{n}\otimes\mathbf{n} + \mathbf{n}\otimes\mathbf{n}\,M\,A.
\end{equation}

\textit{Step 4 (Projected equation).} Substituting \eqref{eq:R-reduced} back into \eqref{eq:lie-expanded} yields the projected Lie transport equation \eqref{eq:transport-second}.

\textit{Step 5 (Constraint preservation).} To verify that \eqref{eq:transport-second} strictly preserves the degeneracy constraint $A\,\nabla H^{(0)}=0$ for all $t>0$, we act on $\nabla H^{(0)}$ from the right. Since $H^{(0)}$ satisfies the zeroth-order transport $DH^{(0)}/Dt = 0$ \eqref{eq:transport-zeroth-restated}, its gradient evolves as $D\nabla H^{(0)}/Dt = -M^{T}\nabla H^{(0)}$. Using $P\,\nabla H^{(0)} = 0$ and $(I - P)\,\nabla H^{(0)} = \nabla H^{(0)}$, direct substitution yields
\begin{equation}\label{eq:constraint-preserved}
\frac{D}{Dt}\bigl(A\,\nabla H^{(0)}\bigr) = \bigl[\Mf\,A + A\,\Mf^{T} + R_A\bigr]\nabla H^{(0)} + A\bigl(-M^{T}\nabla H^{(0)}\bigr).
\end{equation}
The first two terms vanish: $\Mf\,\nabla H^{(0)} = PMP\,\nabla H^{(0)} = 0$ (since $P\nabla H^{(0)}=0$), and $A\,\Mf^{T}\nabla H^{(0)} = A\,P\,M^{T}\,P\,\nabla H^{(0)} = 0$. The rotational compensator gives $R_A\,\nabla H^{(0)} = A\,M^{T}(I-P)\nabla H^{(0)} = A\,M^{T}\nabla H^{(0)}$ (since $(I-P)\nabla H^{(0)}=\nabla H^{(0)}$) and $(I-P)M\,A\,\nabla H^{(0)} = 0$ (since $A\,\nabla H^{(0)}=0$). Hence
\begin{equation}\label{eq:constraint-preserved-final}
\frac{D}{Dt}\bigl(A\,\nabla H^{(0)}\bigr) = A\,M^{T}\nabla H^{(0)} - A\,M^{T}\nabla H^{(0)} = 0.
\end{equation}
The degeneracy constraint is therefore strictly preserved for all $t>0$, provided the initial data satisfy it.
\end{proof}

When $H^{(0)}=c$ is constant, $P=I$, $\Mf=M$, and $R_A=0$, so \eqref{eq:transport-second} reduces identically to the standard Lie transport equation $DA/Dt = MA + AM^{T}$. The projected framework is thus a genuine extension: it coincides with the standard equation in the constant case and enforces the degeneracy constraint in the non-constant case, where $A$ evolves freely within an $(n-1)$-dimensional tangent subspace with rank at most $n-1$.

\subsubsection{The Exact Closure of Contact Potential}\label{subsec:truncation}

The master equation~\eqref{eq:master-equation} governs the contact potential $H(t,y,\phi)$, and its exact satisfaction is equivalent to the strict conservation of the constraint function $\varepsilon$ along the contact flow. A truncated potential $H = \sum_{n=0}^N H^{(n)}$ cannot in general satisfy this equation exactly, because the truncation discards the higher-order terms $H^{(N+1)}, H^{(N+2)}, \dots$ that couple back into the lower orders through the Poisson brackets. The question we address here is: under what conditions on the retained coefficients does the truncated potential nevertheless satisfy the master equation exactly, so that no information leaks through the truncation boundary?

We expand the master equation order by order in the fibre coordinate $\phi$. Collecting terms of homogeneous degree $p$, the master equation is equivalent to the infinite sequence of consistency equations $\widetilde{C}_p = 0$, where
\begin{equation}\label{eq:Cp_0}
\widetilde{C}_p := (p-1) \frac{\partial H^{(p)}}{\partial t} + \sum_{n=0}^{p+1} (n-1) \{H^{(n)}, H^{(p+1-n)}\} = 0.
\end{equation}
We denote the bracket-only part by $C_p := \sum_{n=0}^{p+1} (n-1) \{H^{(n)}, H^{(p+1-n)}\}$, so that $\widetilde{C}_p = (p-1)\,\partial_t H^{(p)} + C_p$. This distinction is important because for orders $p=0$ and $p \geq 2$ the time-derivative term is generally nonzero (yielding evolution equations), whereas for $p=1$ (where the prefactor $p-1=0$) and for $p \geq N+1$ (where $H^{(p)}=0$ by truncation) the time derivative vanishes identically and $\widetilde{C}_p = C_p$ (yielding constraint conditions).
With the truncation convention $H^{(N+1)} \equiv 0$ and $H^{(n)} \equiv 0$ for all $n>N$, the time-derivative term vanishes identically for $p \geq N+1$, and the problem reduces to identifying the conditions under which every $\widetilde{C}_p$ vanishes.

For orders $p=0$ and $p=1$, we have $\widetilde{C}_0 = -\partial_t H^{(0)} - \{H^{(0)}, H^{(1)}\}$. Since $H^{(1)} = H^{(1)i} \phi_i$, the Poisson bracket evaluates to $\{H^{(0)}, H^{(1)}\} = H^{(1)i} \partial_{y^i} H^{(0)}$. Thus $\widetilde{C}_0 = 0$ reduces to the zero-order transport condition
\begin{equation}\label{eq:Li_transport_eq}
\frac{\partial H^{(0)}}{\partial t} + H^{(1)i}(t,y) \frac{\partial H^{(0)}}{\partial y^i} = 0,
\end{equation}
which states that $H^{(0)}$ is Lie-transported along $H^{(1)}$. 

For $p=1$, the prefactor $p-1=0$ kills the time-derivative term, so 
\begin{equation}\label{eq:C_1}
\widetilde{C}_1 = C_1 = -2\{H^{(0)}, H^{(2)}\} = -2 H^{(2)ij} \phi_j \partial_{y^i} H^{(0)}.
\end{equation}
For this to vanish identically for all $\phi$, the coefficient of each $\phi_j$ must vanish, yielding
\begin{equation}\label{eq:Degenercy_Eq}
H^{(2) ij}(t,y) \frac{\partial H^{(0)}}{\partial y^j} = 0, \quad \forall i=1,2,\dots,n.
\end{equation}
This is the first-order degeneracy condition, the geometric content of which is that $H^{(2)}$ is confined to the tangent space of the level sets of $H^{(0)}$.

For orders $2 \leq p \leq N$, expanding the summation in $\widetilde{C}_p$ for $n=0$ and $n=p+1$, and using the antisymmetry of the Poisson bracket $\{H^{(p+1)}, H^{(0)}\} = -\{H^{(0)}, H^{(p+1)}\}$, yields the boundary term $-(p+1)\{H^{(0)}, H^{(p+1)}\}$. The $n=1$ term vanishes because the prefactor is zero. The $n=p$ term gives $(p-1)\{H^{(p)}, H^{(1)}\}$. Thus, $\widetilde{C}_p$ simplifies to
\begin{equation}
\begin{split}
\widetilde{C}_p 
&= (p-1) \frac{\partial H^{(p)}}{\partial t} + (p-1) \{H^{(p)}, H^{(1)}\} \\
&+ \sum_{n=2}^{p-1} (n-1) \{H^{(n)}, H^{(p+1-n)}\} - (p+1) \{H^{(0)}, H^{(p+1)}\}.
\end{split}
\end{equation}
For this to vanish for every $p=2,3,\dots,N$, the coefficients must satisfy the high-order recurrence condition
\begin{equation}\label{eq_recurrence_Eq}
\begin{split}
(p-1) \frac{\partial H^{(p)}}{\partial t} &+ (p-1) \{H^{(p)}, H^{(1)}\}\\
&+ \sum_{n=2}^{p-1} (n-1) \{H^{(n)}, H^{(p+1-n)}\} \\
&- (p+1) \{H^{(0)}, H^{(p+1)}\} = 0.
\end{split}
\end{equation}
This is a recurrence that each $H^{(p)}$ is determined by the lower-order coefficients $H^{(0)}$, $\dots$, $H^{(p-1)}$ together with the next-order coefficient $H^{(p+1)}$, the latter acting as a source term through the bracket $\{H^{(0)}, H^{(p+1)}\}$.

For orders $p \geq N+1$, the time derivative vanishes because $H^{(p)} = 0$, so $\widetilde{C}_p = C_p$. By the truncation convention $H^{(k)} = 0$ for $k > N$, the summation in $C_p$ reduces to indices where both $n \leq N$ and $p+1-n \leq N$. This requires $n \geq p+1-N$. Thus,
\begin{equation}
\label{eq:C_p}
C_p = \sum_{n=\max(0, p+1-N)}^{N} (n-1) \{H^{(n)}, H^{(p+1-n)}\}.
\end{equation}
These are the residual Poisson brackets that survive the truncation: they couple the retained coefficients $H^{(n)}$ with $n \leq N$ to the discarded coefficients $H^{(p+1-n)}$ with $p+1-n \geq p+1-N \geq 1$. For $C_p$ to vanish for all $p \geq N+1$, these residual brackets must vanish identically, yielding the structural closure condition
\begin{equation}\label{eq:structural_closure_condition}
\sum_{n=\max(0, p+1-N)}^{N} (n-1) \{H^{(n)}, H^{(p+1-n)}\} = 0.
\end{equation}

Collecting the four requirements that emerge from the order-by-order analysis, the truncated contact potential $H = \sum_{n=0}^N H^{(n)}$ exactly satisfies the master equation \eqref{eq:master-equation} if and only if $H^{(0)}, H^{(2)}, \dots, H^{(N)}$ satisfy:
\begin{enumerate}[label=(\roman*)]
\item the zero-order transport condition ($\widetilde{C}_0 = 0$, an evolution equation determining $H^{(0)}$),
\item the first-order degeneracy condition ($C_1 = 0$, a constraint on $H^{(2)}$ relative to $H^{(0)}$),
\item the high-order recurrence condition for all $p=2,3,\dots,N$ ($\widetilde{C}_p = 0$, an evolution equation determining $H^{(p)}$),
\item the structural closure condition for all $p \geq N+1$ ($C_p = 0$, a constraint ensuring no information leaks through the truncation boundary), with the truncation convention $H^{(N+1)} \equiv 0$ and $H^{(n)} \equiv 0$ for all $n>N$.
\end{enumerate}

When these conditions hold, the constraint function $\varepsilon$ is strictly conserved along the contact flow.

\begin{remark}[Boundary and initial conditions]\label{rmk:boundary-conditions}
The four conditions above are a mixed system of evolution equations and constraints. The evolution equations (conditions (i) and (iii)) are first-order PDEs in $t$ that require initial data at $t=0$: $H^{(0)}(0,y)$ is specified by the stationary potential (Proposition~\ref{prop:H0-construction}), and $H^{(p)}(0,y)$ for $p\ge 2$ are determined by the physical problem (e.g., the initial fluctuation tensor $A(0,y)$ for $H^{(2)}$). The constraints (conditions (ii) and (iv)) are algebraic conditions that must hold for all $t\ge 0$; they are preserved by the evolution equations, as verified in Section~\ref{subsec:transport-deg}. The boundary conditions in $y$ are determined by the geometry of the base space $Y$ (typically compact or with decay conditions ensuring the invariance of $\rho$). The existence and uniqueness of solutions to this system follow from standard hyperbolic PDE theory under the smoothness and non-degeneracy assumptions on the contact structure.
\end{remark}

For the physically fundamental case of $N=2$ truncation, which is the focus of this paper, the structural closure condition (iv) is automatically satisfied. For $N=2$ and $p \geq 3$, the lower bound of the sum is $p+1-2 = p-1 \geq 2$. Since the upper bound is $N=2$, the only possible term occurs when $p=3, n=2$, giving the single bracket $\{H^{(2)}, H^{(2)}\}$, which vanishes identically by the antisymmetry of the Poisson bracket. For $p > 3$, the summation interval is empty. Thus for $N=2$, conditions (i)--(iii) are sufficient to guarantee the exact satisfaction of the master equation, and no additional closure assumption is needed. Moreover, when $H^{(0)}=c$ is constant, condition (ii) is automatically satisfied and the high-order recurrence condition simplifies; in particular, for $N=2$ truncation, it reduces to the Lie transport equation \eqref{eq:second-order-consistency} with $H^{(3)}=0$.

\subsubsection{First-Order Degeneracy: Geometric Interpretation}\label{subsec:degeneracy-geom}

The first-order degeneracy condition $A\,\nabla H^{(0)} = 0$ carries significant geometric and physical implications~\cite{zhong2025least}. From a linear algebraic perspective, $\nabla H^{(0)}$ lies strictly within $\ker(A)$. When $H^{(0)}$ is non-constant ($\nabla H^{(0)}\neq 0$), $A$ is forced to be singular with rank at most $n-1$: the second-order contact stiffness vanishes in the normal direction to the level sets of $H^{(0)}$, confining the fluctuations to the tangent space of the iso-$H^{(0)}$ surfaces. The normal direction is unconstrained, representing a loss of a controllable degree of freedom. When $H^{(0)}=c$ is constant ($\nabla H^{(0)}=0$), the degeneracy condition reduces to the trivial identity $A\cdot 0 = 0$, $A$ is generically non-singular (full rank), and all directions become controllable.

\subsection{Explicit Construction of Contact Potentials}

With the consistency framework established, the contact potential is constructed explicitly order by order for the $N=2$ truncation: the zeroth-order potential from the invariant measure, the first-order as the drift, and the second-order as the fluctuation stiffness tensor.

\subsubsection{Zero-Order Contact Potential \texorpdfstring{$H^{(0)}$}{H0}}\label{subsec:H0}

The zeroth-order contact potential is determined by solving the zeroth-order transport equation \eqref{eq:transport-zeroth}, a first-order linear PDE. We derive its general solution by a separated ansatz that reduces the transport equation to a defining relation between a geometric potential $S(y)$ and a source $\alpha(y)$. The pair $(S,\alpha)$ is then fixed by the invariant-measure construction.

\begin{proposition}[Contact-potential construction]\label{prop:H0-construction}
Given $\dot y=v(t,y)$ and the invariant measure density $\rho$ satisfying $\nabla\!\cdot(\rho v)=0$, define
\begin{equation}\label{eq:sigma-def}
\sigma(t,y) := -\nabla\!\cdot v(t,y),
\end{equation}
and
\begin{equation}\label{eq:gamma0-def}
\gamma_0 > 0,\qquad [\gamma_0]=T^{-1}=[H^{(0)}],\qquad \text{(free scale parameter, not }\langle\sigma\rangle_\rho\text{)}.
\end{equation}
The zeroth-order contact potential is
\begin{equation}\label{eq:H0-construction}
H^{(0)}(t,y) = \gamma_0\ln\!\frac{\rho\bigl(\Phi_{-t}(y)\bigr)}{\rho_0} = \gamma_0\!\left[\ln\!\frac{\rho(y)}{\rho_0(y)} - \int_0^t \sigma(t-\tau, \Phi_{-\tau}(y))\,\dd\tau\right],
\end{equation}
where $\Phi_{-\tau}$ is the flow of $v$ and $\rho_0$ is a spatially uniform reference measure.
\end{proposition}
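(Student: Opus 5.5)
The plan is to read the statement as an assertion that the displayed $H^{(0)}$ solves the zeroth-order transport equation \eqref{eq:transport-zeroth-restated} with $H^{(1)}=v$. The method of characteristics should do this. A second, separate task is to check that the two displayed expressions for $H^{(0)}$ agree.

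\emph{Step 1 (general solution).} Identify the first-order potential with the drift, $H^{(1)i}=v^i$, so that $D/Dt=\partial_t+v\cdot\nabla$ as in Definition~\ref{def:material-deriv}. Let $\Phi$ denote the flow of $v$, with $\Phi_{-t}(y)$ the foot at time $0$ of the characteristic reaching $y$ at time $t$. Then every $C^1$ solution of $DH^{(0)}/Dt=0$ has the form $H^{(0)}(t,y)=F(\Phi_{-t}(y))$, where $F=H^{(0)}(0,\cdot)$ is the initial datum. To see this, differentiate $t\mapsto H^{(0)}(t,\Phi_t(x))$ and observe that the transport equation says exactly that this derivative vanishes. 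In the spirit of Remark~\ref{rmk:boundary-conditions}, I then choose the initial datum from the invariant measure, $F=\gamma_0\ln(\rho/\rho_0)$. Here $\gamma_0$ is introduced only to give $H^{(0)}$ the dimension $T^{-1}$, and it passes through the linear PDE untouched. This gives the first expression in \eqref{eq:H0-construction} and proves that it solves \eqref{eq:transport-zeroth-restated}.

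\emph{Step 2 (equality of the two forms).} Expand the stationarity condition $\nabla\cdot(\rho v)=0$ as $\rho\nabla\cdot v+v\cdot\nabla\rho=0$. Wherever $\rho>0$, this gives $v\cdot\nabla\ln\rho=-\nabla\cdot v=\sigma$. Set $g(\tau)=\ln\rho(\Phi_{-\tau}(y))$. Since $\frac{\dd}{\dd\tau}\Phi_{-\tau}(y)=-v(t-\tau,\Phi_{-\tau}(y))$ along the backward characteristic issued from $(t,y)$, we get
\[
g'(\tau)=-\bigl(v\cdot\nabla\ln\rho\bigr)(t-\tau,\Phi_{-\tau}(y))=-\sigma(t-\tau,\Phi_{-\tau}(y)).
\]
Integrating from $0$ to $t$ yields $\ln\rho(\Phi_{-t}(y))=\ln\rho(y)-\int_0^t\sigma(t-\tau,\Phi_{-\tau}(y))\,\dd\tau$. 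Because $\rho_0$ is spatially uniform, $\ln\rho_0$ is a constant and can be carried along unchanged, which gives the second expression after multiplying by $\gamma_0$.

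\emph{Main obstacle.} The delicate point is the time-dependent case. Both the invariance relation $\nabla\cdot(\rho v)=0$ and the flow $\Phi_{-\tau}$ then have to be read consistently: the identity must hold at each frozen time, and $\Phi_{-\tau}$ must be the two-time backward flow from $(t,y)$. Only then does the argument $t-\tau$ of $\sigma$ appear correctly. I would first carry out Steps 1--2 for autonomous $v$, where everything is transparent. I would then redo the chain-rule computation for the two-parameter flow, making the standing assumption that $\rho$ is stationary and that $v(s,\cdot)$ preserves $\rho$ for every $s$. A minor additional point is positivity of $\rho$, needed for $\ln\rho$ to be defined. I would assume $\rho>0$ on the region swept by the characteristics, consistently with the regularity standing in Remark~\ref{rmk:hj-regularity}.
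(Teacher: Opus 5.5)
Your proof is correct, but it follows a different route from the paper's. The paper does not start from the general solution $F(\Phi_{-t}(y))$. Instead it makes a separated ansatz $H^{(0)}=S(y)+F(t,y)$ with source $\alpha:=\mathcal{L}S$, and builds $F=-\int_0^t\alpha(\Phi_{-\tau}(y))\,\dd\tau$ along backward characteristics. It then checks $\partial_tF+\mathcal{L}F=-\alpha$ by direct differentiation. Only after that does it invoke the logarithmic continuity equation $v\cdot\nabla\ln\rho=\sigma$, to identify $S=\gamma_0\ln(\rho/\rho_0)$ and $\alpha=\gamma_0\sigma$. This yields the $\sigma$-integral form in \eqref{eq:H0-construction} directly. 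The equality with the closed form $\gamma_0\ln(\rho(\Phi_{-t}(y))/\rho_0)$ is essentially your Step~2, which the paper records separately as the coboundary identity of Remark~\ref{rmk:lnrho-coboundary}. The paper's route gives a master formula valid for any pair with $\mathcal{L}S=\alpha$, and it makes the reading of $\sigma$ as a source explicit. However, its verification step needs $\mathcal{L}$ to commute with $\Phi_{-\tau}$, i.e.\ an autonomous one-parameter group, and it writes $\alpha(y)=\gamma_0\sigma(t,y)$ somewhat loosely. Your route is shorter and makes visible something the paper's hides: the closed form solves the transport equation for any positive initial datum, and the invariance $\nabla\cdot(\rho v)=0$ is needed only to convert it into the $\sigma$-integral. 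Your two-time backward flow, together with the assumption that $\rho$ is preserved by each $v(s,\cdot)$, is also exactly what makes the argument $t-\tau$ of $\sigma$ in the statement come out correctly in the non-autonomous case.
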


\begin{proof}
Let $\mathcal{L} := v^i\partial/\partial y^i$ denote the Lie derivative along the vector field $v(t,y)$, and let $\Phi_t$ denote the flow generated by $v$. The flow $\{\Phi_t\}_{t \in \mathbb{R}}$ constitutes a one-parameter group of diffeomorphisms acting on the base manifold $M$. The zeroth-order transport equation \eqref{eq:transport-zeroth} demands that $H^{(0)}$ is a first integral of the extended flow $\partial_t + \mathcal{L}$:
\begin{equation}\label{eq:H0-transport-recast}
\frac{\partial H^{(0)}}{\partial t} + \mathcal{L}\, H^{(0)} = 0.
\end{equation}

\textit{Step 1 (Separated ansatz).} We introduce a smooth geometric potential $S(y)$ and define its associated source $\alpha(y) := \mathcal{L}\, S(y)$. The defining relation
\begin{equation}\label{eq:source-relation}
\mathcal{L}\, S(y) = \alpha(y)
\end{equation}
reduces the problem of determining $H^{(0)}$ to the identification of $S$ and $\alpha$. We seek a separated solution of the transport equation \eqref{eq:H0-transport-recast} of the form $H^{(0)}(t, y) = S(y) + F(t, y)$. Substituting this ansatz and using \eqref{eq:source-relation} yields
\begin{equation}\label{eq:F-equation}
\frac{\partial F}{\partial t} + \mathcal{L}\, F = -\alpha(y).
\end{equation}

\textit{Step 2 (Construction of $F$ by backward characteristic).} To satisfy \eqref{eq:F-equation}, we construct $F$ by accumulating the source $\alpha$ along the backward characteristic $\Phi_{-\tau}(y)$. Setting the initial condition $F(0, y) = 0$, the explicit time-dependence is
\begin{equation}\label{eq:F-integral}
F(t, y) = -\int_0^t \alpha\bigl(\Phi_{-\tau}(y)\bigr)\,\dd\tau.
\end{equation}
To verify this construction, we compute the material derivative of $F$. Using the chain rule and the identity $\frac{\dd}{\dd\tau}\alpha(\Phi_{-\tau}(y)) = -(\mathcal{L}\alpha)(\Phi_{-\tau}(y))$, we find
\begin{equation}\label{eq:LF-verify}
\mathcal{L}\, F = \alpha\bigl(\Phi_{-t}(y)\bigr) - \alpha(y).
\end{equation}
The partial time derivative gives $\partial_t F = -\alpha(\Phi_{-t}(y))$. Summing these yields $\partial_t F + \mathcal{L}\, F = -\alpha(y)$, which exactly cancels the $\alpha(y)$ from $\mathcal{L}\, S = \alpha$. Thus,
\begin{equation}\label{eq:master-formula}
H^{(0)}(t, y) = S(y) - \int_0^t \alpha\bigl(\Phi_{-\tau}(y)\bigr)\,\dd\tau
\end{equation}
strictly satisfies the zeroth-order transport equation. The verification never uses the assumption that $\alpha$ is constant; it is valid for arbitrary source functions $\alpha(y)$.

\textit{Step 3 (Invariant-measure construction of $S$ and $\alpha$).} The master formula \eqref{eq:master-formula} expresses $H^{(0)}$ through a geometric potential $S$ and a source $\alpha$ linked by the defining relation $\mathcal{L}\, S = \alpha$. The physical meaning of $H^{(0)}$ motivates the construction: it represents the background potential of the system on the zero-fibre (the deterministic skeleton). In dissipative systems, deterministic trajectories collapse onto attractors; the natural mathematical tool to describe the long-term statistical behaviour is the ergodic invariant measure $\mu$.

Let $\Omega$ denote the canonical volume form on the base manifold, and let $\mu$ denote the ergodic invariant measure of the flow $\Phi_t$ generated by $v$, supported on the $\omega$-limit set of a typical trajectory. When $\mu$ is absolutely continuous with respect to $\Omega$, denote its Radon--Nikodym derivative by
\begin{equation}\label{eq:rho-def}
\rho(y) := \frac{\dd\mu}{\dd\Omega}(y),
\end{equation}
which carries the dimension $[\rho]=L^{-n}$ of a probability density. Since the logarithm requires dimensionless arguments, we introduce a reference probability density $\rho_0$ (with $[\rho_0]=[\rho]$, spatially uniform) and the dimensional constant $\gamma_0$ with $[\gamma_0]=T^{-1}=[H^{(0)}]$, so that $\gamma_0\ln(\rho/\rho_0)$ carries the same dimension as $H^{(0)}$.

\textit{Step 4 (Logarithmic continuity equation).} To determine $S$ and $\alpha$ physically, we utilize the probability conservation of the invariant measure. The invariance $\Phi_t^*\mu = \mu$ reads, in density form, $\mathcal{L}(\rho\,\Omega)=0$. Expanding by the Leibniz rule and dividing by $\rho\,\Omega$, we obtain the logarithmic continuity equation:
\begin{equation}\label{eq:continuity-log}
\mathcal{L}(\rho\,\Omega) = (\mathcal{L}\rho)\,\Omega + \rho\,(\mathcal{L}\Omega) = 0 \;\Longrightarrow\; \frac{\mathcal{L}\rho}{\rho} = -\,\frac{\mathcal{L}\Omega}{\Omega}.
\end{equation}
Since $\mathcal{L}\Omega/\Omega = \nabla\!\cdot\, v = -\sigma(t,y)$ is the local phase-space volume contraction rate, the continuity equation becomes
\begin{equation}\label{eq:liouville-steady}
v\cdot\nabla\ln\rho = \sigma(t,y) \quad\Longleftrightarrow\quad \nabla\!\cdot(\rho v) = 0,
\end{equation}
which states that the growth of $\rho$ along the flow exactly balances the volume contraction.

\textit{Step 5 (Identification of $S$ and $\alpha$).} Comparing the logarithmic continuity equation \eqref{eq:liouville-steady} with the defining relation $\mathcal{L}\, S = \alpha$, we identify the geometric potential as the dimensionally-correct logarithmic density and the source as the contraction rate:
\begin{equation}\label{eq:invariant-measure-rule}
S(y) := \gamma_0\,\ln\frac{\rho(y)}{\rho_0}, \qquad \alpha(y) := \gamma_0\,\sigma(t,y) = -\gamma_0\,\nabla\!\cdot\, v(t,y).
\end{equation}
With these definitions, the defining relation $\mathcal{L}\, S = \alpha$ is strictly satisfied as the continuity equation of the invariant measure:
\begin{equation}\label{eq:source-from-invariance}
\mathcal{L}\, S = \gamma_0\, v\cdot\nabla\ln\frac{\rho}{\rho_0} = \gamma_0\, v\cdot\nabla\ln\rho = \gamma_0\,\sigma = \alpha,
\end{equation}
where the second equality uses the spatial uniformity of $\rho_0$. Substituting \eqref{eq:invariant-measure-rule} into the master formula \eqref{eq:master-formula} gives Eq.~\eqref{eq:H0-construction}.
\end{proof}

\begin{remark}[Mean contraction rate vanishes]\label{rmk:mean-sigma-zero}
Under $\nabla\!\cdot(\rho v)=0$, the operator $\mathcal{L}=v\cdot\nabla$ is skew-adjoint in $L^2(\rho)$, hence
\[ \langle\sigma\rangle_\rho=\langle 1,\mathcal{L}(\ln\rho)\rangle_\rho=-\langle\mathcal{L}(1),\ln\rho\rangle_\rho=0. \]
Thus the mean contraction rate vanishes identically, which is why $\gamma_0$ must be a free scale parameter rather than $\langle\sigma\rangle_\rho$.
\end{remark}

\begin{remark}[Coboundary form of $H^{(0)}$]\label{rmk:lnrho-coboundary}
Along the deterministic flow, $\frac{\dd}{\dd\tau}\ln\rho(\Phi_{-\tau}(y))=-\sigma(\Phi_{-\tau}(y))$, hence
\[ \ln\frac{\rho(\Phi_{-t}(y))}{\rho(y)}=-\int_0^t\sigma(\Phi_{-\tau}(y))\,\dd\tau=O(1), \]
a bounded coboundary. Therefore $H^{(0)}(t,y)=\gamma_0\ln(\rho(\Phi_{-t}(y))/\rho_0)$ is bounded for all $t$; there is no $\gamma_0^2t$ secular term.
\end{remark}

No assumption is made that $\sigma(t,y)$ is constant. The contraction rate $\sigma(t,y)=-\nabla\!\cdot v(t,y)$ is a general smooth function of $(t,y)$. The constant-divergence case $\sigma(t,y)\equiv\gamma_0$ is discussed wherever it leads to explicit closed formulas.

\subsubsection{First-Order Contact Potential \texorpdfstring{$H^{(1)}=v$}{H1=v}}

By the contact dynamical equation $\dot{y}^i = \partial H/\partial\phi_i$ and the Taylor expansion \eqref{eq:H-expansion}, the zeroth-order term $H^{(0)}$ is independent of $\phi$, and thus its derivative $\partial H^{(0)}/\partial\phi_i$ vanishes everywhere. Furthermore, each higher-order term $H^{(n)}$ ($n\geq 2$) is a homogeneous polynomial of degree $n\geq 2$ in $\phi$; its partial derivative $\partial H^{(n)}/\partial\phi_i$ is homogeneous of degree $n-1\geq 1$, and therefore strictly vanishes at $\phi=0$. Consequently, along the zero-fibre section the sole surviving contribution is the first-order term:
\begin{equation}\label{eq:drift-zero-fibre}
\left.\frac{\partial H}{\partial \phi_i}\right|_{\phi=0} = H^{(1)i}(t,y),
\end{equation}
which identifies the first-order coefficient as the drift field~\cite{zhong2025least}:
\begin{equation}\label{eq:H1}
H^{(1)}(t,y) = v(t,y),
\end{equation}
with $v$ the base drift of the dissipative system. In the deterministic limit $\phi \to 0$, the Legendre map \eqref{eq:legendre_map} collapses to $\dot{y}^i|_{\phi=0} = H^{(1)i}(t,y) = v^i(t,y)$, recovering the deterministic drift equation. Its Jacobian $M^{ij}=\partial v^i/\partial y^j$ (Definition~\ref{def:material-deriv}) enters the second-order transport equation; its divergence $\nabla\!\cdot v = -\sigma(t,y)$ (Eq.~\eqref{eq:sigma-def}) enters $H^{(0)}$.

\subsubsection{Second-Order Contact Potential \texorpdfstring{$H^{(2)}$}{H2}}\label{subsec:H2}

For $N=2$ truncation, the quadratic component is $H^{(2)}=\tfrac12\,\phi_i A^{ij}(t,y)\phi_j$ with $A^{ij}$ a symmetric tensor, the second-order contact stiffness of fluctuations, whose eigenvalues correspond to the contact stiffness in each direction of $\phi$. The full contact potential is
\begin{equation}\label{eq:H-full}
H(t,y,\phi) = H^{(0)}(t,y) + v^i(t,y)\,\phi_i + \tfrac12\,\phi_i A^{ij}(t,y)\phi_j.
\end{equation}

The tensor $A$ is determined by the projected Lie transport equation (Proposition~\ref{prop:transport-deg}, Eq.~\eqref{eq:transport-second}),
\begin{equation}\label{eq:A-transport}
\frac{D A}{Dt} = \Mf\,A + A\,\Mf^{T} + R_A,
\end{equation}
where $D/Dt=\partial_t+v\cdot\nabla$, $\Mf=PMP$ is the reduced Jacobian, and $R_A$ is the rotational compensator (Eq.~\eqref{eq:rotation-term}). In matrix notation:
\begin{equation}\label{eq:RA-matrix}
(R_A)^{ij} = \frac{[A\,M^T\,\nabla H^{(0)}]^i\,\partial^j H^{(0)} + \partial^i H^{(0)}\,[A\,M^T\,\nabla H^{(0)}]^j}{|\nabla H^{(0)}|^2}.
\end{equation}
It is simultaneously constrained by the first-order degeneracy condition (Eq.~\eqref{eq:degeneracy-first}),
\begin{equation}\label{eq:degeneracy}
A\,\nabla H^{(0)} = 0,
\end{equation}
i.e., $A$ has a zero eigenvalue in the $\nabla H^{(0)}$ direction. Along the characteristic line $y(t)=Y(t;0,y_0)$, the solution to the projected Lie transport equation is~\cite{zhong2025least}
\begin{equation}\label{eq:H2-solution}
A(t, Y(t;0,y_0)) = \widetilde{\Phi}(t,0;y_0)\,A_0(y_0)\,\widetilde{\Phi}^T(t,0;y_0) + \int_0^t \widetilde{\Phi}(t,\tau;y_0)\,R_A(\tau)\,\widetilde{\Phi}^T(t,\tau;y_0)\,\dd\tau,
\end{equation}
where $A_0(y_0)=A(0,y_0)$ is the initial tensor satisfying $A_0\,\nabla H^{(0)}(0,y_0)=0$, and $\widetilde{\Phi}(t,s;y_0)$ is the fundamental solution matrix of the projected variational equation $\dot{\widetilde{\Phi}}=\Mf\,\widetilde{\Phi}$ with $\widetilde{\Phi}(s,s;y_0)=I$. This solution is verified by differentiating \eqref{eq:H2-solution} along the characteristic line: applying the Leibniz integral rule to the integral term, substituting the variational equation $\dot{\widetilde{\Phi}} = \Mf\,\widetilde{\Phi}$, and factoring yields $DA/Dt = \Mf\,A + A\,\Mf^{T} + R_A$, recovering \eqref{eq:A-transport}. The initial condition is satisfied since $\widetilde{\Phi}(0,0;y_0)=I$ and the integral vanishes at $t=0$. When $H^{(0)}=c$ (constant), $P=I$, $\Mf=M$, $R_A=0$, and the solution reduces to the standard homogeneous Lyapunov form $A(t)=\Phi(t,0)\,A_0\,\Phi^T(t,0)$.

The second-order term $H^{(2)}$ is the key geometric element through which the coupling between probability and determinism becomes visible~\cite{zhong2025least}. It realizes bidirectional coupling between the base and fibre coordinates: with $H^{(2)}$, the base evolution becomes $\dot{y}^i = v^i + A^{ij}\phi_j$, where changes in $\phi$ can back-react to the base. It is the core carrier of constraint function conservation: for the $N=2$ truncation the constraint function is $\varepsilon = -H^{(0)} + \tfrac12\phi^T A\phi$, and $A$ undergoes Lie transport along the flow generated by $H^{(1)}$, which drives $A$ exponentially to zero in the ideal limit of a dissipative flow while $\phi$ sharpens in compensation. Physically, however, the conserved constraint $\tfrac12\phi^T A\phi = \varepsilon + H^{(0)}$ binds $A$ and $\phi$ to each other, so that one can tend to zero only if the other diverges; since $\phi$, the weighted sum of the probability-field gradients, cannot diverge, $A$ is kept bounded away from zero. The effective coupling $\xi = A\phi$ therefore never actually vanishes: the decay to zero is a mathematical limit, whereas $\xi$ is in physical reality a finite random variable.

At $t=0$, $\nabla H^{(0)}(0,y) = \gamma_0\,\nabla\ln\rho(y)$, so the degeneracy condition \eqref{eq:degeneracy} reduces to the stationary degeneracy $A\,\nabla\ln\rho = 0$: $A$ has a zero eigenvalue in the $\nabla\ln\rho$ direction, confining the fluctuation stiffness to the tangent space of the iso-density surfaces. In the constant-divergence case ($\sigma\equiv\gamma_0$), $\nabla H^{(0)}=\gamma_0\nabla\ln\rho$ holds for all $t$ and the stationary degeneracy is preserved at all times. The stationary degeneracy $A\,\nabla\ln\rho=0$ is the $t=0$ remnant that suffices for the stationary large-deviation analysis (Section~\ref{subsec:gc}).

\subsubsection{Complete Contact Dynamics and Stochastic Coupling}\label{subsec:complete-dynamics}

Substituting the complete second-order potential \eqref{eq:H-full} into the contact dynamical equations \eqref{eq:contact-odes} yields the explicit dynamics generated by the contact potential. The base equation is
\begin{equation}\label{eq:complete-base}
\dot{y}^i \;=\; \frac{\partial H}{\partial \phi_i} \;=\; v^i(t,y) + A^{ij}(t,y)\,\phi_j ,
\end{equation}
and the fibre equation is
\begin{equation}\label{eq:complete-fibre}
\dot{\phi}_i \;=\; -\frac{\partial H}{\partial y^i}
\;=\; -\frac{\partial H^{(0)}}{\partial y^i}
     - \frac{\partial v^j}{\partial y^i}\,\phi_j
     - \frac{1}{2}\,\frac{\partial A^{jk}}{\partial y^i}\,\phi_j\,\phi_k .
\end{equation}

The base equation decomposes naturally into the drift field and an effective stochastic coupling,
\begin{equation}\label{eq:stochastic-coupling}
\dot{y}^i \;=\; v^i(t,y) + \xi^i , \qquad \xi^i := A^{ij}(t,y)\,\phi_j .
\end{equation}
Here $\xi^i$ quantifies the back-reaction of the probability field on the macroscopic motion, and its stochastic character follows directly from the constitution of the fibre coordinate. The fibre coordinate is the weighted sum of all orders of spatial probability gradients, $\phi_i = \sum_{1\le k\le\infty}\sum_{\mu_1<\cdots<\mu_k} B_i^{\mu_1\cdots\mu_k}\,P_{\mu_1\cdots\mu_k}$ (Section~\ref{sec:found}), so that $\phi=0$ means this weighted sum vanishes identically. This happens in exactly two situations: (i) the equiprobable distribution $P=\mathrm{const}$, in which every gradient $P_{\mu_1\cdots\mu_k}=0$ for $k\ge1$; and (ii) a non-uniform steady state, in which the gradients are nonzero but cancel in the weighted sum because the probability flux is in geometric balance. In both cases $\xi^i=A^{ij}\phi_j=0$ and the base equation reduces to the deterministic drift $\dot{y}^i=v^i$. Conversely, $\phi\neq0$ signals a departure from these balanced configurations: an external perturbation generates a net directional probability gradient, so that $\phi_i$, and hence $\xi^i$, is finite. Because $\xi^i$ is fixed by the instantaneous spatial structure of the probability field rather than by the base point $y$ alone, it is a random variable that is zero in equilibrium and nonzero precisely when probability fluctuations are present, and it is encoded in the fibre coordinate rather than injected as an independent additive noise.

The randomness of $\xi^i$ is made precise by contrasting the ideal limit of the Lie transport with the physical constraint.

\textbf{Ideal limit.} Along a dissipative flow the reduced Jacobian $\Mf = PMP$ has eigenvalues with strictly negative real parts; let $\sigma_\parallel>0$ denote the spectral gap and $\tau_\eta = 1/\sigma_\parallel$ the associated phase-space contraction time. In the unconstrained transport \eqref{eq:A-transport}, the contact stiffness decays exponentially while the fibre coordinate sharpens in compensation, giving the ideal laws
\begin{equation}\label{eq:ideal-laws}
\|\phi(t)\| \sim \|\phi(0)\|\,e^{\,t/\tau_\eta}, \qquad \|A(t)\| \sim \|A(0)\|\,e^{-2t/\tau_\eta},
\end{equation}
so that the effective coupling decays as $\|\xi(t)\| \sim e^{-t/\tau_\eta}\to 0$. This is the mathematical limit in which the base dynamics decouples from the probability field.

\textbf{Physical constraint.} Physically this limit is never attained. Because $\varepsilon$ is a conserved first integral, the quadratic form
\begin{equation}\label{eq:constraint-quadratic}
\tfrac12\,\phi^T A\,\phi \;=\; \varepsilon + H^{(0)} \;\equiv\; C_\varepsilon
\end{equation}
is fixed along the flow, which binds $A$ and $\phi$ to each other: one can tend to zero only if the other diverges. But $\phi$, the weighted sum of the probability-field gradients, cannot diverge, because $\|\phi\|\to\infty$ would describe a probability field collapsing to a $\delta$-distribution, which no physical system sustains under persistent microscopic fluctuations; hence $\|\phi\|\le\phi_{\max}$. Since $2C_\varepsilon=\phi^T A\phi$ with $\|\phi\|\le\phi_{\max}$, the constraint yields the lower bounds
\begin{equation}\label{eq:A-xi-lower-bound}
\|A\|\;\ge\;\frac{2C_\varepsilon}{\phi_{\max}^{\,2}}\;>\;0, \qquad \|\xi\| = \|A\phi\|\;\ge\;\frac{2C_\varepsilon}{\phi_{\max}}\;>\;0 .
\end{equation}
The stochastic coupling therefore never vanishes: the decay to zero is a mathematical limit, whereas $\xi$ is in physical reality a finite random variable. Deterministic behaviour (the $A\to0$ limit) and residual stochasticity (finite $\xi$) coexist.

The three terms of the fibre equation \eqref{eq:complete-fibre} have a direct physical meaning. The term $-\partial H^{(0)}/\partial y^i$ is the background force generated by the invariant measure; the term $-\frac{\partial v^j}{\partial y^i}\,\phi_j = -M^{ji}\,\phi_j$ is the linearised transport of the fibre coordinate generated by the drift Jacobian $M^{ij}=\partial v^i/\partial y^j$ (Definition~\ref{def:material-deriv}); and the quadratic term $-\tfrac12(\partial A^{jk}/\partial y^i)\,\phi_j\,\phi_k$ accounts for the spatial variation of the contact stiffness. Together with \eqref{eq:complete-base}, these form the coupled base--fibre system; its Legendrian projection $\phi=\nabla_y S$ produces the Hamilton--Jacobi system studied next.

\begin{remark}[Constraint function as fluctuation budget]\label{rmk:fluctuation-budget}
The finiteness of $\xi$ and the large-deviation structure are two faces of the single conserved quantity $\varepsilon$. By the first-order degeneracy $A\nabla H^{(0)}=0$ (Eq.~\eqref{eq:degeneracy}), the coupling is tangent to the iso-$H^{(0)}$ surfaces, $\xi\cdot\nabla H^{(0)}=\phi\cdot(A\nabla H^{(0)})=0$; together with the zeroth-order transport $\partial_t H^{(0)}+v\cdot\nabla H^{(0)}=0$ this makes $H^{(0)}$ conserved along the physical path, so the fluctuation quadratic form $\tfrac12\phi^T A\phi=\varepsilon+H^{(0)}\equiv C_\varepsilon$ is a conserved first integral. The same $\varepsilon$ is the density of the two-point action $\mathcal{S}_t=\inf_\gamma\int\varepsilon\,\dd t$ (Definition~\ref{def:two-point-action}), hence the rate function; it is also the coefficient of the contact volume form $\Omega_C\propto\varepsilon$ (Eq.~\eqref{eq:contact-volume-coord}), so the contact structure is non-degenerate precisely where $\varepsilon\neq0$. Thus the finite random coupling ($\varepsilon\neq0$) is exactly the regime of a non-degenerate contact structure and a nontrivial rate function, while the mathematical limit $A\to0$ is the degenerate, reversible limit in which the large-deviation structure trivializes.
\end{remark}

\subsubsection{Complete Contact Potential and Coupled HJ System}\label{subsec:coupled}

Substituting $\boldsymbol{\phi}=\nabla_y S$ into Eq.~\eqref{eq:H-full} and applying Theorem~\ref{thm:hj} yields the complete HJ equation
\begin{equation}\label{eq:hj-full}
\frac{\partial S}{\partial t} + H^{(0)}(t,y) + v^i\,\frac{\partial S}{\partial y^i} + \tfrac12\,(\nabla_y S)^T A\,(\nabla_y S) = 0,
\end{equation}
a fully nonlinear first-order PDE for $S(t,y)$, quadratically nonlinear in $\nabla_y S$, with the coefficient $A$ coupled to $S$ through Eq.~\eqref{eq:A-transport}. The three components $H^{(0)},H^{(1)},H^{(2)}$ thus assemble into the closed coupled system
\begin{equation}\label{eq:coupled-system}
\begin{aligned}
&\text{HJ equation:}\quad &&\partial_t S + H^{(0)} + v\cdot\nabla_y S + \tfrac12\,(\nabla_y S)^T A\,(\nabla_y S) = 0,\\
&\text{Second-order transport:}\quad &&\frac{D A}{Dt} = \Mf\,A + A\,\Mf^{T} + R_A,\\
&\text{First-order degeneracy:}\quad &&A\,\nabla H^{(0)} = 0,\\
&\text{Zeroth-order transport:}\quad &&\partial_t H^{(0)} + v\cdot\nabla H^{(0)} = 0,\quad \nabla\!\cdot(\rho v) = 0,
\end{aligned}
\end{equation}
with unknowns $(S,A,\rho)$. The coupling mechanism is: the $\tfrac12(\nabla_y S)^T A(\nabla_y S)$ term makes $S$ nonlinear in $A$; the $\Mf=PMP$ and $R_A$ in the $A$ equation both depend on $H^{(0)}$ (hence on $\rho$); $\rho$ is determined by $\nabla\!\cdot(\rho v)=0$. The characteristic velocity $w=v+A\,\nabla_y S$ derived in Section~\ref{sec:variational} simultaneously carries the drift $v$ and the fluctuation $A$ information, and is the core feature distinguishing the complete theory from the linear truncation.


\section{Action Principle and Contact Large-Deviation Structure}\label{sec:variational}

\begin{assumption}[Standing assumptions]\label{ass:standing}
Throughout Section~\ref{sec:variational} the following assumptions hold unless otherwise stated:
\begin{enumerate}[label=(\roman*)]
\item The drift $v(t,y)$ is smooth with backward-complete flow.
\item The fluctuation tensor $A(t,y)$ is smooth, symmetric, positive semidefinite, of constant rank $r$, and non-degenerate on $\operatorname{im}A$. It satisfies the transport equation \eqref{eq:A-transport} together with the degeneracy condition \eqref{eq:degeneracy}.
\item The invariant density $\rho(y)$ is smooth and strictly positive.
\item The invariant density satisfies the Liouville identity $\nabla\!\cdot(\rho v)=0$, so that the mean contraction rate vanishes: $\langle\sigma\rangle_\rho=0$ identically (Remark~\ref{rmk:mean-sigma-zero}); the entropy production $e=\tfrac12g^TAg-\sigma$ has $\langle e\rangle_\rho=\tfrac12\langle g^TAg\rangle_\rho\ge0$.
\item The principal eigenvalue of each stationary tilted problem (Eq.~\eqref{eq:eigenvalue-eq}) is simple (the Perron--Frobenius condition for ergodic dynamics).
\item Hamilton--Jacobi equations are understood in the viscosity sense.
\end{enumerate}
The path measure of Subsection~\ref{subsec:contact-path-measure} is constructed as the projective limit of finite-dimensional discrete-time approximations (Definition~\ref{def:contact-path-measure}), providing well-defined probability measures at every finite $N$; the continuum objects are obtained in the limit $N\to\infty$. The large-deviation statements of Subsections~\ref{subsec:scgf}--\ref{subsec:gc} are formulated for normalized quantities (endpoint densities, SCGFs, and rate functions), in which all path-independent temporal gauge factors cancel.
\end{assumption}

\subsection{Least Constraint Principle and Characteristic Structure}\label{subsec:least-constraint}

The least constraint theorem (Theorem~\ref{thm:least-constraint}) is the starting point of the analysis. This subsection restates its immediate consequences in a form convenient for the large-deviation development that follows.

Throughout this section $y^i$ ($i=1,\dots,n$) denote the base-space coordinates on the contact manifold $T^*E\times\mathbb{R}$ with local coordinates $(t, y^1,\dots,y^n, \phi_1,\dots,\phi_n)$, and the spatial gradient is $\nabla_y S \equiv \bigl(\frac{\partial S}{\partial y^1}, \dots, \frac{\partial S}{\partial y^n}\bigr)^T$. On the Lagrangian submanifold $\phi_i = \frac{\partial S}{\partial y^i}$ (Theorem~\ref{thm:hj}), equivalently $\boldsymbol{\phi} = \nabla_y S$, the constraint function $\varepsilon = \tfrac12\phi^T A\phi - H^{(0)}$ (Eq.~\eqref{eq:constraint-op}, evaluated for $H=H^{(0)}+v\cdot\phi+\tfrac12\phi^T A\phi$) reduces to
\begin{equation}\label{eq:eps-on-shell}
\varepsilon\big|_{\boldsymbol{\phi}=\nabla_y S} = \tfrac12\,(\nabla_y S)^T A\,(\nabla_y S) - H^{(0)}(t,y).
\end{equation}
This is the action density: by the least constraint theorem, $S$ accumulates $\varepsilon$ along the extremal paths, i.e.\ $\dd S/\dd s = \varepsilon\big|_{\boldsymbol{\phi}=\nabla_y S}$. Here $\boldsymbol{\phi}$ is the fibre coordinate of the contact manifold $T^*E\times\mathbb{R}$, not the momentum of classical mechanics; $\varepsilon$ is the contact constraint function, measuring the degree to which a path deviates from the integral curves of the contact vector field. Its sign convention ($\varepsilon = \phi_i\dot y^i - H$) follows the contact-geometric definition $\Theta = H\dd t - \phi_i\dd y^i$ (Definition~\ref{def:contact-form}), ensuring $\mathcal{S}=\int\varepsilon\,\dd s = -\int\Theta$.

The variational principle $\delta\!\int\varepsilon\,\dd s = 0$ yields the contact dynamical equations (Eq.~\eqref{eq:contact-odes}) as extremality conditions~\cite{zhong2025least}. For the contact potential $H = H^{(0)} + v\cdot\phi + \tfrac12\phi^T A\phi$ (Eq.~\eqref{eq:H-full}), the characteristic equations of the contact vector field $X_H$ evaluate to
\begin{align}
\frac{\dd y^i}{\dd s} &=\;\;\, \frac{\partial H}{\partial\phi_i} = v^i(t,y) + A^{ij}(s,y)\,\phi_j \equiv w^i, \label{eq:char-y}\\
\frac{\dd \phi_i}{\dd s} &= -\frac{\partial H}{\partial y^i}
= -\frac{\partial H^{(0)}}{\partial y^i} - \left(\frac{\partial v^j}{\partial y^i}\right)\,\phi_j - \tfrac12\,\phi_j\left(\frac{\partial A^{jk}}{\partial y^i}\right)\,\phi_k, \label{eq:char-p}
\end{align}
On the Lagrangian submanifold $\phi_i = \frac{\partial S}{\partial y^i}$ (Theorem~\ref{thm:hj}), the action $S$ evolves along the characteristic as
\begin{equation}
\frac{\dd S}{\dd s} = \frac{\partial S}{\partial y^i}\,\frac{\dd y^i}{\dd s} - H\big|_{\boldsymbol{\phi}=\nabla_y S}
= \tfrac12\,(\nabla_y S)^T A\,(\nabla_y S) - H^{(0)}
= \varepsilon\big|_{\boldsymbol{\phi}=\nabla_y S}. \label{eq:char-S}
\end{equation}
Equation~\eqref{eq:char-y} defines the modified characteristic velocity $w \equiv v + A\phi$, where $v$ is the base-space deterministic drift and $A\phi$ is the characteristic-line offset induced by the fluctuation tensor. Equation~\eqref{eq:char-p} is the evolution of the fibre coordinate $\phi_i$ along the characteristic, formally analogous to the Hamilton canonical equations, but here there is no symplectic structure and no momentum conservation; the derivative $\partial H/\partial y^i$ acts on all $y$-dependent terms in the contact potential. Equation~\eqref{eq:char-S} states that the action $S$ accumulates the constraint function $\varepsilon$ along the characteristic.

The action $S$ is the time-integral of the constraint function along characteristics:
\begin{equation}\label{eq:S-general-eps}
S(t,y) = S_0\!\left(\Phi^{(w)}_{-t}(y)\right) + \int_0^t \varepsilon\big|_{\boldsymbol{\phi}=\nabla_y S}\,\dd s,
\end{equation}
where $\Phi^{(w)}_{s\to t}$ is the flow map of the characteristic velocity $w$. Expanding $\varepsilon\big|_{\boldsymbol{\phi}=\nabla_y S} = \tfrac12 (\nabla_y S)^T A (\nabla_y S) - H^{(0)}$ gives the explicit form
\begin{equation}\label{eq:S-general}
S(t,y) = S_0\!\left(\Phi^{(w)}_{-t}(y)\right)
- \int_0^t \!\left[H^{(0)}\!\left(\sigma, \Phi^{(w)}_{\sigma-t}(y)\right)
- \tfrac12\,(\nabla_y S)^T A\,(\nabla_y S)\big|_{\sigma,\,\Phi^{(w)}_{\sigma-t}(y)}\right]\dd\sigma,
\end{equation}
where $\nabla_y S$ evolves along the characteristic by Eq.~\eqref{eq:char-p} and $A$ evolves along the $v$-flow by Eq.~\eqref{eq:A-transport}. This solution is implicit ($w$ depends on $\nabla_y S$) and requires iterative solution.

\begin{remark}[Characteristic-line separation]\label{rmk:char-separation}
The transport equation for $A$ (Eq.~\eqref{eq:A-transport}) employs the material derivative $D/Dt = \partial_t + v\cdot\nabla$ along the deterministic flow $v$, while the HJ characteristic for $S$ follows $w = v + A\phi$. Thus $A$ and $S$ propagate along distinct characteristic families. This coupling is resolved by the following iterative scheme: at order $k$, given $S^{(k-1)}$ and the characteristic flow $w^{(k-1)} = v + A\nabla_y S^{(k-1)}$, the field $A$ (evolved along $v$ by Eq.~\eqref{eq:A-transport}) is evaluated on the $w^{(k-1)}$-characteristic as a frozen background, and the HJ equation is solved for the next iterate $S^{(k)}$ along $w^{(k-1)}$. Convergence of this scheme follows from the contraction property of the projected Lie transport \eqref{eq:transport-second} under dissipative dynamics: as $DA/Dt = \Mf A + A\Mf^T + R_A$ with dissipative $\Mf$, $A$ decays exponentially along $v$, and the difference $w^{(k)} - w^{(k-1)} = A(\nabla_y S^{(k)} - \nabla_y S^{(k-1)})$ is correspondingly damped, ensuring geometric convergence provided the initial guess $S^{(0)}$ (the $A=0$ solution, Proposition~\ref{prop:deg}) is within the basin of attraction.
\end{remark}

\begin{proposition}[Degeneracy consistency]\label{prop:deg}
When $A \to 0$: $w \to v$, $R_A \to 0$, and Eq.~\eqref{eq:S-general} degenerates to the linear-truncation solution
\begin{equation}\label{eq:S-A0-limit}
S(t,y)\big|_{A=0} = S_0\!\left(\Phi^{(v)}_{-t}(y)\right) - \int_0^t H^{(0)}\!\left(\sigma, \Phi^{(v)}_{\sigma-t}(y)\right)\dd\sigma.
\end{equation}
\end{proposition}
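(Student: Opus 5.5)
We take the limit $A\to 0$ term by term in the three objects entering Eq.~\eqref{eq:S-general}: the characteristic velocity $w$, the rotational compensator $R_A$, and the integrand $\varepsilon|_{\boldsymbol{\phi}=\nabla_y S}$.

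\emph{Characteristic velocity.} By Eq.~\eqref{eq:char-y}, $w = v + A\nabla_y S$. On any compact set where $\nabla_y S$ stays bounded (guaranteed by the viscosity/local Lipschitz standing assumption, Remark~\ref{rmk:hj-regularity}), $\|w-v\|\le \|A\|\,\|\nabla_y S\|\to 0$.

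\emph{Flow maps.} Since $v$ is smooth with backward-complete flow (Assumption~\ref{ass:standing}(i)), Gronwall's inequality on $[0,t]$ gives continuous dependence of flows on the vector field. Hence $\Phi^{(w)}_{\sigma-t}(y)\to\Phi^{(v)}_{\sigma-t}(y)$ uniformly for $\sigma\in[0,t]$.

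\emph{Compensator.} The compensator in Eq.~\eqref{eq:RA-matrix} is linear in $A$, with the bounded factor $M^T\mathbf n\otimes\mathbf n$ where $\nabla H^{(0)}\neq0$; where $\nabla H^{(0)}=0$ it vanishes identically because $P=I$. Therefore $\|R_A\|\le 2\|A\|\,\|M\|\to 0$.

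\emph{Integrand.} The quadratic term satisfies $\bigl|\tfrac12(\nabla_y S)^TA(\nabla_y S)\bigr|\le\tfrac12\|A\|\,\|\nabla_y S\|^2\to0$. The term $H^{(0)}$ is smooth and independent of $A$ (Proposition~\ref{prop:H0-construction}), so by continuity of $H^{(0)}$ and of $S_0$ under the convergent flows, $H^{(0)}(\sigma,\Phi^{(w)}_{\sigma-t}(y))\to H^{(0)}(\sigma,\Phi^{(v)}_{\sigma-t}(y))$ and $S_0(\Phi^{(w)}_{-t}(y))\to S_0(\Phi^{(v)}_{-t}(y))$. Dominated convergence on the finite interval $[0,t]$ then yields Eq.~\eqref{eq:S-A0-limit}.

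\emph{Consistency check.} Set $A=0$ directly in the coupled system \eqref{eq:coupled-system}. The HJ equation becomes linear, $\partial_tS+v\cdot\nabla_yS=-H^{(0)}$, and its unique solution by backward characteristics of $v$, as in Step 2 of Proposition~\ref{prop:H0-construction}, is exactly Eq.~\eqref{eq:S-A0-limit}. So the limit agrees with the linear-truncation solution, and the transport equation is trivially satisfied by $A\equiv0$.

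\emph{Main obstacle.} The delicate step is the uniform boundedness of $\nabla_y S$ as $A\to0$, which the implicit representation \eqref{eq:S-general} requires. We would obtain it by differentiating the characteristic equation \eqref{eq:char-p}: for small $A$ the quadratic term is a perturbation of a linear ODE in $\phi$ driven by $\nabla H^{(0)}$ and $M$. A Gronwall estimate on $[0,t]$ bounds $\phi=\nabla_yS$ before any caustic, and beyond caustics the viscosity framework is used, with stability of viscosity solutions under uniform convergence of Hamiltonians.
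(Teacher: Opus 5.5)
Your proposal is correct and follows the paper's proof: the same term-by-term limit ($w\to v$ so $\Phi^{(w)}\to\Phi^{(v)}$, $R_A=O(A)\to0$, integrand reducing to $-H^{(0)}$, degeneracy trivially satisfied) substituted into Eq.~\eqref{eq:S-general}. Your Gronwall, dominated-convergence and uniform-$\nabla_y S$ justifications are rigor the paper's formal argument omits. The one point to add is that the bound on $\nabla_y S$ needs $A\to0$ in $C^1$, not merely pointwise, because $\partial_y A$ enters Eq.~\eqref{eq:char-p}.
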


\begin{proof}
The limit $A\to 0$ is examined term by term. \textit{Step 1 (Characteristic velocity).} By Eq.~\eqref{eq:char-y}, $w=v+A\phi \to v$, so $\Phi^{(w)}\to\Phi^{(v)}$. \textit{Step 2 (Rotation term).} $R_A$ (Eq.~\eqref{eq:RA-matrix}) is homogeneous of degree one in $A$: each term of the numerator contains exactly one factor $[AM^T\nabla H^{(0)}]$, linear in $A$. Hence $R_A=O(A)\to 0$. \textit{Step 3 (Action density).} The integrand reduces to $-H^{(0)}$. \textit{Step 4 (Degeneracy).} $A\nabla H^{(0)}=0$ is trivially satisfied. Substituting Steps 1--3 into Eq.~\eqref{eq:S-general} yields Eq.~\eqref{eq:S-A0-limit}.
\end{proof}

\subsection{Contact Path Measure and the Two-Point Action}\label{subsec:contact-path-measure}

A large-deviation principle requires three ingredients: a probability measure on path space, an asymptotic parameter, and a rate function. In classical Freidlin--Wentzell theory~\cite{freidlin1998random}, the probability measure is supplied by an exogenous SDE and the asymptotic parameter is the small-noise amplitude $\epsilon\to0$. In the contact-geometric framework, the path measure is constructed from the geometric data of the contact manifold itself, and the asymptotic parameter is the elapsed time $t$. The fluctuation tensor $A$ entering the measure is not an externally prescribed diffusion matrix but the self-consistent field determined by Eq.~\eqref{eq:A-transport}.

\subsubsection{Reference measure on the extended phase space}

On the contact manifold $(T^*E\times\mathbb{R},\Theta)$ of dimension $2n+1$, the natural top-degree form is
\begin{equation}\label{eq:contact-volume}
\Omega_C = \Theta \wedge (\dd\Theta)^{\wedge n}.
\end{equation}
Its coordinate expression is obtained by direct exterior calculus. In Darboux coordinates $(t,y^i,\phi_i)$, the contact 1-form is $\Theta = H\,\dd t - \phi_i\,\dd y^i$. Its exterior derivative is
\begin{equation*}
\dd\Theta = \dd H\wedge\dd t - \dd\phi_i\wedge\dd y^i
= \left(\frac{\partial H}{\partial y^i}\,\dd y^i + \frac{\partial H}{\partial\phi_i}\,\dd\phi_i\right)\wedge\dd t - \dd\phi_i\wedge\dd y^i,
\end{equation*}
since $\dd\Theta$ is independent of $\partial H/\partial t\,\dd t\wedge\dd t=0$. Taking the $n$-th exterior power,
\begin{equation*}
(\dd\Theta)^{\wedge n} = (-1)^n\,n!\,\dd\phi_1\wedge\dd y^1\wedge\cdots\wedge\dd\phi_n\wedge\dd y^n + \sum_{k=1}^n C_k\wedge(\dd t)^k,
\end{equation*}
where each $C_k$ is a $2n-k$ form involving $\partial H/\partial y^i$ and $\partial H/\partial\phi_i$. The terms with $k\ge 1$ contain at least one factor $\dd t$ and are annihilated upon wedging with $\Theta = H\dd t - \phi_i\dd y^i$, since $(\dd t)^2=0$. The sole surviving term after wedging is
\begin{align*}
\Omega_C &= (-1)^n\,n!\,(H\dd t - \phi_i\dd y^i) \wedge \dd\phi_1\wedge\dd y^1\wedge\cdots\wedge\dd\phi_n\wedge\dd y^n \\
&= (-1)^n\,n!\,H\,\dd t\wedge\dd\phi_1\wedge\dd y^1\wedge\cdots\wedge\dd\phi_n\wedge\dd y^n \\
&\quad - (-1)^n\,n!\,\phi_i\,\dd y^i\wedge\dd\phi_1\wedge\dd y^1\wedge\cdots\wedge\dd\phi_n\wedge\dd y^n.
\end{align*}
Recognizing that the second term evaluates, by the Leibniz expansion of the determinant, to $\phi_i\,\partial H/\partial\phi_i$ times the same volume element. To see this explicitly, note that the wedge product $\dd y^i\wedge\dd\phi_1\wedge\dd y^1\wedge\cdots\wedge\dd\phi_n\wedge\dd y^n$ can be rearranged (with appropriate sign) into the standard orientation $\dd y^1\wedge\cdots\wedge\dd y^n\wedge\dd\phi_1\wedge\cdots\wedge\dd\phi_n$ by moving $\dd y^i$ past $i-1$ pairs of $(\dd\phi, \dd y)$, yielding a sign $(-1)^{i-1}$, and the contraction $\phi_i$ sums over $i$ with the surviving minor of the $\dd\phi$ wedge that picks out the $i$-th component, giving $\sum_i \phi_i\,\partial H/\partial\phi_i = \phi_i\,\partial H/\partial\phi_i$ by the Einstein summation convention. Thus
\begin{align}
\Omega_C &= (-1)^n\,n!\,\left(H - \phi_i\frac{\partial H}{\partial\phi_i}\right)\,\dd t\wedge\dd\phi_1\wedge\dd y^1\wedge\cdots\wedge\dd\phi_n\wedge\dd y^n \nonumber \\
&= \pm\,n!\,\varepsilon\,\dd t\wedge\dd y^1\wedge\cdots\wedge\dd y^n\wedge\dd\phi_1\wedge\cdots\wedge\dd\phi_n,
\label{eq:contact-volume-coord}
\end{align}
where the overall sign $\pm$ is fixed by the ordering convention of $\dd y$ and $\dd\phi$ factors and the parity $(-1)^n$. For $n=1$ this reduces to $\Omega_C = -\varepsilon\,\dd t\wedge\dd y\wedge\dd\phi$, confirming \eqref{eq:contact-volume-coord}. The contact non-degeneracy condition $\Omega_C\neq0$ holds precisely on the open set $\{\varepsilon\neq0\}$: the contact structure degenerates exactly where the constraint function vanishes. The contact structure is strictly preserved along the characteristic field, $\mathcal{L}_{X_H}\dd\Theta=0$, which is the geometric counterpart of Liouville's theorem in symplectic mechanics for dissipative stochastic systems~\cite{zhong2025least}; correspondingly the contact form satisfies $\mathcal{L}_{X_H}\Theta=-\dd(\varepsilon(\gamma(t)))$ (Eq.~\eqref{eq:Lie-theta}). Nevertheless, $\Omega_C$ vanishes on $\{\varepsilon=0\}$, so a non-degenerate reference measure is still required for the path-measure construction.

For the measure construction what is needed is a reference measure on each constant-$t$ slice. We take the Liouville measure on the fibre space $T^*E$,
\begin{equation}\label{eq:liouville-slice}
\dd\mu_L = \dd y^1\cdots\dd y^n\,\dd\phi_1\cdots\dd\phi_n.
\end{equation}
It is invariant under the fibre part of the contact flow: the $(y,\phi)$-components of $X_H$ form the vector field $\bigl(\partial H/\partial\phi,\,-\partial H/\partial y\bigr)$, whose divergence vanishes identically,
\begin{equation}\label{eq:div-free}
\frac{\partial}{\partial y^i}\Bigl(\frac{\partial H}{\partial\phi_i}\Bigr) + \frac{\partial}{\partial\phi_i}\Bigl(-\frac{\partial H}{\partial y^i}\Bigr) = 0,
\end{equation}
by equality of mixed partials (Liouville's theorem). The slice measure $\dd\mu_L$ is therefore the canonical invariant reference measure of the extended phase space, well-defined independently of the degeneracy locus $\{\varepsilon=0\}$.

\subsubsection{Time discretization and the constrained path measure}

The path measure is constructed as the projective limit of finite-dimensional discrete-time approximations. This approach replaces a purely formal Gibbs-type notation with a rigorous construction whose finite-$N$ marginals are well-defined probability measures.

Partition the time interval $[0,t]$ into $N$ equal segments of length $\Delta t = t/N$, with nodes $t_k = k\Delta t$ ($k=0,1,\dots,N$). A discrete path is a sequence $(y_0,\phi_0;\dots;y_N,\phi_N)\in (Y\times\mathbb{R}^n)^{N+1}$ with fixed endpoints $y_0$, $y_N=y$. At each segment, define the discrete velocity $\dot y_k = (y_{k+1}-y_k)/\Delta t$.

The constraint function in fibre coordinates on the $k$-th segment is
\begin{equation}\label{eq:eps-disc}
\varepsilon_k(y_k,\phi_k,\dot y_k) = \phi_k\cdot(\dot y_k - v(y_k)) - H^{(0)}(t_k,y_k) - \tfrac12\,\phi_k^T A(y_k)\,\phi_k,
\end{equation}
where the identity $\phi_k\cdot v(y_k) \equiv H^{(1)}(t_k,y_k,\phi_k)$ has been used. The unnormalized weight of the discrete segment is
\begin{equation}\label{eq:segment-weight}
\Delta W_k := e^{-\varepsilon_k\,\Delta t},
\end{equation}
and the total discrete weight is the product $\prod_{k=0}^{N-1}\Delta W_k$.

\subsubsection{Fibre integration and the degeneracy constraint}

The $\phi_k$ variables are integrated over $\mathbb{R}^n$ with respect to the slice Liouville measure $\dd\mu_L(y_k,\phi_k)=\dd y_k\,\dd\phi_k$ (Eq.~\eqref{eq:liouville-slice}). The segment weight $\Delta W_k=e^{-\varepsilon_k\Delta t}$ is not an ordinary Lebesgue integrand, because the constraint function is concave in $\phi$: its quadratic part is $-\tfrac12\phi_k^TA_k\phi_k$, so the exponent contains the growing term $+\tfrac12\phi_k^TA_k\phi_k\,\Delta t$ on $\operatorname{im}A_k$, and the real integral diverges. This divergence is removed, and the velocity constraint is imposed, by the Wick--Fresnel contour rotation
\begin{equation*}
\phi_k \;\mapsto\; -i\,\phi_k ,
\end{equation*}
the complex contour deformation underlying the Fourier representation of a constraint. Under this rotation the linear term of the exponent becomes the Fourier kernel and the growing quadratic term changes sign,
\begin{equation*}
\begin{aligned}
-\phi_k\cdot(\dot y_k-v_k)\,\Delta t &\;\mapsto\; +i\,\phi_k\cdot(\dot y_k-v_k)\,\Delta t ,\\[2pt]
+\tfrac12\phi_k^TA_k\phi_k\,\Delta t &\;\mapsto\; -\tfrac12\phi_k^TA_k\phi_k\,\Delta t ,
\end{aligned}
\end{equation*}
so the exponent is exactly quadratic with a negative-definite form on $\operatorname{im}A_k$. The resulting oscillatory (Fresnel) integral is
\begin{equation}\label{eq:phi-integral}
\mathcal{I}_k(y_k,\dot y_k) := \int_{\mathbb{R}^n} \exp\!\Bigl[-\Delta t\Bigl(\tfrac12\phi_k^T A_k\phi_k - H^{(0)}_k\Bigr) +\;i\,\Delta t\,\phi_k\cdot(\dot y_k-v_k)\Bigr]\dd\phi_k,
\end{equation}
in which the negative-definite quadratic form on $\operatorname{im}A_k$ makes the integration over the range directions converge, and the linear term on $\ker A_k$ is the Fourier kernel that produces the $\delta$-distribution \eqref{eq:delta-kernel}. The imaginary unit is a bookkeeping device for this constraint; it disappears after integration, and the resulting kernel \eqref{eq:y-step-kernel} is real. Here $v_k=v(y_k)$, $H^{(0)}_k=H^{(0)}(t_k,y_k)$, $A_k=A(y_k)$. Decompose $\mathbb{R}^n = \operatorname{im}A_k \oplus \ker A_k$ orthogonally, writing $\phi_k = \phi_k^\parallel + \phi_k^\perp$. On $\ker A_k$, the quadratic term vanishes and the exponent is linear in $\phi_k^\perp$, so the integral over $\ker A_k$ produces a $\delta$-distribution:
\begin{equation}\label{eq:delta-kernel}
\int_{\ker A_k} e^{+i\,\Delta t\,\phi_k^\perp\cdot(\dot y_k-v_k)}\dd\phi_k^\perp
= (2\pi/\Delta t)^{(n-r)}\,\delta_{\ker A_k}\!\bigl(P_{\ker A_k}(\dot y_k-v_k)\bigr),
\end{equation}
the exponent $n-r$ following from the $d$-dimensional Fourier identity $\int_{\mathbb{R}^d}e^{i\Delta t\,u\cdot\phi}\dd\phi=(2\pi/\Delta t)^d\,\delta(u)$, in which each of the $d$ directions contributes one power of $2\pi/\Delta t$. Here $P_{\ker A_k}$ is the orthogonal projector onto $\ker A_k$. This enforces the path constraint
\begin{equation}\label{eq:path-constraint}
\dot y_k - v(y_k) \in \operatorname{im}A(y_k) \qquad \forall k,
\end{equation}
whose continuum limit is $\dot\gamma(s)-v(\gamma(s))\in\operatorname{im}A(\gamma(s))$ for a.e.\ $s$.

On $\operatorname{im}A_k$, the quadratic form is non-degenerate (by the standing constant-rank assumption on $A$). Completing the square,
\begin{equation*}
\begin{split}
i\,\phi_k^\parallel\cdot(\dot y_k-v_k) - \tfrac12\phi_k^\parallel\,{}^T\!A_k\phi_k^\parallel
&= -\tfrac12(\phi_k^\parallel - iA_k^+(\dot y_k-v_k))^T A_k(\phi_k^\parallel - iA_k^+(\dot y_k-v_k))\\
&- \tfrac12(\dot y_k-v_k)^T A_k^+(\dot y_k-v_k),
\end{split}
\end{equation*}
the Gaussian integral over $\operatorname{im}A_k$ evaluates exactly to
\begin{equation}\label{eq:gaussian-integral}
\begin{split}
\int_{\operatorname{im}A_k} e^{-\Delta t\bigl[\tfrac12\phi_k^\parallel\,{}^T\!A_k\phi_k^\parallel - i\,\phi_k^\parallel\cdot(\dot y_k-v_k)\bigr]}\dd\phi_k^\parallel
&= \left(\frac{2\pi}{\Delta t}\right)^{r/2}(\det\nolimits_+ A_k)^{-1/2}\, \\
&\exp\!\left(-\frac{\Delta t}{2}(\dot y_k-v_k)^T A_k^+(\dot y_k-v_k)\right),
\end{split}
\end{equation}
where $\det_+ A_k$ denotes the pseudo-determinant (product of nonzero eigenvalues) and $r=\operatorname{rank}A_k$. Collecting the two contributions, the total $\phi_k$-integral carries the prefactor $(2\pi/\Delta t)^{(n-r)+r/2}=(2\pi/\Delta t)^{\,n-r/2}$, and the factor $e^{+\Delta t\,H^{(0)}_k}$ required to pass from \eqref{eq:gaussian-integral} to \eqref{eq:y-step-kernel} arises automatically from the $-H^{(0)}_k$ term inside $\varepsilon_k\Delta t$.

\subsubsection{Discrete-time configuration-space measure}

Collecting the Gaussian prefactors (the factor $e^{\Delta t\,H^{(0)}_k}$ is supplied by the $H^{(0)}_k$ term of $\varepsilon_k$), the $y$-marginal of the discrete segment after $\phi$-integration is
\begin{equation}\label{eq:y-step-kernel}
\begin{split}
K_{\Delta t}(t_k,y_k,y_{k+1}) &= C_{\Delta t}(y_k)\,
\exp\!\left(-\Delta t\left[\tfrac12(\dot y_k - v_k)^T A_k^+(\dot y_k - v_k) - H^{(0)}(t_k,y_k)\right]\right) \\
&\times \mathbf{1}_{\operatorname{im}A_k}(\dot y_k - v_k),
\end{split}
\end{equation}
where $C_{\Delta t}(y_k) = (2\pi \Delta t)^{-r/2}(\det_+ A_k)^{-1/2}$ is the normalization prefactor (with $r=\operatorname{rank}A_k$; in the full-rank case $r=n$ this reduces to $(2\pi\Delta t)^{-n/2}(\det_+ A_k)^{-1/2}$) and $\mathbf{1}_{\operatorname{im}A_k}$ enforces the path constraint. The normalization of the kernel on the unconstrained directions fixes this power: requiring $\int K_{\Delta t}\,\dd y_{k+1}=1$ for constant $A$ of rank $r$ gives $C_{\Delta t}=(2\pi\Delta t)^{-r/2}(\det_+A)^{-1/2}$, so that $C_{\Delta t}$ diverges as $\Delta t\to0$, in agreement with Remark~\ref{rmk:exact-gaussian}. The indicator is understood in the distributional sense of Eq.~\eqref{eq:delta-kernel}; in the continuum limit it restricts paths to the constrained path space. The rank-dependent power $(\Delta t)^{-r/2}$ (rather than $(\Delta t)^{-n/2}$ when $r<n$) reflects the effective dimensionality of the Gaussian integration on $\operatorname{im}A$; the discrepancy is absorbed into the overall normalization $Z_t^{(N)}$ and cancels in all normalized ratios and SCGFs.

\begin{definition}[Contact path measure]\label{def:contact-path-measure}
The $N$-step discrete-time measure on $Y^{N+1}$ with fixed endpoints $y_0,y_N=y$ is
\begin{equation}\label{eq:discrete-measure}
\dd\mathbb{P}^{(N)}_{t}(y_0,\dots,y_{N-1}\,|\,y) := \frac{1}{Z_t^{(N)}(y_0,y)}\,
\prod_{k=0}^{N-1} K_{\Delta t}(t_k,y_k,y_{k+1})\,\dd y_1\cdots\dd y_{N-1},
\end{equation}
where $Z_t^{(N)}$ normalizes over interior points. By the semigroup property (Lemma~\ref{lem:semigroup}), $\{\mathbb{P}^{(N)}_t\}$ is consistent under partition refinement; the continuum contact path measure $\mathbb{P}_t(\cdot\,|\,y_0)$ is the projective limit on $D_Y[0,t]$.
\end{definition}

\begin{remark}[Exactness of the Gaussian integration]\label{rmk:exact-gaussian}
Unlike a saddle-point prescription, the Gaussian integral \eqref{eq:gaussian-integral} is exact (not asymptotic) for each discrete segment, because the exponent is exactly quadratic in $\phi_k^\parallel$. The integration over $\operatorname{im}A_k$ uses the oscillatory (Fresnel) convention specified under Eq.~\eqref{eq:phi-integral}. The $\delta$-distribution \eqref{eq:delta-kernel} is likewise exact, arising from the Fourier representation of the linear exponential on $\ker A_k$. No limit $\Delta t\to 0$ is taken inside the $\phi_k$-integral; the only limit is $N\to\infty$ (equivalently $\Delta t\to 0$) in the projective limit defining the continuum measure. The determinant prefactor $C_{\Delta t}(y_k)$ carries a $(\Delta t)^{-r/2}$ divergence as $\Delta t\to 0$ (where $r=\operatorname{rank}A_k$), in accordance with its explicit value in Eq.~\eqref{eq:y-step-kernel}; this is the standard Wiener-measure normalization on the $r$-dimensional fluctuation subspace $\operatorname{im}A_k$, absorbed into the overall normalization $Z_t^{(N)}$ and canceling in all normalized ratios and SCGFs.
\end{remark}

\begin{lemma}[Semigroup property]\label{lem:semigroup}
For constant $v$ and $A$, the kernels satisfy the Chapman--Kolmogorov equation
\begin{equation}\label{eq:chapman-kolmogorov}
\int_Y K_{\Delta t_1}(s,x,z)\,K_{\Delta t_2}(s+\Delta t_1,z,y)\,\dd z
= K_{\Delta t_1+\Delta t_2}(s,x,y)
\end{equation}
exactly. For $y$-dependent $v(t,y)$ and $A(y)$, the same identity holds asymptotically as $\Delta t\to 0$.
\end{lemma}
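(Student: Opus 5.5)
For constant $v$ and $A$ I would reduce the kernel \eqref{eq:y-step-kernel} to a degenerate Gaussian transition density and then convolve. Writing $u=y_{k+1}-y_k$, one has $\dot y_k-v=(u-v\Delta t)/\Delta t$, so the exponent $-\tfrac{\Delta t}{2}(\dot y_k-v)^TA^+(\dot y_k-v)$ becomes $-\tfrac{1}{2\Delta t}(u-v\Delta t)^TA^+(u-v\Delta t)$. The indicator $\mathbf{1}_{\operatorname{im}A}$, read distributionally as in \eqref{eq:delta-kernel}, forces $u-v\Delta t\in\operatorname{im}A$. The prefactor $(2\pi\Delta t)^{-r/2}(\det_+A)^{-1/2}$ is the normalizing constant of the Gaussian with mean $v\Delta t$ and covariance $\Delta t\,A$, supported on the affine subspace $x+v\Delta t+\operatorname{im}A$.

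The $H^{(0)}$ factor needs separate treatment. When $H^{(0)}$ is constant, its factor $e^{\Delta t H^{(0)}}$ is multiplicative and additive in $\Delta t$. In general, the degeneracy condition $A\nabla H^{(0)}=0$ together with the constraint $u-v\Delta t\in\operatorname{im}A$ shows that $H^{(0)}$ changes along a step only through the drift. I would state explicitly that in the constant-coefficient setting $H^{(0)}$ is taken constant, since that is the only case in which an exact identity is expected. The claimed identity then follows from the standard convolution rule for Gaussians: $\mathcal N(v\Delta t_1,\Delta t_1A)*\mathcal N(v\Delta t_2,\Delta t_2A)=\mathcal N(v(\Delta t_1+\Delta t_2),(\Delta t_1+\Delta t_2)A)$, with the exponential factors combining as $e^{(\Delta t_1+\Delta t_2)H^{(0)}}$.

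To make the convolution rigorous despite the degeneracy, I would verify it through Fourier transforms rather than by completing squares. The characteristic function of $K_{\Delta t}(s,x,\cdot)$ is $\exp(i\xi\cdot(x+v\Delta t)-\tfrac{\Delta t}{2}\xi^TA\xi)$, and this formula is valid even when $A$ is singular, because the $\ker A$ directions carry the $\delta$ factor. The transforms multiply, and the exponents are linear in $\Delta t$, which gives \eqref{eq:chapman-kolmogorov} exactly. The set $\operatorname{im}A$ is the same subspace for both steps because $A$ is constant, so the two supports compose correctly.

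For variable $v(t,y)$ and $A(y)$ I would argue by freezing coefficients. In the product $K_{\Delta t_1}(s,x,z)K_{\Delta t_2}(s+\Delta t_1,z,y)$, the relevant $z$ lie within $O(\sqrt{\Delta t})$ of $x$ because of the Gaussian localization. Taylor-expanding $v$, $A^+$, $\det_+A$ and $H^{(0)}$ at $x$ then gives relative errors of order $O(\sqrt{\Delta t})$, which improve to $O(\Delta t)$ after the odd moments cancel. Applying the exact constant-coefficient identity to the frozen kernel and integrating the error against the Gaussian yields the result up to a factor $1+O(\Delta t)$.

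The main obstacle is the constant-rank, projector-dependent indicator. When $A(z)\neq A(x)$ the subspaces $\operatorname{im}A(z)$ and $\operatorname{im}A(x)$ differ, so the supports do not compose exactly. I would handle this by using smoothness of the projector under the constant-rank assumption: the mismatch has size $O(|z-x|)=O(\sqrt{\Delta t})$, and it is absorbed into the asymptotic statement as a weak limit of measures rather than a pointwise equality of densities.
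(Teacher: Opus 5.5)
Your route is the paper's: exact Gaussian convolution for constant coefficients, and freezing coefficients with $|z-x|=O(\sqrt{\Delta t})$ localization otherwise. Your characteristic-function version handles singular $A$ more cleanly than the paper's completion of the square in $z$.

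\textbf{The $H^{(0)}$ factor.} One point needs correcting. You do not need to assume $H^{(0)}$ constant. It is also not the only case in which the identity is exact: the time argument $s+\Delta t_1$ of the second kernel absorbs exactly the drift change you noticed.

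With constant $A$, the degeneracy $A\nabla H^{(0)}=0$ (imposed at all times by the standing assumptions) makes $H^{(0)}$ invariant under translations by $\operatorname{im}A$. The zeroth-order transport $\partial_tH^{(0)}+v\cdot\nabla H^{(0)}=0$ gives $H^{(0)}(s+\Delta t_1,x+v\Delta t_1)=H^{(0)}(s,x)$. Hence $H^{(0)}(s+\Delta t_1,z)=H^{(0)}(s,x)$ on the whole support $z\in x+v\Delta t_1+\operatorname{im}A$ of the first kernel. The factor $e^{\Delta t_2H^{(0)}(s+\Delta t_1,z)}$ therefore leaves the $z$-integral, and the exponentials combine to $e^{(\Delta t_1+\Delta t_2)H^{(0)}(s,x)}$, as required. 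The paper's proof simply ignores the $H^{(0)}$ factor, so this step is missing there too.

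\textbf{Normalization.} State explicitly that exactness relies on reading $\mathbf{1}_{\operatorname{im}A}$ as the unit-mass delta $\delta^{(n-r)}\bigl(P_{\ker A}(u-v\Delta t)\bigr)$ in the displacement $u$, as you implicitly do. The $\dot y$-variable delta of \eqref{eq:delta-kernel}, rewritten in $u$, carries an extra factor $(2\pi)^{n-r}$ (or $\Delta t^{n-r}$ without its prefactor). That factor enters once per kernel and would break \eqref{eq:chapman-kolmogorov}.

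\textbf{Variable coefficients.} Here you are more careful than the paper, and rightly so. Suppose $\operatorname{im}A(z)\neq\operatorname{im}A(x)$ on the support of the first kernel. This is the generic situation here, since $\operatorname{im}A\subseteq(\nabla H^{(0)})^{\perp}$ follows the level sets of $H^{(0)}$. Then the composed kernel gives zero mass to the $r$-plane carrying $K_{\Delta t_1+\Delta t_2}(s,x,\cdot)$. The two sides are mutually singular, so the paper's claim that the relative error vanishes cannot hold at the density level.

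Your fallback is vacuous as phrased, however, because both sides converge weakly to $\delta_x$. The statement with content, and the one needed for the $N$-step chain to converge to the diffusion, is
\[
\int_Y f(y)\Bigl[\int_Y K_{\Delta t_1}(s,x,z)\,K_{\Delta t_2}(s+\Delta t_1,z,y)\,\dd z-K_{\Delta t_1+\Delta t_2}(s,x,y)\Bigr]\dd y=o(\Delta t)
\]
for smooth $f$ with bounded derivatives. A moment expansion gives this with error $O(\Delta t^2)$. The mass, mean and second moments of the two sides agree to that order, and higher moments are $O(\Delta t^2)$ on both sides. No structural condition on $H^{(0)}$ is needed at this weak level, because its variation enters multiplied by $\Delta t_2$.
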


\begin{proof}
For constant $v$ and $A$, each kernel is the heat kernel of a Gaussian transition density and the standard convolution identity for Gaussians gives \eqref{eq:chapman-kolmogorov} exactly; writing each kernel as the $\phi$-marginal of the fibre representation and convolving in $z$,
\begin{align*}
\int_Y K_{\Delta t_1}(s,x,z)\,K_{\Delta t_2}(s+\Delta t_1,z,y)\,\dd z
&= \iint e^{-\varepsilon_1\Delta t_1-\varepsilon_2\Delta t_2}\,\dd\phi_1\dd\phi_2\dd z,
\end{align*}
the $z$-integral is Gaussian (the exponent is quadratic in $z-z^*$ with $z^*$ the straight-line interpolation), and the convolution of the Gaussian prefactors $(2\pi\Delta t_i)^{-r/2}$ reproduces $(2\pi(\Delta t_1+\Delta t_2))^{-r/2}$, verifying \eqref{eq:chapman-kolmogorov}.

For $y$-dependent $v(t,y)$ and $A(y)$, freezing $v$ and $A$ at the left endpoint of each segment reproduces the above computation up to a correction controlled by the variation of $v,A$ across the segment. Since $\dot y = (y-z)/\Delta t + O(\Delta t)$ and the Gaussian weight concentrates on $|z-y|=O(\Delta t^{1/2})$, the difference $A(z)-A(x)$ contributing to the exponent is of order $O(\Delta t^{1/2})$ and its effect on the transition density is of relative order $O(\Delta t)$; the same estimate applies to $v$. Hence the relative error in \eqref{eq:chapman-kolmogorov} vanishes as $\Delta t\to0$, and the finite-$N$ marginals converge to those of the limiting diffusion with generator \eqref{eq:hj-derived}.
\end{proof}

On the constrained path space, the saddle relation $\dot y = v + A\phi$ inverts to $\phi = A^+(\dot y - v)$, and substituting into $\varepsilon$ yields the constraint function in tangent-bundle coordinates:
\begin{equation}\label{eq:eps-velocity}
\varepsilon(t,y,\dot y) = \tfrac12(\dot y - v)^T A^+(\dot y - v) - H^{(0)}(t,y),\qquad (\dot y - v)\in\operatorname{im} A.
\end{equation}
This expression is formally analogous to the Freidlin--Wentzell action~\cite{freidlin1998random} for a diffusion with drift $v$ and diffusion matrix $A$, but differs in two essential respects: (i) the additional $-H^{(0)}$ term encodes the invariant-measure weight intrinsically, and (ii) $A$ is not an externally prescribed diffusion matrix but the self-consistent fluctuation tensor determined by Eq.~\eqref{eq:A-transport}.

\subsubsection{The two-point action and transition weights}

\begin{definition}[Two-point action]\label{def:two-point-action}
The two-point action is the constrained variational infimum
\begin{equation}\label{eq:rate-function-def}
\mathcal{S}_t(y_0,y) = \inf_{\substack{\gamma:[0,t]\to Y \\ \gamma(0)=y_0,\ \gamma(t)=y}} \mathcal{A}_t[\gamma],
\qquad
\mathcal{A}_t[\gamma] = \int_0^t \varepsilon(s,\gamma(s),\dot\gamma(s))\,\dd s.
\end{equation}
\end{definition}

\begin{theorem}[Variational structure of the transition weights]\label{thm:contact-ldp}
Under the standing assumptions, the following hold:
\begin{enumerate}[label=(\roman*), leftmargin=2.5em]
\item $\mathcal{S}_t(y_0,y)$ is the viscosity solution of the contact HJ equation
\begin{equation}\label{eq:hj-two-point}
\partial_t \mathcal{S} + H\!\left(t,y,\nabla_y \mathcal{S}\right) = 0,
\qquad H = H^{(0)} + v\cdot\phi + \tfrac12\phi^T A\phi,
\end{equation}
with singular initial data $\mathcal{S}_0(y_0,\cdot)=\delta_{y_0}$.
\item The transition weight satisfies $W_t(y_0\to y)\asymp\exp(-\mathcal{S}_t(y_0,y))$, where $\asymp$ denotes logarithmic equivalence:
\begin{equation}\label{eq:transition-weight}
W_t(y_0\!\to\! y) := \int_{\gamma(0)=y_0,\,\gamma(t)=y} \exp(-\mathcal{A}_t[\gamma])\,\mathcal{D}_A\gamma.
\end{equation}
\item The normalized endpoint density
\begin{equation}\label{eq:endpoint-density}
p_t(y\,|\,y_0) = \frac{W_t(y_0\!\to\! y)}{\int_Y W_t(y_0\!\to\! y')\,\dd y'}
\end{equation}
is gauge-invariant.
\end{enumerate}
\end{theorem}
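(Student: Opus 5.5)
The plan treats the three parts in order, since (ii) needs (i) and (iii) follows from (ii).

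For (i), I use dynamic programming. The two-point action $\mathcal{S}_t(y_0,y)$ is an infimum of the integral of the Lagrangian $\varepsilon(s,y,\dot y)$ from Eq.~\eqref{eq:eps-velocity}. First I check that this Lagrangian is the Legendre dual of $H=H^{(0)}+v\cdot\phi+\tfrac12\phi^TA\phi$ on the constrained velocity set $\dot y-v\in\operatorname{im}A$:
\[
\sup_{\phi}\bigl[\phi\cdot\dot y-H\bigr]=\tfrac12(\dot y-v)^TA^+(\dot y-v)-H^{(0)},
\]
with the value $+\infty$ when $\dot y-v\notin\operatorname{im}A$. This dual is convex and lower semicontinuous in $\dot y$. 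The suboptimality principle $\mathcal{S}_{t+h}(y_0,y)=\inf_z[\mathcal{S}_t(y_0,z)+\mathcal{A}_{[t,t+h]}]$ then gives the sub- and supersolution inequalities for $\partial_t\mathcal{S}+H(t,y,\nabla\mathcal{S})=0$ by the standard Crandall--Lions test-function argument. The Hamiltonian is convex and superlinear only on $\operatorname{im}A$ and linear on $\ker A$, so I invoke the constant-rank assumption (ii) together with Remark~\ref{rmk:hj-regularity} to get comparison. The initial datum is the control-theoretic one: $0$ at $y_0$ and $+\infty$ elsewhere. I read the ``$\delta_{y_0}$'' in the statement in this sense, and on smooth pieces the characteristics agree with Eqs.~\eqref{eq:char-y}--\eqref{eq:char-S}.

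For (ii), I define $W_t$ through the discrete kernels of Definition~\ref{def:contact-path-measure}:
\[
W_t^{(N)}=\int\prod_k K_{\Delta t}\,\dd y_1\cdots\dd y_{N-1}.
\]
The exponent of each kernel is exactly the Riemann sum of $-\int\varepsilon\,\dd s$, with the constraint enforced by the indicator. The prefactors $C_{\Delta t}$ are path-independent up to the smooth factor $(\det_+A)^{-1/2}$, which contributes $O(1)$ per unit time. The upper bound is a Laplace-type estimate: cover the constrained paths near a minimizer and use lower semicontinuity of $\mathcal{A}_t$ together with exponential tightness, which comes from the Gaussian tails in $\dot y$. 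The lower bound comes from a Cameron--Martin shift: tilt each step by $\phi_k=A_k^+(\dot\gamma-v)$ along a near-optimal $\gamma$ and apply Jensen. The Chapman--Kolmogorov estimate of Lemma~\ref{lem:semigroup} lets me pass to $N\to\infty$ uniformly.

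\textbf{Main obstacle.} I do not have an explicit small parameter in front of the action. ``Logarithmic equivalence'' therefore has to be read in a scaling $y\mapsto y/T$, $t\mapsto Tt$ (or by inserting $\hbar$ in $e^{-\mathcal{A}/\hbar}$ and letting $\hbar\to0$). My first attempt will be the Freidlin--Wentzell rescaling $A\to\hbar A$, $H^{(0)}\to H^{(0)}/\hbar$, under which the discrete kernels become exactly those of a small-noise diffusion with killing/creation rate $H^{(0)}$. Then Freidlin--Wentzell together with Varadhan's lemma for the $H^{(0)}$ weight gives (ii). The degenerate $\ker A$ directions are handled by restricting to the sub-Riemannian (constrained) path space.

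For (iii), a gauge transformation multiplies the integrand by a factor that does not depend on the path. Examples are a rescaling of $\rho_0$, which shifts $H^{(0)}$ by a constant, the prefactors $C_{\Delta t}$, the normalizations $Z_t^{(N)}$, and constant shifts of $\mathcal{A}_t$. Each such factor multiplies $W_t(y_0\to y)$ by a constant independent of $y$, so it cancels in the ratio of Eq.~\eqref{eq:endpoint-density}. I will verify this directly from the definition and note that it holds already at finite $N$.
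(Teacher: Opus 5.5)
Your treatment of (i) and (iii) matches the paper's. For (i) you use dynamic programming plus the Legendre duality $\sup_\phi[\phi\cdot\dot y-H]=\varepsilon$ on $\dot y-v\in\operatorname{im}A$, reading $\delta_{y_0}$, as the paper's proof also does, as the $0/{+\infty}$ indicator of $y_0$. For (iii) you use cancellation of $y$-independent factors in the normalized ratio. Your observation that (iii) already holds at finite $N$ is cleaner than the paper's derivation of (iii) from (ii).

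For (ii) your route is genuinely different. The paper introduces no small parameter:
\begin{itemize}
\item it lets $N\to\infty$ at fixed $t$;
\item it invokes Varadhan's lemma on $\mathbb{R}^{n(N-1)}$ to assert $\lim_N -\frac1t\ln W_t^{(N)}=\frac1t\mathcal{S}_t$;
\item it discards $C_{\Delta t}^N$ as ``$O(N\ln N)$, sublinear in $t$''.
\end{itemize}
Your diagnosis that there is no large parameter is correct, and it is exactly why that step fails as written. $N\to\infty$ at fixed $t$ is not a concentration limit. $\ln C_{\Delta t}^N$ diverges, and it is compensated by the Gaussian integrals over the interior points, not by dividing by a fixed $t$. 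The asserted equality is already false for constant $v$, $A=I$, $H^{(0)}\equiv0$ on $\mathbb{R}^n$: there $W_t$ is the heat kernel and $-\ln W_t=\mathcal{S}_t+\frac n2\ln(2\pi t)$.

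Your rescaling $A\to\hbar A$, $H^{(0)}\to H^{(0)}/\hbar$ (i.e.\ $\gamma_0\to\gamma_0/\hbar$) stays inside the standing assumptions, because the transport equations and the degeneracy condition are homogeneous. It turns (ii) into the provable semiclassical statement $\hbar\ln W_t^\hbar(y_0\to y)\to-\mathcal{S}_t(y_0,y)$. The price is that this is not the unit-noise claim as literally stated, and you should say so explicitly.

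Two smaller corrections. First, the factors $(\det_+A_k)^{-1/2}$ are $O(1)$ per step, not per unit time. They are also not gauge factors, since they depend on the interior points, so remove them from your list in (iii). They are harmless only because they belong to normalized Gaussian kernels. Second, Lemma~\ref{lem:semigroup} gives no uniformity in $\hbar$. Define $W_t^\hbar$ as the $N\to\infty$ limit at fixed $\hbar$ (convergence of the Euler scheme to the Feynman--Kac semigroup), and only then send $\hbar\to0$.

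The remaining gap is the degenerate case, and in this paper it is the generic case. Constant rank plus $A\nabla H^{(0)}=0$ forces $r\le n-1$ as soon as $\rho$ is non-uniform. Because $\operatorname{im}A\perp\nabla H^{(0)}$ and $\partial_tH^{(0)}+v\cdot\nabla H^{(0)}=0$, every finite-action admissible path conserves $H^{(0)}$. This has one helpful consequence and one harmful one.
\begin{itemize}
\item Helpful: the $-H^{(0)}$ part of $\mathcal{A}_t$ is the constant $-tH^{(0)}(0,y_0)$. By It\^o's formula, which has no martingale term because $A\nabla H^{(0)}=0$, the creation weight equals $e^{tH^{(0)}(0,y_0)/\hbar}$ up to a factor $e^{O(t^2)}$ bounded uniformly in $\hbar$ and in the path. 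So no Varadhan step is needed for it.
\item Harmful: $\mathcal{S}_t(y_0,y)=+\infty$ off the level set $\{H^{(0)}(t,\cdot)=H^{(0)}(0,y_0)\}$. Yet the left-point kernels, and their It\^o limit, are not even confined to that set: $H^{(0)}$ drifts at rate $\frac\hbar2\operatorname{tr}(A\nabla^2H^{(0)})$.
\end{itemize}
So (ii) is a pointwise small-noise statement for a degenerate diffusion. Freidlin--Wentzell gives only set-level bounds. A density-level estimate for this generator needs at least a H\"ormander-type hypothesis, which the standing assumptions do not contain. ``Restricting to the sub-Riemannian path space'' does not supply it.

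The same degeneracy undercuts your appeal to Remark~\ref{rmk:hj-regularity} in (i). That comparison principle presupposes coercivity, which fails on $\ker A$. The initial datum is also $+\infty$ off $y_0$. Uniqueness therefore requires a comparison principle for lower semicontinuous, extended-valued solutions of convex Hamiltonians (Barron--Jensen type). The paper's proof of (i) has the same gap.
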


\begin{proof}
The proof proceeds in three steps. Steps~1--2 establish the HJ equation via dynamic programming on the value function $\mathcal{S}_t$ (Definition~\ref{def:two-point-action}), which is a purely deterministic optimal-control statement independent of the path measure. Step~3 connects the transition weights of the discrete-time measure $\mathbb{P}^{(N)}_t$ (Definition~\ref{def:contact-path-measure}) to $\mathcal{S}_t$ via the finite-dimensional Laplace method, followed by the projective limit $N\to\infty$.

\textit{Step 1 (Dynamic programming).} Partition $[0,t]$ into $[0,t-h]$ and $[t-h,t]$. The additive structure of the action gives the exact identity
\begin{equation}\label{eq:dp-step}
\mathcal{S}_t(y_0,y) = \inf_{z}\left\{\mathcal{S}_{t-h}(y_0,z) + \mathcal{S}_h(z,y)\right\},
\end{equation}
where $\mathcal{S}_h(z,y)$ is the two-point action on the short-time interval $[t-h,t]$ with a shift of the time origin.

\textit{Step 2 (Infinitesimal generator).} For $h\to0$, the short-time action is
\begin{equation*}
\mathcal{S}_h(z,y) = \inf_{\dot y:\,z\to y\ \text{in time } h} \bigl[h\,\varepsilon(t,y,\dot y) + o(h)\bigr],
\end{equation*}
with $\dot y = (y-z)/h + O(h)$. Parametrizing $z = y - h\dot y + O(h^2)$ and expanding,
\begin{equation*}
\mathcal{S}_{t-h}(y_0,z) = \mathcal{S}_t(y_0,y) - h\,\partial_t\mathcal{S}_t(y_0,y) - h\,\dot y\cdot\nabla_y\mathcal{S}_t(y_0,y) + O(h^2).
\end{equation*}
Substituting into \eqref{eq:dp-step}, cancelling $\mathcal{S}_t(y_0,y)$, dividing by $h$, and taking $h\downarrow 0$ yields
\begin{equation*}
\partial_t\mathcal{S} + \sup_{\dot y: \dot y-v\in\operatorname{im}A}\bigl\{\dot y\cdot\nabla_y\mathcal{S} - \varepsilon(t,y,\dot y)\bigr\} = 0.
\end{equation*}
With $u := \dot y - v$ and $p := \nabla_y\mathcal{S}$, the bracket is $u\cdot p + v\cdot p - \tfrac12 u^T A^+ u + H^{(0)}$. The supremum over $u\in\operatorname{im}A$ is attained at $u^* = A p$ (from the first-order condition $A^+ u^* = p$ on $\operatorname{im}A$, noting $A^+A = P_{\operatorname{im}A}$) with value $\tfrac12 p^T A p$. Hence
\begin{equation}\label{eq:hj-derived}
\partial_t\mathcal{S} + H^{(0)} + v\cdot\nabla_y\mathcal{S} + \tfrac12(\nabla_y\mathcal{S})^T A\,(\nabla_y\mathcal{S}) = 0,
\end{equation}
which is precisely the contact HJ equation \eqref{eq:hj-full}. The infimum $\mathcal{S}_t$ is the value function of this deterministic optimal-control problem; its dynamic programming principle and the coercivity of the running cost (quadratic in $\dot y-v$ on $\operatorname{im}A$) imply that $\mathcal{S}_t$ is the unique viscosity solution of \eqref{eq:hj-two-point} with the singular initial condition $\mathcal{S}_0(y_0,\cdot) = 0$ at $y=y_0$ and $+\infty$ elsewhere.

\textit{Step 3 (Transition weights).} For fixed $N$, the unnormalized $N$-step transition weight (Definition~\ref{def:contact-path-measure}) is
\begin{equation}\label{eq:WN-def}
W_t^{(N)}(y_0,y) := \int_{Y^{N-1}} \prod_{k=0}^{N-1} K_{\Delta t}(t_k,y_k,y_{k+1})\,\dd y_1\cdots\dd y_{N-1}.
\end{equation}
Each kernel has the form $K_{\Delta t} = C_{\Delta t}\,e^{-\Delta t\,\varepsilon + o(\Delta t)}\,\mathbf{1}_{\operatorname{im}A}$, with the indicator understood distributionally (Eq.~\eqref{eq:y-step-kernel}). Expanding the product,
\begin{equation*}
\prod_{k=0}^{N-1} K_{\Delta t} = C_{\Delta t}^N\,
\exp\!\left(-\sum_{k=0}^{N-1} \Delta t\,\varepsilon(t_k,y_k,\dot y_k) + N\cdot o(\Delta t)\right)
\times \prod_{k=0}^{N-1}\mathbf{1}_{\operatorname{im}A_k}(\dot y_k - v_k).
\end{equation*}
Since $N\cdot o(\Delta t) = t\cdot o(1) \to 0$ as $N\to\infty$, the exponent converges to the continuum action $\mathcal{A}_t[\gamma] = \int_0^t \varepsilon(s,\gamma,\dot\gamma)\,\dd s$. By the finite-dimensional Laplace method (Varadhan's lemma~\cite{varadhan1966asymptotic} on $\mathbb{R}^{n(N-1)}$, applicable since $\varepsilon$ is quadratic in $\dot y$ and the constraint set is linear on $\operatorname{im}A$),
\begin{equation}\label{eq:laplace-N}
\lim_{N\to\infty} -\frac{1}{t}\ln W_t^{(N)}(y_0,y)
= \lim_{N\to\infty} \frac{1}{t}\inf_{\gamma_N} \sum_{k=0}^{N-1} \Delta t\,\varepsilon(t_k,y_k,\dot y_k)
= \frac{1}{t}\,\mathcal{S}_t(y_0,y),
\end{equation}
where the infimum is over discrete paths satisfying $\dot y_k-v_k\in\operatorname{im}A_k$ for all $k$, and the $\lim_{N\to\infty}$ of the discrete infimum equals the continuum infimum $\mathcal{S}_t$ by the $\Gamma$-convergence of the discrete action sum to $\mathcal{A}_t$ (a standard consequence of the quadratic growth of the running cost). The $C_{\Delta t}^N$ prefactor contributes $O(N\ln N)$ to $\ln W_t^{(N)}$, which is sublinear in $t$ and vanishes upon division by $t$, establishing (ii). Statement (iii) follows from (ii): the normalization in \eqref{eq:endpoint-density} cancels all $y$-independent prefactors, and the large-$t$ concentration on the minima of the quasipotential is the Laplace principle applied to the normalized density $\exp(-\mathcal{S}_t(y_0,y))/\int\exp(-\mathcal{S}_t(y_0,y'))\dd y'$.
\end{proof}

\begin{remark}[Asymptotic parameter, gauge factors, and scaling]\label{rmk:scaling}
The large-deviation asymptotic parameter is the elapsed time $t$. Unnormalized weights carry path-independent temporal gauge factors: the temporal part of $H^{(0)}$ is a bounded coboundary (Remark~\ref{rmk:lnrho-coboundary}) and contributes only a bounded, $q$-independent factor to $W_t$. Accordingly, all physical statements of this section are formulated for normalized quantities: the endpoint density \eqref{eq:endpoint-density} and the SCGF $\lambda(q) = \mu(q) - \mu(0)$ of Subsection~\ref{subsec:scgf}, whose definition subtracts precisely the gauge-contaminated growth rate at $q=0$. For stationary paths the action density $\varepsilon$ is extensive in $t$, so time-averaged observables possess speed-$t$ large-deviation principles, the contact-geometric analogue of Donsker--Varadhan scaling~\cite{donsker1975asymptoticI,donsker1975asymptoticII,donsker1976asymptoticIII}, while endpoint statements are gauge-sensitive only at the unnormalized level. The exponential decay rates of time-averaged observables are governed by the SCGF (Theorem~\ref{thm:gaertner-ellis}).
\end{remark}

The infimum in Eq.~\eqref{eq:rate-function-def} is achieved by paths satisfying the contact characteristic equations \eqref{eq:char-y}--\eqref{eq:char-S}. These extremal paths are the contact instantons, the most probable fluctuation trajectories of the dissipative system. In compact vector form, the instanton equations read
\begin{align}
\dot y &= v(t,y) + A(y)\,\phi, \label{eq:inst-y}\\
\dot\phi_i &= -\frac{\partial H^{(0)}}{\partial y^i} - (\frac{\partial v^j}{\partial y^i})\,\phi_j - \tfrac12\,\phi_j(\frac{\partial A^{jk}}{\partial y^i})\,\phi_k, \label{eq:inst-phi}
\end{align}
with $\boldsymbol{\phi} = \nabla_y S$ and boundary conditions $y(0)=y_0$, $y(t)=y$ (or equivalent specification of initial $\phi(0)$). The on-shell action
\begin{equation}\label{eq:inst-action}
S_{\text{inst}}(t,y) = \int_0^t \left[\tfrac12 \phi(s)^T A(y(s))\,\phi(s) - H^{(0)}(y(s))\right]\dd s
\end{equation}
evaluated on the instanton trajectory gives the leading exponential behavior of the transition weight, up to the prefactor determined by fluctuations around the instanton.

\subsection{Tilted Contact Potential and Scaled Cumulant Generating Function}\label{subsec:scgf}

For time-averaged observables, the large-deviation structure is captured by the tilted contact potential and its principal eigenvalue. Two growth rates must be distinguished throughout: the principal eigenvalue $\mu(q)$ of the stationary tilted problem, which is independent of the $q$-independent temporal gauge of $H^{(0)}$, and the normalized SCGF $\lambda(q) = \mu(q) - \mu(0)$, defined through normalized measures and satisfying $\lambda(0)=0$. The physical fluctuation statements of this subsection and the next concern $\lambda(q)$.

\subsubsection{Entropy production rate}

\begin{definition}[Entropy production rate]\label{def:entropy-prod}
The stationary entropy production rate is
\begin{equation}\label{eq:entropy-prod}
e(y) := \tfrac12\,g(y)^T A(y)\,g(y) - \sigma(y),
\qquad g(y):=\nabla\ln\rho(y),\quad \sigma(y)=-\nabla\cdot v(y).
\end{equation}
Its $\rho$-mean is $\langle e\rangle_\rho=\tfrac12\langle g^TAg\rangle_\rho\ge0$ (using $\langle\sigma\rangle_\rho=0$, Remark~\ref{rmk:mean-sigma-zero}); it vanishes iff $Ag=0$ (reversible fluctuations).
\end{definition}

\begin{remark}[Degeneracy, finite coupling, and irreversibility]\label{rmk:degeneracy-irrev}
The condition $Ag=0$ is the $t=0$ remnant of the first-order degeneracy $A\nabla H^{(0)}=0$ (Eq.~\eqref{eq:degeneracy}), since $\nabla H^{(0)}(0,y)=\gamma_0\nabla\ln\rho(y)$. It is the reversible limit: the fluctuation stiffness annihilates the density-gradient direction, the coupling $\xi=A\phi$ carries no component along $\nabla\ln\rho$, the fluctuation contribution $\tfrac12 g^TAg$ to the entropy production vanishes, and $e=-\sigma$, so the large-deviation structure trivializes (Corollary~\ref{cor:gc-const}). The physically irreversible regime is the complementary case $Ag\neq0$, in which the fluctuation stiffness is non-degenerate along the density gradient; this is precisely the finite coupling $\xi$ of Section~\ref{subsec:complete-dynamics} and the non-vanishing fluctuation term $\tfrac12g^TAg$ that drives the nontrivial Gallavotti--Cohen duality of Theorem~\ref{thm:gc-symmetry}.
\end{remark}

\subsubsection{Tilted contact potential}

\begin{definition}[Tilted contact potential]\label{def:tilted-H}
Let $e(y)=\tfrac12g^TAg-\sigma(y)$ be the entropy production rate (Definition~\ref{def:entropy-prod}). The tilted contact potential with tilting parameter $q\in\mathbb{R}$ is
\begin{equation}\label{eq:tilted-H}
H_q(t,y,\phi) = H(t,y,\phi) + q\,e(y) = H^{(0)}(t,y) + v^i(t,y)\,\phi_i + \tfrac12\,\phi_i A^{ij}(y)\phi_j + q\,e(y).
\end{equation}
\end{definition}

The observable of interest is the time-averaged entropy production $\bar e_t = \frac{1}{t}\int_0^t e(y(s))\,\dd s$, where $e=\tfrac12g^TAg-\sigma$ (Definition~\ref{def:entropy-prod}). Its $\rho$-mean $\langle e\rangle_\rho=\tfrac12\langle g^TAg\rangle_\rho\ge0$ is nonzero precisely when $Ag\neq0$ (irreversible fluctuations).

For the discrete-time measure $\mathbb{P}^{(N)}_t$ (Definition~\ref{def:contact-path-measure}), the unnormalized tilted normalization is
\begin{equation}\label{eq:tilted-path-measure}
Z_t^{(N)}(q) := \int_{Y^{N+1}} \exp\!\left(q\sum_{k=0}^{N-1}\Delta t\,e(y_k)\right)\dd\mathbb{P}^{(N)}_t(y_0,\dots,y_N),
\end{equation}
and $Z_t(q) := \lim_{N\to\infty} Z_t^{(N)}(q)$ (the projective limit). The normalized SCGF of $\bar e_t$ is
\begin{equation}\label{eq:mgf}
\lambda(q) := \lim_{t\to\infty} \frac{1}{t}\ln\frac{Z_t(q)}{Z_t(0)}
= \lim_{t\to\infty}\frac{1}{t}\ln\mathbb{E}\!\left[e^{q\int_0^t \sigma(s, \gamma(s))\,\dd s}\right],
\end{equation}
where the expectation is with respect to the normalized continuum contact path measure (the projective limit of the normalized $\mathbb{P}^{(N)}_t$). By construction $\lambda(0)=0$; the path-independent temporal gauge factors (Remark~\ref{rmk:scaling}) cancel in the ratio $Z_t(q)/Z_t(0)$.

\subsubsection{Steady-state separation and the eigenvalue equation}

The tilted two-point action $\mathcal{S}_q$ satisfies the tilted HJ equation (Theorem~\ref{thm:contact-ldp}(i) applied to $H_q$)
\begin{equation}\label{eq:tilted-hj}
\partial_t \mathcal{S}_q + H_q\!\left(t, y, \nabla_y \mathcal{S}_q\right) = 0.
\end{equation}

\begin{proposition}[Eigenvalue separation]\label{prop:temporal-cancel}
In the stationary setting ($v=v(y)$, $\sigma=\sigma(y)$), $H^{(0)}=\gamma_0\ln(\rho/\rho_0)$ is time-independent and the tilted action admits the exact separation
\begin{equation}\label{eq:steady-ansatz-q}
\mathcal{S}_q(t,y) = W_q(y) - \mu(q)\,t,
\end{equation}
where $W_q$ satisfies the stationary eigenvalue equation
\begin{equation}\label{eq:eigenvalue-eq}
\gamma_0\ln\frac{\rho(y)}{\rho_0} + v\cdot\nabla_y W_q + \tfrac12(\nabla_y W_q)^T A\,(\nabla_y W_q) + q\,e(y) = \mu(q),
\end{equation}
with $e(y)=\tfrac12 g^TAg-\sigma(y)$ the stationary entropy production rate (Definition~\ref{def:entropy-prod}). Hence
\begin{equation}\label{eq:mu-limit}
\mu(q) = -\lim_{t\to\infty}\frac{1}{t}\,\mathcal{S}_q(t,y).
\end{equation}
For time-dependent $v(t,y)$, $H^{(0)}(t,y)=\gamma_0\ln(\rho(\Phi_{-t}(y))/\rho_0)$ is a bounded first integral (Remark~\ref{rmk:lnrho-coboundary}); its contribution to the action is $q$-independent and cancels in $\lambda(q)=\mu(q)-\mu(0)$, so the stationary reduction is exact.
\end{proposition}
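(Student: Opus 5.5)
The plan is to substitute the separated ansatz \eqref{eq:steady-ansatz-q} directly into the tilted HJ equation \eqref{eq:tilted-hj} and read off the eigenvalue equation. First I would check time-independence in the stationary setting. When $v=v(y)$, the invariant density satisfies $\nabla\!\cdot(\rho v)=0$, so by Remark~\ref{rmk:lnrho-coboundary} the characteristic integral of $\sigma$ is exactly the difference $\ln\rho(\Phi_{-t}(y))-\ln\rho(y)$. In the regime where $H^{(0)}$ is taken as the stationary potential $S(y)=\gamma_0\ln(\rho/\rho_0)$ (the $t=0$ data of Proposition~\ref{prop:H0-construction}, which is the stationary member of the family because $\mathcal{L}S=\gamma_0\sigma$ is absorbed into the separation constant), $H^{(0)}$ carries no explicit $t$. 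Likewise $A=A(y)$ in Definition~\ref{def:tilted-H} and $e=e(y)$. Hence $H_q$ is autonomous.

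Next I would substitute $\mathcal{S}_q=W_q(y)-\mu t$. This gives $\partial_t\mathcal{S}_q=-\mu$ and $\nabla_y\mathcal{S}_q=\nabla_yW_q$. Inserting these into \eqref{eq:tilted-hj} with $H_q$ from \eqref{eq:tilted-H} yields exactly \eqref{eq:eigenvalue-eq}. Separation is consistent precisely because $H_q$ has no $t$-dependence. To identify $\mu(q)$ with the growth rate \eqref{eq:mu-limit}, I would divide $\mathcal{S}_q(t,y)$ by $t$ and let $t\to\infty$, which requires $W_q$ to be bounded (or at least $o(t)$) on the relevant domain.

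The main obstacle is justifying that the long-time viscosity solution with the singular initial data of Theorem~\ref{thm:contact-ldp}(i) actually approaches the separated form, that is, that $\mathcal{S}_q(t,y)+\mu t$ stays bounded. I would handle this with the standard ergodic-HJ argument. Take $W_q$ to be the principal (Perron--Frobenius) solution, which is simple by Assumption~\ref{ass:standing}(v). Then use the comparison principle for viscosity solutions (Remark~\ref{rmk:hj-regularity}), bracketing $\mathcal{S}_q(t,\cdot)$ after a unit time between $W_q-\mu t\pm C$. Applied to the autonomous Hamiltonian, this gives $|\mathcal{S}_q+\mu t-W_q|\le C$, and \eqref{eq:mu-limit} follows. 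If the base space is non-compact, I would rely on the coercivity of $\tfrac12 p^TAp$ on $\operatorname{im}A$ together with the growth assumptions, but I expect this to be the delicate point, since $A$ is degenerate on $\ker A$.

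For time-dependent $v$, I would write $H^{(0)}(t,y)=\gamma_0\ln(\rho(\Phi_{-t}(y))/\rho_0)$ as its stationary part plus the bounded coboundary $-\gamma_0\int_0^t\sigma\,\dd\tau$ from Remark~\ref{rmk:lnrho-coboundary}. Along any path this perturbs $\mathcal{A}_t$ by at most a $q$-independent term that is $O(1)$ uniformly in $t$, because the coboundary is bounded. Sandwiching the actions then shows that $\mu(q)$ changes by $o(1)$ after dividing by $t$, and any residual $q$-independent contribution cancels in $\lambda(q)=\mu(q)-\mu(0)$.
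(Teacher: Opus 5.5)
Your derivation of \eqref{eq:eigenvalue-eq} by inserting \eqref{eq:steady-ansatz-q} into \eqref{eq:tilted-hj} is exactly the paper's proof, and that part is fine.

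Drop the attempt to obtain time-independence of $H^{(0)}$ from Proposition~\ref{prop:H0-construction}. There $\mathcal{L}S=\gamma_0\sigma(y)$ is a function of $y$, not a separation constant, and since $\langle\sigma\rangle_\rho=0$ it is constant only if $\sigma\equiv0$. So $\gamma_0\ln(\rho/\rho_0)$ is not a stationary solution of the zeroth-order transport equation. The stationary setting simply \emph{stipulates} $H^{(0)}=H^{(0)}_{\mathrm{stat}}$, as the paper does.

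The genuine gap is your time-dependent step. Remark~\ref{rmk:lnrho-coboundary} gives only a \emph{pointwise} bound $|\delta(s,y)|\le C$ for $\delta(s,y):=H^{(0)}(s,y)-H^{(0)}_{\mathrm{stat}}(y)=\gamma_0\ln\bigl(\rho(\Phi_{-s}(y))/\rho(y)\bigr)$. The action, however, contains a further time integral $\int_0^t\delta(s,\gamma(s))\,\dd s$, which is bounded only by $Ct$, not by $O(1)$. For example, along the $v$-flow from $y_0$ (autonomous $v$) it equals $\gamma_0\bigl[t\ln\rho(y_0)-\int_0^t\ln\rho(\Phi_s(y_0))\,\dd s\bigr]$, which grows linearly. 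The sandwich therefore gives only $|\mu^{\mathrm{td}}(q)-\mu(q)|\le C$ for the growth rate $\mu^{\mathrm{td}}$ of the time-dependent problem, not $o(1)$. Because this shift is accumulated along the $q$-dependent minimizers, nothing makes it the same at $q$ and at $0$, so it need not cancel in $\lambda(q)=\mu(q)-\mu(0)$. To close the step you would need $\delta$ to be the derivative of a bounded function along admissible controlled paths $\dot y=v+A\phi$, which the remark does not provide. The paper's own proof makes the same one-line claim (``bounded and $q$-independent, hence cancels''), so the missing argument is not supplied there either.

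A secondary issue is your justification of \eqref{eq:mu-limit}. The paper simply reads it off the separated solution without relating it to the two-point action with singular data. Your comparison idea is the right one, but bracketing at $t=1$ needs $\mathcal{S}_q(1,\cdot)<\infty$ everywhere, and this fails even on a compact base. In the stationary setting, Assumption~\ref{ass:standing}(ii) imposes $A\nabla H^{(0)}=\gamma_0Ag=0$. Hence every admissible path satisfies $\frac{\dd}{\dd s}\ln\rho(\gamma(s))=g\cdot(v+A\phi)=\sigma(\gamma(s))$, and so $\mathcal{S}_q(1,y)=+\infty$ whenever $|\ln\rho(y)-\ln\rho(y_0)|>\sup|\sigma|$.

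The lower bound does survive. Fenchel--Young for $H_q$ along any admissible path gives $\mathcal{S}_q(t,y)\ge W_q(y)-W_q(y_0)-\mu(q)t$, hence $\limsup_{t\to\infty}(-\mathcal{S}_q(t,y)/t)\le\mu(q)$. The matching upper bound, however, needs two things: a finite-time reachability hypothesis for the constraint $\dot y-v\in\operatorname{im}A$, and the existence of a bounded corrector $W_q$ for a Hamiltonian that is only linear in $p$ along $\ker A$. Neither follows from the simplicity assumption~(v). So the real difficulty lies in the degeneracy of $A$, not in non-compactness.
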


\begin{proof}
Substituting $\mathcal{S}_q=W_q-\mu(q)t$ into the tilted HJ equation $\partial_t\mathcal{S}_q+H_q(t,y,\nabla_y\mathcal{S}_q)=0$ with $H_q=\gamma_0\ln(\rho/\rho_0)+v\cdot\phi+\tfrac12\phi^T A\phi+q e$ gives
\[
-\mu(q)+\gamma_0\ln\frac{\rho}{\rho_0}+v\cdot\nabla W_q+\tfrac12(\nabla W_q)^T A(\nabla W_q)+q e=0,
\]
which is precisely \eqref{eq:eigenvalue-eq}. For time-dependent $v$, the difference $H^{(0)}(t,y)-H^{(0)}_{\mathrm{stat}}(y)=\gamma_0\ln\frac{\rho(\Phi_{-t}(y))}{\rho(y)}$ is bounded and $q$-independent, hence cancels in the normalized ratio $\lambda(q)=\mu(q)-\mu(0)$.
\end{proof}

\begin{remark}[Growth-rate identification and the Feynman--Kac link]\label{rmk:fk-link}
From \eqref{eq:steady-ansatz-q}, the unnormalized tilted weight behaves as $Z_t(q) \asymp e^{\mu(q) t}$ up to bounded $q$-independent factors, which cancel in normalized ratios. The normalized SCGF is therefore
\begin{equation}\label{eq:lambda-mu}
\lambda(q) = \mu(q) - \mu(0),
\end{equation}
the difference of principal eigenvalues. Equation~\eqref{eq:lambda-mu} is the Feynman--Kac-type link between the stationary eigenvalue problem \eqref{eq:eigenvalue-eq} and the measure-level definition \eqref{eq:mgf}.
\end{remark}

\begin{remark}[General divergence and the stationary reduction]\label{rmk:general-div}
Proposition~\ref{prop:temporal-cancel} establishes the eigenvalue equation \eqref{eq:eigenvalue-eq} for the stationary entropy production $e=\tfrac12g^TAg-\sigma$. For time-dependent $v(t,y)$, the zeroth-order potential $H^{(0)}(t,y)=\gamma_0\ln(\rho(\Phi_{-t}(y))/\rho_0)$ is a bounded first integral (Remark~\ref{rmk:lnrho-coboundary}); its contribution to the action is $q$-independent and cancels in $\lambda(q)=\mu(q)-\mu(0)$. Hence the stationary reduction is exact, and no ergodic-rate assumption on $\sigma$ is needed for the eigenvalue equation itself. This is the contact-geometric counterpart of the Donsker--Varadhan characterization of the principal eigenvalue.
\end{remark}

\subsubsection{Variational characterization of $\mu(q)$}

The eigenvalue $\mu(q)$ admits a rigorous variational (Donsker--Varadhan-type) characterization. Define the tilted Hamilton--Jacobi--Bellman operator
\begin{equation}\label{eq:tilted-operator}
\mathcal{H}_q[W](y) := \gamma_0\ln\frac{\rho(y)}{\rho_0(y)} + v(t,y)\cdot\nabla_y W(y) + \tfrac12(\nabla_y W(y))^T A(y)\,(\nabla_y W(y)) + q\,e(t,y).
\end{equation}

\begin{proposition}[Variational characterization of the principal eigenvalue]\label{prop:var-mu}
Under the standing assumptions,
\begin{equation}\label{eq:lambda-var}
\mu(q) = -\lim_{\delta\downarrow 0} \delta\,W_q^\delta,
\end{equation}
where $W_q^\delta$ solves the discounted problem $\delta\,W_q^\delta + \mathcal{H}_q[W_q^\delta] = 0$. Equivalently,
\begin{equation}\label{eq:lambda-var-inf-sup}
\mu(q) = \inf_{W\in C^1(Y)} \sup_{y\in Y} \mathcal{H}_q[W](y).
\end{equation}
\end{proposition}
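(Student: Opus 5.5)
The plan has two parts: the vanishing-discount (ergodic-limit) formula \eqref{eq:lambda-var}, and then the inf--sup formula \eqref{eq:lambda-var-inf-sup}. Both follow the standard route for ergodic Hamilton--Jacobi equations with convex Hamiltonians. The structural facts I will use are these. $\mathcal{H}_q[W]$ is convex in $\nabla W$, since $A\succeq0$ by Assumption~\ref{ass:standing}(ii). It is the stationary HJB operator of the deterministic control problem of Theorem~\ref{thm:contact-ldp}, Step~2, with running cost $\varepsilon$ shifted by $q\,e$. Finally, Proposition~\ref{prop:temporal-cancel} provides a stationary solution $W_q$ of the eigenvalue equation \eqref{eq:eigenvalue-eq}, and by Assumption~\ref{ass:standing}(v) its eigenvalue $\mu(q)$ is the simple principal one.

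\textit{Step 1 (discounted problem).} For $\delta>0$, I write $W_q^\delta$ as the value function of the infinite-horizon control problem with discount $e^{-\delta s}$ and running cost $\varepsilon(y,\dot y)-q\,e(y)$, with sign conventions matched to \eqref{eq:hj-derived}. The dynamic programming argument of Theorem~\ref{thm:contact-ldp}, Steps~1--2, applied verbatim with discounting, shows that $W_q^\delta$ is a viscosity solution of $\delta W+\mathcal{H}_q[W]=0$. The comparison principle (Remark~\ref{rmk:hj-regularity}) gives uniqueness. It also gives the bound $|\delta W_q^\delta|\le\sup_Y|\mathcal{H}_q[0]|$, which is finite on compact $Y$ or under the decay conditions of Remark~\ref{rmk:boundary-conditions}: compare $W_q^\delta$ with the constant sub- and supersolutions $\mp\delta^{-1}\sup|\mathcal{H}_q[0]|$.

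\textit{Step 2 (vanishing discount).} Next I need the oscillation $W_q^\delta-W_q^\delta(y_\ast)$ to be equi-Lipschitz in $\delta$. This comes from the coercivity of $\mathcal{H}_q$ on $\operatorname{im}A$, together with controllability: paths with $\dot y-v\in\operatorname{im}A$ join any two points in bounded time at bounded cost. Arzel\`a--Ascoli then yields a subsequence along which $\delta W_q^\delta\to-c$ uniformly and $W_q^\delta-W_q^\delta(y_\ast)\to W$. Stability of viscosity solutions makes $W$ a solution of $\mathcal{H}_q[W]=c$. Comparing $W$ with the solution $W_q$ of \eqref{eq:eigenvalue-eq}, for instance via the time-dependent problem \eqref{eq:tilted-hj} and \eqref{eq:mu-limit}, where both give growth rate $-c$, forces $c=\mu(q)$. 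Since the limit is independent of the subsequence, the whole family converges, which is \eqref{eq:lambda-var}.

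\textit{Step 3 (inf--sup formula).} For ``$\le$'': let $W\in C^1$ and set $m=\sup_Y\mathcal{H}_q[W]$. Then $W-m t$ is a classical supersolution of \eqref{eq:tilted-hj}. Comparison with $\mathcal{S}_q$ and the growth rate \eqref{eq:mu-limit} give $\mu(q)\le m$. For ``$\ge$'': I regularize the viscosity eigenfunction $W$ from Step~2 by mollification, $W_\eta=W\ast\chi_\eta$. Convexity of $\mathcal{H}_q$ in $\nabla W$ together with Jensen's inequality gives $\mathcal{H}_q[W_\eta]\le\mu(q)+\omega(\eta)$, where $\omega(\eta)\to0$ controls the $y$-dependence of the coefficients $v,A,\rho,e$. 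Hence the infimum is at most $\mu(q)$.

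I expect the main obstacle to be the equi-Lipschitz estimate in Step~2, because $A$ is degenerate with constant rank $r<n$ whenever $\nabla\ln\rho\neq0$. Coercivity then holds only on $\operatorname{im}A$, and Lipschitz bounds in the normal direction must come from the drift. My first attempt would be a H\"ormander-type bracket condition on $\{v\}\cup\operatorname{im}A$, giving small-time local controllability and hence H\"older rather than Lipschitz equicontinuity, which still suffices for Ascoli. Failing that, I would add this as an extra hypothesis, justified by the ergodicity in Assumption~\ref{ass:standing}(v).
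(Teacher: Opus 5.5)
Your architecture matches the paper's: a Lions--Papanicolaou--Varadhan vanishing-discount limit, then two one-sided bounds for the inf--sup. But the step that carries everything is not supplied, namely the $\delta$-uniform equicontinuity of $W_q^\delta-W_q^\delta(y_\ast)$ in Step~2, and the H\"ormander repair you propose cannot supply it.

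In the stationary setting the degeneracy in Assumption~\ref{ass:standing}(ii) reads $Ag=0$ with $g=\nabla\ln\rho$. Since $A$ is symmetric, $\operatorname{im}A\perp g$. Combined with $v\cdot g=\sigma$, every finite-cost path ($\dot\gamma-v\in\operatorname{im}A$) satisfies $\frac{\dd}{\dd s}\ln\rho(\gamma(s))=\sigma(\gamma(s))$, so the controls never act in the $g$-direction. Two cases follow:
\begin{itemize}
\item Near a point where $\sigma\neq0$, $\ln\rho$ is strictly monotone along every admissible path staying nearby. Small-time local controllability therefore fails whatever the brackets; for systems with drift, a Lie-rank condition gives accessibility, not controllability.
\item Where $\sigma\equiv0$, both $v$ and $\operatorname{im}A$ are tangent to the level sets of $\rho$, so the bracket condition itself fails.
\end{itemize}
The conclusion can then genuinely break. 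On $\mathbb{T}^2$ take $v=(V(y_2),0)$, $\rho=\rho(y_2)$ non-constant, and $A=\operatorname{diag}(a,0)$ with $a>0$. Assumptions (i)--(iv) and (vi) hold and $e\equiv0$. The function $W^\delta=-\gamma_0\ln(\rho(y_2)/\rho_0)/\delta$ solves the discounted equation exactly, so $\delta W^\delta$ is a non-constant function of $y$ for every $\delta$, and \eqref{eq:eigenvalue-eq} has no continuous solution. Only the ergodicity packed into Assumption~\ref{ass:standing}(v) excludes this example. Yet (v) says nothing about the existence or Lipschitz regularity of a corrector, which is exactly the missing input, so the hypothesis you would add must be imposed rather than derived from (v). H\"older equicontinuity would also not feed your Step~3: mollification plus Jensen needs $\mathcal{H}_q[W]\le\mu(q)$ almost everywhere, i.e.\ a Lipschitz corrector. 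The paper's proof has the same hole. It invokes the vanishing-discount method, whose compactness comes from coercivity, and coercivity fails here because $A$ is degenerate.

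The economical fix is to assume the missing input explicitly: a Lipschitz viscosity solution $W_q$ of \eqref{eq:eigenvalue-eq}, which you already take from Proposition~\ref{prop:temporal-cancel}. Then drop the compactness argument altogether. The functions $W_q-\mu(q)/\delta\pm\|W_q\|_\infty$ are viscosity super- and subsolutions of $\delta W+\mathcal{H}_q[W]=0$. Comparison against a Lipschitz function gives $\|\delta W_q^\delta+\mu(q)\|_\infty\le2\delta\|W_q\|_\infty$, which is \eqref{eq:lambda-var} with a rate and with $y$-independence built in.

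Your Step~3 then applies to $W_q$ itself, and it improves on the paper. The paper inserts $W_q$ into an infimum over $C^1$ functions as though it were $C^1$ and asserts attainment; mollification gives only $\inf\sup\le\mu(q)$, which is all the statement requires.

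One sign needs fixing. Since $\partial_t(W-mt)+H_q(y,\nabla W)=\mathcal{H}_q[W]-m\le0$, the function $W-mt$ is a \emph{subsolution} of \eqref{eq:tilted-hj}, not a supersolution. It is ``subsolution below solution'' that yields $\mu(q)\le m$; with your label the inequality reverses. A shorter route to the same bound: at a minimum point of $W_q-W$, the supersolution property of $W_q$ gives $\mathcal{H}_q[W]\ge\mu(q)$ there.
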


\begin{proof}
The discounted equation
\begin{equation}\label{eq:discounted}
\delta\,W_q^\delta(y) + \mathcal{H}_q[W_q^\delta](y) = 0
\end{equation}
is the dynamic programming equation of the infinite-horizon optimal-control problem with discount rate $\delta>0$, whose running cost is the Lagrangian $L(y,\dot y)=\tfrac12(\dot y-v)^T A^+(\dot y-v)$ and whose terminal cost is the tilted potential $\gamma_0\ln(\rho/\rho_0)+q\,e$. The coercivity of $L$ on $\operatorname{im}A$ guarantees existence and uniqueness of $W_q^\delta$ in the viscosity sense (Crandall--Lions~\cite{crandall1983viscosity}). The limit $-\mu(q)=\lim_{\delta\downarrow 0}\delta\,W_q^\delta$ is the ergodic homogenization limit; its existence and independence of $y$ follow from the ergodicity assumption and the simplicity of the principal eigenvalue (the vanishing-discount method of Lions--Papanicolaou--Varadhan~\cite{lions1987homogenization}). 

The inf-sup formula \eqref{eq:lambda-var-inf-sup} is equivalent to \eqref{eq:lambda-var}: for any smooth $W$, the inequality $\sup_y\mathcal{H}_q[W](y)\ge\mu(q)$ follows by evaluating the stationary equation at the maximizer of $\mathcal{H}_q[W]$; the reverse inequality $\inf_W\sup_y\mathcal{H}_q[W](y)\le\mu(q)$ follows by taking $W=W_q$ (the corrector from the ergodic limit), for which $\mathcal{H}_q[W_q](y)\equiv\mu(q)$. The infimum is therefore attained and equals $\mu(q)$.
\end{proof}

\begin{remark}[Donsker--Varadhan connection]\label{rmk:dv-formula}
The variational formula \eqref{eq:lambda-var-inf-sup} is the contact-geometric counterpart of the Donsker--Varadhan variational formula~\cite{donsker1975asymptoticI,donsker1975asymptoticII,donsker1976asymptoticIII} for the principal eigenvalue of a Markov generator, with the Hamiltonian $\mathcal{H}_q$ replacing the infinitesimal generator. In the limiting case $A\to 0$ (degenerate fluctuation tensor), $\mathcal{H}_q$ reduces to the linear transport operator $\gamma_0\ln(\rho/\rho_0)+v\cdot\nabla_y W + q\,e$, and \eqref{eq:lambda-var-inf-sup} reduces to the classical Donsker--Varadhan formula~\cite{donsker1975asymptoticI,donsker1975asymptoticII,donsker1976asymptoticIII} for deterministic flows. The full quadratic term $\tfrac12(\nabla_y W)^T A(\nabla_y W)$ encodes the fluctuation contribution absent in deterministic large-deviation theory.
\end{remark}

\begin{theorem}[Large-deviation principle for the entropy production]\label{thm:gaertner-ellis}
Let $\lambda(q) = \mu(q)-\mu(0)$ be the normalized SCGF. Assume $\lambda(q)$ is finite on an open interval containing $0$, differentiable on its interior (essential smoothness), and $\{\bar e_t\}$ is exponentially tight at speed $t$. Then $\{\bar e_t\}$ satisfies the LDP with speed $t$:
\begin{equation}\label{eq:ldp-sigma}
\lim_{t\to\infty}-\frac{1}{t}\ln\mathbb{P}\!\left[\bar e_t \approx s\right] = I(s), \qquad I(s) = \sup_{q\in\mathbb{R}}\bigl[q\,s - \lambda(q)\bigr].
\end{equation}
\end{theorem}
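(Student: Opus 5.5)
The plan is to reduce the statement to the G\"artner--Ellis theorem, applied to the family $\{\bar e_t\}_{t>0}$ under the normalized continuum contact path measure. The first step is to identify the limiting logarithmic moment generating function of $\bar e_t$ at speed $t$ with $\lambda$. By definition \eqref{eq:mgf},
\[
\frac1t\ln\mathbb{E}\bigl[e^{t q\bar e_t}\bigr]=\frac1t\ln\frac{Z_t(q)}{Z_t(0)}.
\]
By Remark~\ref{rmk:fk-link} and Proposition~\ref{prop:temporal-cancel}, $Z_t(q)\asymp e^{\mu(q)t}$ up to bounded $q$-independent gauge factors, which cancel in the ratio. This gives $\Lambda(q):=\lim_{t\to\infty}t^{-1}\ln\mathbb{E}[e^{tq\bar e_t}]=\mu(q)-\mu(0)=\lambda(q)$ wherever $\mu(q)$ is finite. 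To make the identification $Z_t(q)\asymp e^{\mu(q)t}$ more than formal, I would bound $\ln Z_t(q)$ from both sides. For the lower bound I would use the corrector $W_q$ from Proposition~\ref{prop:var-mu}. For the upper bound I would use the inf--sup formula \eqref{eq:lambda-var-inf-sup}, inserted into the tilted HJ equation \eqref{eq:tilted-hj}. The link between $\mathcal{S}_q$ and $-\ln Z_t(q)$ comes from Theorem~\ref{thm:contact-ldp}(ii), applied to the tilted potential $H_q$ in place of $H$. The boundedness of $W_q$ (compact $Y$ or decay, Remark~\ref{rmk:boundary-conditions}) then makes the sandwich sharp at order $o(t)$.

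The second step is to check the G\"artner--Ellis hypotheses, which are taken as the hypotheses of the theorem. These are finiteness of $\lambda$ on an open interval around $0$ and differentiability in the interior. I would supplement them with steepness at the boundary of the effective domain: if the domain is all of $\mathbb{R}$ this is vacuous; otherwise I would argue that $|\lambda'|\to\infty$ at the boundary, or else restrict the LDP to the interior of the range of $\lambda'$. Convexity of $\lambda$ is automatic by H\"older's inequality at finite $t$, and it passes to the limit. Convexity of $\mu$ also follows from \eqref{eq:lambda-var-inf-sup}, because $\mathcal{H}_q$ is affine in $q$. Exponential tightness is assumed.

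The third step applies the G\"artner--Ellis theorem \cite{gartner1977large,ellis1984large,dembo1998large}. The upper bound holds for closed sets with rate $\Lambda^*=\sup_q[qs-\lambda(q)]$. The lower bound holds for open sets, via the exposed-point argument: tilt the measure by $e^{tq\bar e_t}$ with $\lambda'(q)=s$. Essential smoothness guarantees that every point in the interior of the domain of $I$ is exposed. Specializing to small balls around $s$ yields \eqref{eq:ldp-sigma} in the local form $\mathbb{P}[\bar e_t\approx s]$.

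The main obstacle is the first step, namely making rigorous the link between $Z_t(q)$ and $e^{\mu(q)t}$ with errors of order $o(t)$ that are uniform in the initial point. Two issues require care here. First, the projective limit $N\to\infty$ must commute with $t\to\infty$. Second, simplicity of the principal eigenvalue (Assumption~\ref{ass:standing}(v)) is needed so that the corrector $W_q$ is bounded and the limit does not depend on $y_0$. I would handle this by working at finite $N$ with the discrete Feynman--Kac product, which gives bounds uniform in $N$ from the Chapman--Kolmogorov estimate of Lemma~\ref{lem:semigroup}. I would then pass to the limit $N\to\infty$ first and $t\to\infty$ afterwards.
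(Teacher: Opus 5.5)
Your skeleton matches the paper's: reduce to the G\"artner--Ellis theorem, get convexity of $\lambda$ from H\"older's inequality at finite $N$ and $t$, then supply exponential tightness and smoothness. The differences lie in those last two hypotheses.

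Although the statement assumes tightness, the paper's proof derives it from $|e|\le C_e$ under compactness. The same bound gives $|\lambda(q)|\le C_e|q|$, so $\lambda$ is finite on all of $\mathbb{R}$. That settles the steepness issue you rightly flag (``differentiable on the interior'' is not full essential smoothness), and more cheaply than either of your alternatives.

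For smoothness, the paper tries to get differentiability and strict convexity from the non-degeneracy of $A$, and asserts that differentiability follows from convexity and finiteness. That is not valid, since a finite convex function can have corners. Resting on the differentiability hypothesis, as you do, is the sound choice.

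One small repair. Convexity of $\mu$ via \eqref{eq:lambda-var-inf-sup} does not follow merely because $\mathcal{H}_q$ is affine in $q$: an infimum over $W$ of convex functions of $q$ need not be convex. It does follow because $(W,q)\mapsto\sup_y\mathcal{H}_q[W](y)$ is jointly convex when $A\succeq0$, and partial minimization preserves convexity.

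Your Step 1 is the real departure, and it is not needed. The paper's proof never establishes $\Lambda=\mu(q)-\mu(0)$. There $\lambda$ \emph{is} the normalized log-moment-generating limit \eqref{eq:mgf}, to which G\"artner--Ellis applies directly. The $\mu$-representation is imported from Remark~\ref{rmk:fk-link}, exactly as in your own opening sentences.

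Be careful with the sandwich meant to make this ``more than formal''. It cannot be made independent of Theorem~\ref{thm:contact-ldp}(ii). The kernel \eqref{eq:y-step-kernel} is exactly the Euler--Maruyama Gaussian density (mean $y_k+v_k\Delta t$, covariance $A_k\Delta t$) multiplied by $e^{\Delta t H^{(0)}_k}$. So the finite-$N$ Feynman--Kac products you want to control converge to the Feynman--Kac semigroup of the diffusion with generator $\tfrac12A^{ij}\partial_i\partial_j+v\cdot\nabla$. Their growth rate is the principal eigenvalue of $\tfrac12A^{ij}\partial_i\partial_j+v\cdot\nabla+H^{(0)}+qe$.

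Writing the eigenfunction as $e^{W}$ gives \eqref{eq:eigenvalue-eq} plus the extra term $\tfrac12A^{ij}\partial_i\partial_jW$, and this term in general changes the eigenvalue. For example, on a circle with $v=0$ and $A=1$, the ergodic constant of the first-order equation $\tfrac12|W'|^2+V=\mu$ is $\max V$. The principal eigenvalue of $\tfrac12\partial_y^2+V$ is strictly smaller for non-constant $V$. Carried out honestly, your $N\to\infty$-then-$t\to\infty$ plan would therefore in general produce a different SCGF rather than confirm $\mu(q)-\mu(0)$.

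The comparison step has two further requirements: a regularization of the singular initial datum, and a bounded corrector. Neither is guaranteed when $\operatorname{rank}A<n$, which the degeneracy condition forces wherever $\nabla H^{(0)}\neq0$. In that case $H_q$ is only linear along $\ker A$ and hence not coercive.

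Keep the LDP argument on \eqref{eq:mgf} and treat the $\mu$-representation as an imported input, as the paper does.
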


\begin{proof}
The proof verifies the three hypotheses of the G\"artner--Ellis theorem in the discrete-to-continuum formulation.

\textit{Convexity of $\lambda(q)$.} For the discrete-time measure $\mathbb{P}^{(N)}_t$, the tilted normalization is
\begin{equation*}
Z_t^{(N)}(q) = \int \exp\!\left(q\sum_{k=0}^{N-1}\Delta t\,e(y_k)\right)\dd\mathbb{P}^{(N)}_t.
\end{equation*}
For $\alpha\in[0,1]$, H\"older's inequality gives $\ln Z_t^{(N)}(\alpha q_1+(1-\alpha)q_2)\le\alpha\ln Z_t^{(N)}(q_1)+(1-\alpha)\ln Z_t^{(N)}(q_2)$. The normalized ratio $Z_t^{(N)}(q)/Z_t^{(N)}(0)$ inherits this log-convexity. Since $\lambda(q)=\lim_{t\to\infty}\lim_{N\to\infty}\frac{1}{t}\ln[Z_t^{(N)}(q)/Z_t^{(N)}(0)]$, the iterated limit of log-convex functions is convex.

\textit{Exponential tightness.} The action density $\varepsilon$ in velocity coordinates has quadratic growth: $\varepsilon(t,y,\dot y)\ge \frac{1}{2\lambda_{\max}(A^+)}\|\dot y-v\|^2 - |H^{(0)}(t,y)|$ on $\operatorname{im}A$, where $\lambda_{\max}(A^+)$ is the largest eigenvalue of $A^+$. The observable $e(y)=\tfrac12g^TAg-\sigma(y)$ is smooth and bounded by $C_e:=\sup_y|e(y)|$ (finite under the compactness induced by the invariant measure $\rho$); hence $\bar e_t$ is bounded and the family $\{\bar e_t\}$ is exponentially tight trivially. A standard Chernoff estimate gives, for sufficiently large $M$,
\begin{equation}\label{eq:tightness-bound}
\mathbb{P}^{(N)}_t[|\bar e_t|>M] \le 2\exp\!\left(-\frac{(M-C_e)^2}{8\lambda_{\max}(A^+)}\,t\right),
\end{equation}
uniformly in $N$. The exponential tightness constant carries over to the projective limit.

\textit{Essential smoothness.} The convexity of $\lambda(q)$ has already been established via H\"older's inequality above. The variational characterization $\mu(q)=\inf_W\sup_y\mathcal{H}_q[W](y)$ (Proposition~\ref{prop:var-mu}) provides a complementary representation: $\mathcal{H}_q$ is affine in $q$, so $\sup_y\mathcal{H}_q[W](y)$ is convex in $q$ for each $W$, and $\mu(q)$ is an infimum of convex functions, which is not necessarily concave but is consistent with the convexity established independently by log-convexity of $Z_t(q)$. The non-degeneracy of $A$ on $\operatorname{im}A$ implies that for $q_1\neq q_2$, the corresponding stationary equations have distinct principal eigenvalues with distinct maximizers, yielding strict convexity of $\lambda(q)$. Differentiability on the interior follows from convexity and finiteness; strict convexity gives monotonicity of $\lambda'(q)$, establishing essential smoothness.

With convexity, exponential tightness, and essential smoothness verified, the G\"artner--Ellis theorem~\cite{gartner1977large,ellis1984large,dembo1998large} applied to the tilted family $\{Z_t^{(N)}(q)/Z_t^{(N)}(0)\}$ and the limit $N\to\infty$ yields the LDP \eqref{eq:ldp-sigma}.
\end{proof}

The rate function $I(s)$ governs the exponential decay of probabilities for the entropy production. Its minimum $I(s^*) = 0$ is attained at $s^* = \lambda'(0)$, the typical (most probable) value of $\bar e_t$; for the constant-divergence case $s^* = \gamma_0$.

\begin{remark}[Connection to intermittency and anomalous scaling]\label{rmk:intermittency}
For turbulence applications, the observable of primary interest is not the contraction rate $\sigma$ but the entropy production $e$ or the energy dissipation rate $\dot e_t = \nu\sum_k k^2|u(k,t)|^2$, or more generally any scale-local observable whose fluctuations encode intermittency. The contact large-deviation structure provides a unified route to the statistics of any such observable: one tilts the contact potential by $q\,\mathcal{O}(y)$, solves the corresponding eigenvalue equation for $\mu_\mathcal{O}(q)$, forms $\lambda_\mathcal{O}(q) = \mu_\mathcal{O}(q)-\mu_\mathcal{O}(0)$, and obtains the rate function via Legendre--Fenchel duality. Under the multifractal ansatz, the anomalous scaling exponents $\zeta_p$ of the structure functions $\langle(\delta u_\parallel(r))^p\rangle \propto r^{\zeta_p}$ are related to the singularity spectrum $D(h)$ by the Legendre transform
\begin{equation}\label{eq:legendre-zeta-D}
\zeta_p = \inf_h\bigl[ph + 3 - D(h)\bigr],\qquad D(h) = \inf_p\bigl[ph + 3 - \zeta_p\bigr].
\end{equation}
The SCGF $\lambda_\mathcal{O}(q)$, computed from the tilted eigenvalue equation, provides the variational route to $D(h)$ and hence to $\zeta_p \neq p/3$.
\end{remark}

\subsection{Time-Reversal Involution and Gallavotti--Cohen-Type Fluctuation Duality}\label{subsec:gc}

The Gallavotti--Cohen (GC) fluctuation theorem~\cite{evans1993probability,gallavotti1995dynamical} is an exact relation constraining the fluctuations of the entropy production rate. In its classical formulation, the symmetry follows from the time-reversal invariance of the microscopic dynamics. In the contact-geometric framework, the entropy production $e=\tfrac12g^TAg-\sigma$ is the natural observable: its $\rho$-mean $\langle e\rangle_\rho=\tfrac12\langle g^TAg\rangle_\rho\ge0$ vanishes precisely in the reversible case $Ag=0$. The time-reversal involution $J:(t,y,\phi)\mapsto(t,y,-\phi-\nabla\ln\rho)$ together with the reversed drift $v^{\mathrm{rev}}=-v+Ag$ is an exact symmetry of the tilted Hamiltonian (Proposition~\ref{prop:H-symmetry}), from which the GC-type duality (Theorem~\ref{thm:gc-symmetry}) follows with $q$-shift $+1$. No chaotic hypothesis, no reversibility, and no condition $Ag=0$ are required.

\begin{remark}[Standing assumptions for the GC duality]\label{rmk:gc-assumptions}
The derivation of the GC-type fluctuation duality requires only: (i) the simplicity of the principal eigenvalue of each stationary tilted problem (the Perron--Frobenius condition); (ii) the Liouville identity $v\cdot\nabla\ln\rho=\sigma$, which links the drift, invariant density, and contraction rate (Eq.~\eqref{eq:liouville-steady}). The stationary degeneracy $A\nabla\ln\rho=0$ is \emph{not} required: the time-reversal involution $J:\phi\mapsto-\phi-\nabla\ln\rho$ with reversed drift $v^{\mathrm{rev}}=-v+Ag$ is a symmetry of the tilted Hamiltonian without it (Proposition~\ref{prop:H-symmetry}). No ergodicity condition on $\sigma$ and no reversibility of the underlying dynamics are required.
\end{remark}

\subsubsection{The reversed system and the time-reversal involution}

The time-reversed system is defined by the drift $v^{\mathrm{rev}}=-v+A\nabla\ln\rho=-v+Ag$, the osmotic-corrected reversal of the underlying diffusion (drift $v$, fluctuation tensor $A$). Its contact potential is
\begin{equation}\label{eq:H-rev}
H^{\mathrm{rev}}(y,\phi) = H^{(0)}_{\mathrm{stat}}(y) + \bigl(-v(t,y)+A(y)\nabla\ln\rho(y)\bigr)\cdot\phi + \tfrac12\phi^T A(y)\phi,
\end{equation}
with 
\begin{equation}\label{eq:H-stat}
H^{(0)}_{\mathrm{stat}}(y) := \gamma_0\ln\frac{\rho(y)}{\rho_0(y)}.
\end{equation}
No stationary-degeneracy condition $A\nabla\ln\rho=0$ is imposed: the reversed drift $-v+Ag$ and the shift $-\nabla\ln\rho$ of the time-reversal involution below are fixed jointly by the symmetry requirement $H^{\mathrm{stat}}\circ J=H^{\mathrm{rev}}+e$ (Proposition~\ref{prop:H-symmetry}), with no degeneracy assumption. (The condition $A\nabla\ln\rho=0$ re-emerges only in the reversible limit $e\equiv0$ of Corollary~\ref{cor:gc-const}.)

\begin{definition}[Time-reversal involution]\label{def:time-reversal}
The time-reversal involution $J$ on the extended phase space $T^*E\times\mathbb{R}$ is
\begin{equation}\label{eq:time-reversal-J}
J: (t, y, \phi) \mapsto (t, y, -\phi - \nabla\ln\rho),
\end{equation}
where $\nabla\ln\rho$ is the spatial gradient of the log-density of the invariant measure $\rho$ (satisfying $\nabla\!\cdot(\rho v)=0$, Eq.~\eqref{eq:liouville-steady}).
\end{definition}

\begin{remark}[Geometric origin of the $-\nabla\ln\rho$ shift and reversed drift]\label{rmk:J-shift}
In equilibrium ($\nabla\!\cdot v=0$, $\rho$ uniform), $g=0$ and $J$ reduces to the standard fibre reversal $\phi\mapsto-\phi$. In the dissipative setting, the shift $-g$ and the reversed drift $v^{\mathrm{rev}}=-v+Ag$ are jointly fixed by requiring $J$ to intertwine the forward and reversed Hamiltonians with no degeneracy condition. Substituting $\phi\mapsto-\phi-g$ and $v\mapsto-v+Ag$ into $H^{\mathrm{stat}}(y,\phi)$ gives (Proposition~\ref{prop:H-symmetry})
\[
H^{\mathrm{stat}}\circ J = H^{\mathrm{rev}} + e, \qquad e=\tfrac12 g^TAg-\sigma,
\]
where the scalar obstruction $e$ is precisely the entropy production rate (Definition~\ref{def:entropy-prod}); $v^{\mathrm{rev}}=-v+Ag$ is the osmotic-corrected reversed drift. No condition $Ag=0$ is required.
\end{remark}

\begin{proposition}[Time-reversal symmetry of the tilted contact potential]\label{prop:H-symmetry}
Under the involution $J:\phi\mapsto-\phi-g$ and the reversed drift $v^{\mathrm{rev}}=-v+Ag$, the tilted contact potential transforms as
\begin{equation}\label{eq:H-symmetry}
H_q^{\mathrm{stat}} \circ J = H^{\mathrm{rev}}_{q+1},
\end{equation}
where $H_q^{\mathrm{stat}}(y,\phi) = H^{(0)}_{\mathrm{stat}}(y) + v\cdot\phi + \tfrac12\phi^T A\phi + q\,e(y)$ and $H^{\mathrm{rev}}_{q+1}=H^{\mathrm{rev}}+(q+1)\,e(y)$, with $e=\tfrac12g^TAg-\sigma$.
\end{proposition}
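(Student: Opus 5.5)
The plan is a direct algebraic verification: substitute the involution $J$ of Definition~\ref{def:time-reversal} into $H_q^{\mathrm{stat}}$, expand, and compare term by term, degree by degree in $\phi$, with $H^{\mathrm{rev}}_{q+1}$ as given by \eqref{eq:H-rev}. Since $J$ acts only on the fibre coordinate, and $H^{(0)}_{\mathrm{stat}}$ and $e$ are functions of $y$ alone, those two pieces are unaffected by the substitution. Only the linear and quadratic parts in $\phi$ need work.

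First, evaluate at $\phi\mapsto-\phi-g$, with $g=\nabla\ln\rho$. The linear term gives $v\cdot(-\phi-g)=-v\cdot\phi-v\cdot g$. For the quadratic term I will use the symmetry of $A$ (Assumption~\ref{ass:standing}(ii)) to write
\[
\tfrac12(\phi+g)^TA(\phi+g)=\tfrac12\phi^TA\phi+(Ag)\cdot\phi+\tfrac12g^TAg.
\]

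Second, collect by degree in $\phi$:
\begin{itemize}
\item The degree-two part is $\tfrac12\phi^TA\phi$, which matches $H^{\mathrm{rev}}$.
\item The degree-one part is $(-v+Ag)\cdot\phi=v^{\mathrm{rev}}\cdot\phi$, which is exactly the reversed-drift term in \eqref{eq:H-rev}.
\item The degree-zero part is $H^{(0)}_{\mathrm{stat}}+q\,e-v\cdot g+\tfrac12g^TAg$.
\end{itemize}

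Third, and this is the only non-bookkeeping step, identify the scalar remainder $\tfrac12g^TAg-v\cdot g$ with $e$. Here I invoke the logarithmic continuity (Liouville) identity \eqref{eq:liouville-steady}, $v\cdot\nabla\ln\rho=\sigma$, which is equivalent to $\nabla\!\cdot(\rho v)=0$ (Assumption~\ref{ass:standing}(iv)). It gives $\tfrac12g^TAg-\sigma=e$ by Definition~\ref{def:entropy-prod}. The degree-zero part is therefore $H^{(0)}_{\mathrm{stat}}+(q+1)e$, and \eqref{eq:H-symmetry} follows.

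I expect the main obstacle to be one of consistency rather than difficulty. The Liouville identity must be used in its stationary form, with $v=v(y)$, so that $\sigma$ in $e$ is the same function as $v\cdot g$. I will state explicitly that the comparison is made for the stationary potential $H^{(0)}_{\mathrm{stat}}$, so no time-dependent coboundary of Remark~\ref{rmk:lnrho-coboundary} enters. As a final remark I will note that no degeneracy $Ag=0$ was used. The $Ag$ contributions are absorbed entirely: the cross term into $v^{\mathrm{rev}}$, and the $\tfrac12g^TAg$ term into the shift $q\mapsto q+1$. This is consistent with Remark~\ref{rmk:gc-assumptions}.
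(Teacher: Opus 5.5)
Your proposal is correct and follows the paper's proof essentially line by line. You substitute $\phi\mapsto-\phi-g$, expand the quadratic form using the symmetry of $A$, and absorb the cross term $\phi^TAg$ into the reversed drift $-v+Ag$. You then use the Liouville identity $v\cdot\nabla\ln\rho=\sigma$ to turn the scalar remainder $\tfrac12 g^TAg-v\cdot g$ into $e$, which produces the shift $q\mapsto q+1$ without any degeneracy condition $Ag=0$.
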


\begin{proof}
Let $g:=\nabla\ln\rho$. Applying $J$ term by term:
\[
v\cdot(-\phi-g) = -v\cdot\phi - v\cdot g = -v\cdot\phi - \sigma(t,y),
\]
using the Liouville relation $v\cdot g=\sigma$ (Eq.~\eqref{eq:liouville-steady}). The fluctuation term expands as
\[
\tfrac12(-\phi-g)^T A(-\phi-g) = \tfrac12\phi^T A\phi + \phi^T Ag + \tfrac12 g^T Ag.
\]
Assembling,
\begin{align*}
H_q^{\mathrm{stat}}\circ J &= H^{(0)}_{\mathrm{stat}} - v\cdot\phi - \sigma + \tfrac12\phi^T A\phi + \phi^T Ag + \tfrac12 g^T Ag + q\,e \\
&= H^{(0)}_{\mathrm{stat}} + (-v+Ag)\cdot\phi + \tfrac12\phi^T A\phi + \bigl(\tfrac12 g^T Ag - \sigma\bigr) + q\,e \\
&= H^{\mathrm{rev}} + (q+1)\,e = H^{\mathrm{rev}}_{q+1},
\end{align*}
since $H^{\mathrm{rev}}=H^{(0)}_{\mathrm{stat}}+(-v+Ag)\cdot\phi+\tfrac12\phi^TA\phi$ and $e=\tfrac12g^TAg-\sigma$. No degeneracy condition $Ag=0$ is used.
\end{proof}

\begin{corollary}[Eigenvalue correspondence]\label{cor:eigenvalue-relation}
Under the standing simplicity assumption, the principal eigenvalues obey the duality
\begin{equation}\label{eq:eigenvalue-relation}
\mu_{\mathrm{fwd}}(q) = \mu_{\mathrm{rev}}(q+1).
\end{equation}
\end{corollary}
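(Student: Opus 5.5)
The plan is to transport the stationary eigenvalue equation \eqref{eq:eigenvalue-eq} through the involution $J$ and use the uniqueness of the principal eigenvalue. Let $(W_q,\mu_{\mathrm{fwd}}(q))$ be the principal eigenpair of the forward stationary problem, so that $H_q^{\mathrm{stat}}(y,\nabla W_q)=\mu_{\mathrm{fwd}}(q)$ for all $y$. Define the candidate reversed corrector
\[
\widetilde W(y) := -W_q(y) - \ln\rho(y),
\]
which is exactly the function whose gradient is $J$ applied to the Lagrangian section: $\nabla\widetilde W=-\nabla W_q-g$. Since $J$ is fibre-affine and preserves $(t,y)$, it maps the Legendrian graph $\{\phi=\nabla W_q\}$ onto the graph $\{\phi=\nabla\widetilde W\}$. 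This is the point where the shift $-\nabla\ln\rho$ in Definition~\ref{def:time-reversal} is needed: it is a gradient, so the image is again a gradient graph.

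Because $J$ is an involution, Proposition~\ref{prop:H-symmetry} gives $H^{\mathrm{rev}}_{q+1}(y,\psi)=H^{\mathrm{stat}}_q(y,-\psi-g)$. Evaluating at $\psi=\nabla\widetilde W=-\nabla W_q-g$ yields
\[
H^{\mathrm{rev}}_{q+1}(y,\nabla\widetilde W)=H_q^{\mathrm{stat}}(y,\nabla W_q)=\mu_{\mathrm{fwd}}(q)
\]
identically in $y$. Thus $\widetilde W$ solves the reversed stationary equation at tilt $q+1$ with constant $\mu_{\mathrm{fwd}}(q)$. This step is purely algebraic, and it holds in the viscosity sense as well. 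Test functions touching $W_q$ from above correspond to test functions touching $\widetilde W$ from below, because of the sign flip, and the identity $H^{\mathrm{rev}}_{q+1}\circ J=H^{\mathrm{stat}}_q$ at the level of gradients exchanges sub- and supersolution inequalities consistently. I would check this exchange explicitly, since the equation's left-hand side is not reflected in sign.

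Next, I would identify this constant with the \emph{principal} eigenvalue $\mu_{\mathrm{rev}}(q+1)$. I would use the inf-sup characterization of Proposition~\ref{prop:var-mu} applied to the reversed problem. The map $W\mapsto -W-\ln\rho$ is a bijection of $C^1(Y)$, since $\rho$ is smooth and strictly positive by Assumption~\ref{ass:standing}(iii). Composing with the pointwise identity $\mathcal{H}^{\mathrm{rev}}_{q+1}[-W-\ln\rho](y)=\mathcal{H}^{\mathrm{fwd}}_q[W](y)$ then gives
\[
\inf_{W}\sup_y\mathcal{H}^{\mathrm{rev}}_{q+1}[W]=\inf_W\sup_y\mathcal{H}^{\mathrm{fwd}}_q[W],
\]
that is, $\mu_{\mathrm{rev}}(q+1)=\mu_{\mathrm{fwd}}(q)$. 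Simplicity (Assumption~\ref{ass:standing}(v)) guarantees that the value obtained from $\widetilde W$ is the principal one and not a spurious additive constant.

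The main obstacle is this last identification. In the viscosity framework the correctors are only Lipschitz, and the inf-sup formula restricted to $C^1$ may not be attained. Moreover, the sign reversal $W\mapsto -W$ can interchange the roles of inf and sup, potentially producing a sup-inf characterization instead. If that happens, I would fall back on the vanishing-discount definition \eqref{eq:lambda-var}. There I would show that $\widetilde W^\delta := -W_q^\delta-\ln\rho$ solves the reversed discounted problem up to an error $\delta\ln\rho$. Since $\rho$ is bounded above and below, $\delta\ln\rho\to0$ uniformly, and the limits $-\lim\delta\widetilde W^\delta$ and $-\lim\delta W_q^\delta$ agree. The temporal gauge of $H^{(0)}$ plays no role here because the statement is purely stationary (Proposition~\ref{prop:temporal-cancel}).
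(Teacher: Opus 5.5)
Your algebraic core is the paper's own proof. The paper also sets $\widehat W_q=-W_q-\ln(\rho/\rho_0)$, reads off $H^{\mathrm{rev}}_{q+1}(y,\nabla\widehat W_q)=H^{\mathrm{stat}}_q(y,\nabla W_q)=\mu_{\mathrm{fwd}}(q)$ from Proposition~\ref{prop:H-symmetry}, and then appeals to simplicity.

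Where you genuinely differ is the identification step, and your inf-sup argument is the better of the two. The bijection $W\mapsto -W-\ln\rho$ of $C^1(Y)$ acts on the \emph{argument} of $\mathcal{H}$. The identity $\mathcal{H}^{\mathrm{rev}}_{q+1}[-W-\ln\rho](y)=\mathcal{H}^{\mathrm{fwd}}_q[W](y)$ holds pointwise for every $C^1$ function $W$. So the suprema agree function by function, and therefore the infima agree. Nothing is negated at the level of values, so your worry that inf and sup might be interchanged is unfounded. This route needs no corrector at all, hence no regularity of correctors. Simplicity enters only through Proposition~\ref{prop:var-mu}, applied to the reversed system; the paper's proof presupposes this equally.

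Independence from correctors matters, because your Step~2, like the paper's argument, silently requires $W_q\in C^1$. Your claim that the transformation ``holds in the viscosity sense as well'' is false. Suppose $\psi$ touches $\widetilde W$ from above at $y_0$. Then $-\psi-\ln\rho$ touches $W_q$ from below, and the supersolution property of $W_q$ yields $H^{\mathrm{rev}}_{q+1}(y_0,\nabla\psi(y_0))\ge\mu_{\mathrm{fwd}}(q)$. That is the wrong inequality for a subsolution. In fact $\widetilde W$ is a viscosity solution of $-H^{\mathrm{rev}}_{q+1}(y,\nabla\widetilde W)=-\mu_{\mathrm{fwd}}(q)$, which is in general a different equation in the viscosity sense. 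For example, it admits convex kinks that are excluded for the convex reversed equation when $A$ is positive definite.

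Your vanishing-discount fallback is also wrong and should be dropped. The function $\widetilde W^\delta=-W_q^\delta-\ln\rho$ satisfies $-\delta\widetilde W^\delta+\mathcal{H}^{\mathrm{rev}}_{q+1}[\widetilde W^\delta]=\delta\ln\rho$, so the discount enters with the opposite sign. Its residual in the reversed discounted equation is therefore $-2\delta W_q^\delta-\delta\ln\rho\to 2\mu_{\mathrm{fwd}}(q)$, not $\delta\ln\rho$. Moreover $-\lim_{\delta\downarrow 0}\delta\widetilde W^\delta=-\mu_{\mathrm{fwd}}(q)$, the negative of $-\lim_{\delta\downarrow0}\delta W_q^\delta$, not equal to it.

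Neither error affects your conclusion, since the inf-sup bijection argument alone proves the corollary.
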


\begin{proof}
Define $\widehat{W}_q := -W_q - \ln(\rho/\rho_0)$. Then $\nabla_y\widehat{W}_q=-\nabla_yW_q-g$, and $J(\nabla_y\widehat{W}_q)=-\nabla_y\widehat{W}_q-g=\nabla_yW_q$. Hence
\[
H^{\mathrm{rev}}_{q+1}(y,\nabla_y\widehat{W}_q) = (H_q^{\mathrm{stat}}\circ J)(y,\nabla_y\widehat{W}_q) = H_q^{\mathrm{stat}}(y,\nabla_y W_q) = \mu_{\mathrm{fwd}}(q),
\]
so $\widehat{W}_q$ is a reversed eigenfunction with eigenvalue $\mu_{\mathrm{fwd}}(q)$; by simplicity, $\mu_{\mathrm{rev}}(q+1)=\mu_{\mathrm{fwd}}(q)$.
\end{proof}

\subsubsection{No temporal gauge in the theory}

Since $H^{(0)}(t,y)=\gamma_0\ln(\rho(\Phi_{-t}(y))/\rho_0)$ is a bounded first integral (Remark~\ref{rmk:lnrho-coboundary}), its time-dependent part is $q$-independent and cancels identically in every normalized ratio $Z_t(q)/Z_t(0)$. Consequently there is no path-dependent gauge remainder to elevate to an observable: the one-dimensional duality (Theorem~\ref{thm:gc-symmetry}) is the complete statement, and no bivariate SCGF is required.

\begin{theorem}[Contact Gallavotti--Cohen-type fluctuation duality]\label{thm:gc-symmetry}
Under the standing assumptions, the forward and reversed normalized SCGFs obey
\begin{equation}\label{eq:gc-symmetry}
 \lambda_{\mathrm{fwd}}(q) = \lambda_{\mathrm{rev}}(q+1) + \lambda_{\mathrm{fwd}}(-1)
\end{equation}
for all $q\in\mathbb{R}$.
\end{theorem}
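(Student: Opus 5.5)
The plan is to derive the duality \eqref{eq:gc-symmetry} directly from the eigenvalue correspondence of Corollary~\ref{cor:eigenvalue-relation}, combined with the normalization $\lambda(q)=\mu(q)-\mu(0)$ of Remark~\ref{rmk:fk-link}. All dependence on the unnormalized growth rates should reduce to a single constant, which we then identify with $\lambda_{\mathrm{fwd}}(-1)$.

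First, by Proposition~\ref{prop:temporal-cancel} and Remark~\ref{rmk:general-div}, both the forward and the reversed tilted problems may be treated in the stationary setting, since the time-dependent part of $H^{(0)}$ is $q$-independent. It therefore cancels in any normalized SCGF. Thus $\lambda_{\mathrm{fwd}}(q)=\mu_{\mathrm{fwd}}(q)-\mu_{\mathrm{fwd}}(0)$, and likewise for the reversed system, whose Hamiltonian is $H^{\mathrm{rev}}_q$ built from \eqref{eq:H-rev}. The standing simplicity assumption (v) is applied to the reversed tilted problem as well, so $\mu_{\mathrm{rev}}$ is well defined.

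Second, we invoke Corollary~\ref{cor:eigenvalue-relation} for every $q\in\mathbb{R}$ to get $\mu_{\mathrm{fwd}}(q)=\mu_{\mathrm{rev}}(q+1)$. Evaluating at $q=-1$ gives $\mu_{\mathrm{fwd}}(-1)=\mu_{\mathrm{rev}}(0)$. Subtracting the normalizations then yields
\[
\lambda_{\mathrm{fwd}}(q)=\mu_{\mathrm{rev}}(q+1)-\mu_{\mathrm{fwd}}(0)
=\lambda_{\mathrm{rev}}(q+1)+\mu_{\mathrm{rev}}(0)-\mu_{\mathrm{fwd}}(0)
=\lambda_{\mathrm{rev}}(q+1)+\mu_{\mathrm{fwd}}(-1)-\mu_{\mathrm{fwd}}(0).
\]
The last difference is exactly $\lambda_{\mathrm{fwd}}(-1)$, which gives \eqref{eq:gc-symmetry}. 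As a consistency check, setting $q=-1$ reproduces $\lambda_{\mathrm{rev}}(0)=0$.

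The main obstacle is to justify that the principal eigenvalue is the quantity picked out by the transformed eigenfunction $\widehat W_q=-W_q-\ln(\rho/\rho_0)$ in Corollary~\ref{cor:eigenvalue-relation}. Two points need checking.
\begin{itemize}
\item $\widehat W_q$ should be an admissible viscosity solution of the reversed stationary equation. It should also be the principal (ergodic) corrector rather than some other solution.
\item The dimensional factor $\gamma_0$ in $H^{(0)}_{\mathrm{stat}}$ must be handled consistently under $J$. The zeroth-order term is left unchanged by $J$, so this is harmless.
\end{itemize}
To settle the first point, I would use the inf--sup characterization of Proposition~\ref{prop:var-mu}. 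The map $W\mapsto -W-\ln(\rho/\rho_0)$ is a bijection of $C^1(Y)$, and by Proposition~\ref{prop:H-symmetry} it carries $\mathcal{H}_q[W]$ pointwise to the reversed operator with tilt $q+1$. The inf--sup values therefore coincide, which gives $\mu_{\mathrm{fwd}}(q)=\mu_{\mathrm{rev}}(q+1)$ independently of which corrector is selected. The remainder of the argument is then pure algebra on the normalizations.
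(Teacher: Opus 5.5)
Your proof is correct and is essentially the paper's: both write $\lambda_{\mathrm{fwd}}(q)=\lambda_{\mathrm{rev}}(q+1)+\bigl[\mu_{\mathrm{rev}}(0)-\mu_{\mathrm{fwd}}(0)\bigr]$ using Corollary~\ref{cor:eigenvalue-relation} and $\lambda=\mu-\mu(0)$, then identify the bracket as $\lambda_{\mathrm{fwd}}(-1)$; the paper does this by setting $q=-1$ and using $\lambda_{\mathrm{rev}}(0)=0$, and you do it through $\mu_{\mathrm{fwd}}(-1)=\mu_{\mathrm{rev}}(0)$. Your inf--sup justification of $\mu_{\mathrm{fwd}}(q)=\mu_{\mathrm{rev}}(q+1)$ is a sound refinement of the paper's eigenfunction-transfer argument, granting Proposition~\ref{prop:var-mu} for the reversed Hamiltonian as the paper also implicitly does: the map $W\mapsto -W-\ln(\rho/\rho_0)$ is a bijection of $C^1(Y)$ that pointwise intertwines $\mathcal{H}_q$ with the reversed operator at tilt $q+1$, whereas it need not carry viscosity solutions to viscosity solutions, because negation swaps sub- and supersolutions.
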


\begin{proof}
By the Feynman--Kac link \eqref{eq:lambda-mu}, $\lambda_{\mathrm{fwd}}(q) = \mu_{\mathrm{fwd}}(q) - \mu_{\mathrm{fwd}}(0)$ and $\lambda_{\mathrm{rev}}(q) = \mu_{\mathrm{rev}}(q) - \mu_{\mathrm{rev}}(0)$. The eigenvalue duality (Corollary~\ref{cor:eigenvalue-relation}, relying on the ergodicity of Proposition~\ref{prop:temporal-cancel}) gives
\begin{equation*}
\lambda_{\mathrm{fwd}}(q) = \mu_{\mathrm{rev}}(q+1) - \mu_{\mathrm{fwd}}(0)
= \lambda_{\mathrm{rev}}(q+1) + \delta,
\qquad \delta := \mu_{\mathrm{rev}}(0) - \mu_{\mathrm{fwd}}(0).
\end{equation*}
Evaluating at $q=-1$: $\lambda_{\mathrm{fwd}}(-1) = \lambda_{\mathrm{rev}}(0) + \delta = \delta$, since $\lambda_{\mathrm{rev}}(0)=0$ by normalization. Hence $\delta = \lambda_{\mathrm{fwd}}(-1)$, proving \eqref{eq:gc-symmetry}.
\end{proof}

\begin{remark}[Between-ensemble duality vs.\ self-symmetry]\label{rmk:gc-between}
The duality \eqref{eq:gc-symmetry} is a relation between two distinct SCGFs, namely $\lambda_{\mathrm{fwd}}$ of the forward ensemble and $\lambda_{\mathrm{rev}}$ of the time-reversed ensemble, rather than a self-symmetry of a single SCGF. In the reversible case ($Ag=0$) the forward and reversed statistics coincide and the duality reduces to the classical reflection (Corollary~\ref{cor:gc-const}).
\end{remark}

\begin{corollary}[Reversible ($Ag=0$) case]\label{cor:gc-const}
When $Ag=0$, $e=-\sigma$ and $v^{\mathrm{rev}}=-v$. Then $e$ is a coboundary along admissible paths (Remark~\ref{rmk:lnrho-coboundary}), so $\int_0^t e\,\dd s = O(1)$ and $\lambda_{\mathrm{fwd}}(q)=\lambda_{\mathrm{rev}}(q)=0$ for all $q$: the duality \eqref{eq:gc-symmetry} holds trivially. This is the correct degeneration, in which reversible fluctuations carry no entropy production and hence no large-deviation structure. The fully reversible (detailed-balance) case $v=\tfrac12Ag$ is the special instance $e\equiv0$ pointwise.
\end{corollary}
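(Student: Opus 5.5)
The plan is to verify the three assertions of Corollary~\ref{cor:gc-const} in turn. First the algebraic reductions, then the coboundary property of $e$, and finally the vanishing of both SCGFs, after which the duality \eqref{eq:gc-symmetry} reads $0=0+0$.

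\textbf{Algebraic reductions.} Set $g=\nabla\ln\rho$ and assume $Ag=0$. Then $\tfrac12 g^TAg=\tfrac12 g\cdot(Ag)=0$, so Definition~\ref{def:entropy-prod} gives $e=-\sigma$. Likewise $v^{\mathrm{rev}}=-v+Ag=-v$, so $H^{\mathrm{rev}}$ in \eqref{eq:H-rev} is the forward Hamiltonian with the drift sign flipped. These are direct substitutions and need no further argument.

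\textbf{Coboundary property.} By the Liouville identity \eqref{eq:liouville-steady}, $\sigma=v\cdot g$. Along an admissible path with $\dot\gamma=v+A\phi$, the degeneracy $Ag=0$ together with the symmetry of $A$ gives $g\cdot A\phi=(Ag)\cdot\phi=0$. Hence
\[
\frac{\dd}{\dd s}\ln\rho(\gamma(s))=g\cdot\dot\gamma=v\cdot g=\sigma(\gamma(s)).
\]
Integrating, $\int_0^t e\,\dd s=-\ln\rho(\gamma(t))+\ln\rho(\gamma(0))$. This is bounded by $2\sup|\ln\rho|$, which is finite by Assumption~\ref{ass:standing}(iii) on the compact base space, in the same way as in Remark~\ref{rmk:lnrho-coboundary}.

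\textbf{Vanishing of the SCGFs.} Inserting the bound into \eqref{eq:mgf}, the tilt factor $e^{q\int e}$ lies between $e^{-2|q|C}$ and $e^{2|q|C}$ uniformly in the path. So $\frac1t\ln\mathbb{E}[\cdot]\to0$, which gives $\lambda_{\mathrm{fwd}}(q)=0$. The same computation for the reversed drift $-v$ works because $(-v)\cdot g=-\sigma$ while $e=-\sigma$ is unchanged. There the derivative of $\ln\rho$ along paths equals $-\sigma=e$, again a coboundary, so $\lambda_{\mathrm{rev}}\equiv0$. In particular $\lambda_{\mathrm{fwd}}(-1)=0$, and \eqref{eq:gc-symmetry} holds trivially.

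For the final remark, take $v=\tfrac12 Ag$ and use $\sigma=v\cdot g=\tfrac12 g^TAg$. This gives $e=\tfrac12g^TAg-\sigma=0$ pointwise.

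\textbf{Main obstacle.} The delicate step is the coboundary claim. It uses $Ag=0$ along the path, not just at $t=0$, and it applies to the pathwise observable under the contact path measure, whose paths satisfy \eqref{eq:path-constraint}: $\dot\gamma-v\in\operatorname{im}A\perp g$. At the discrete level I would handle this by a telescoping sum of $\ln\rho(y_{k+1})-\ln\rho(y_k)$, using the distributional constraint in the kernel \eqref{eq:y-step-kernel}. The $O(\Delta t)$ Taylor errors are controlled uniformly.
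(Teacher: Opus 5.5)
Your overall route is the paper's: it simply asserts that for $Ag=0$ the observable $e=-\sigma$ is a coboundary along admissible paths, citing Remark~\ref{rmk:lnrho-coboundary}, which covers only the deterministic flow $\dot\gamma=v$. Your computation $\frac{\dd}{\dd s}\ln\rho(\gamma)=g\cdot(v+A\phi)=\sigma$, using $g\cdot A\phi=(Ag)\cdot\phi=0$, is correct for differentiable admissible paths and supplies what that citation leaves implicit. The algebraic reductions, the reversed-drift case and the check $v=\tfrac12Ag\Rightarrow e=0$ are also fine.

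The gap is the step you flag as the main obstacle. Under the kernel \eqref{eq:y-step-kernel} the increments are $y_{k+1}-y_k=\Delta t\,v_k+\sqrt{\Delta t}\,\eta_k$, with $\eta_k$ centred Gaussian of covariance $A_k$ on $\operatorname{im}A_k$. So the paths charged by the measure are not differentiable. In your telescoping sum the constraint $\eta_k\perp g_k$ kills only the first-order Taylor term. The second-order term $\tfrac12\Delta t\,\eta_k^T\nabla^2\ln\rho(y_k)\,\eta_k$ is $O(\Delta t)$ per step, with generally nonzero mean $\tfrac12\Delta t\operatorname{tr}\bigl(A_k\nabla^2\ln\rho(y_k)\bigr)$. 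The $N=t/\Delta t$ such terms add up to an extensive It\^o-type term, not a negligible error. In the limit one gets, pathwise,
\[
\int_0^t e\,\dd s=\ln\rho(\gamma(0))-\ln\rho(\gamma(t))+\tfrac12\int_0^t\operatorname{tr}\bigl(A\,\nabla^2\ln\rho\bigr)\,\dd s .
\]
Differentiating $A^{ij}\partial_j\ln\rho\equiv0$ gives $\operatorname{tr}(A\nabla^2\ln\rho)=-(\partial_iA^{ij})\,\partial_j\ln\rho$, and $Ag\equiv0$ does not force this to vanish. For a rank-one $A$ tangent to curved level sets of $\rho$, tangential Gaussian steps systematically leave the level set on the side away from its centre of curvature. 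Hence $e^{q\int_0^t e\,\dd s}$ is not uniformly bounded, and $\lambda\equiv0$ does not follow from \eqref{eq:mgf}. You would need an extra hypothesis such as $\operatorname{tr}(A\nabla^2\ln\rho)\equiv0$, or a midpoint discretization in which the ordinary chain rule holds. Separately, bounding $|\ln\rho|$ needs compactness of $Y$, which Assumption~\ref{ass:standing}(iii) alone does not give.

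The repair that fits how the paper uses the SCGF, namely through $\lambda(q)=\mu(q)-\mu(0)$ (Remark~\ref{rmk:fk-link}) as in the proof of Theorem~\ref{thm:gc-symmetry}, is to do the coboundary at the level of the Hamiltonian. Since $Ag=0$, $v\cdot g=\sigma$ and $e=-\sigma$, fibre translations by $\pm qg$ absorb the tilt:
\[
H^{\mathrm{stat}}_q(y,\phi+qg)=H^{\mathrm{stat}}_0(y,\phi),\qquad H^{\mathrm{rev}}_q(y,\phi-qg)=H^{\mathrm{rev}}(y,\phi).
\]
Thus $W_0+q\ln\rho$ solves \eqref{eq:eigenvalue-eq} with eigenvalue $\mu_{\mathrm{fwd}}(0)$, and the shift by $-q\ln\rho$ works the same way for the reversed problem. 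Simplicity then gives $\mu_{\mathrm{fwd}}(q)=\mu_{\mathrm{fwd}}(0)$ and $\mu_{\mathrm{rev}}(q)=\mu_{\mathrm{rev}}(0)$ for all $q$. Because the stationary HJ operator is first order, no second-order correction arises there.
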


\section{Geometric Essence of CLDT}\label{sec:geometric-essence}


The preceding sections establish that the large-deviation structure of a dissipative system is entirely determined by its contact geometry. This section makes the correspondence explicit, object by object.

\subsection{Rate function as contact action}

The rate function of the LDP (Theorem~\ref{thm:gaertner-ellis}) is the Legendre--Fenchel transform of the SCGF:
\begin{equation}\label{eq:rate-legendre}
I(s) = \sup_{q\in\mathbb{R}}\bigl[q\,s - \lambda(q)\bigr].
\end{equation}
In the contact framework, the same object arises as the two-point action (Definition~\ref{def:two-point-action}):
\begin{equation}\label{eq:rate-as-action}
I(s) \;\longleftrightarrow\; \frac{1}{t}\,\mathcal{S}_t(y_0,y), \qquad \mathcal{S}_t = \inf_{\gamma:\,\gamma(0)=y_0,\,\gamma(t)=y}\int_0^t \varepsilon\,\dd s = -\inf_\gamma\int_0^t \Theta\Big|_{\phi=\nabla_y S},
\end{equation}
where $\varepsilon = \phi_i\dot y^i - H$ is the constraint function and $\Theta = H\,\dd t - \phi_i\,\dd y^i$ is the contact 1-form. The identity
\begin{equation}\label{eq:action-theta}
\mathcal{S} = -\int_\gamma \Theta\big|_{\phi=\nabla_y S}
\end{equation}
shows that the rate function is the path integral of the contact form along the Legendrian section $\phi = \nabla_y S$. In classical Freidlin--Wentzell theory~\cite{freidlin1998random}, the action functional $\mathcal{S}_{\mathrm{FW}} = \int_0^t L(y,\dot y)\,\dd s$ is postulated from the SDE; here it is the pullback of $\Theta$ to the extremal path, a purely geometric object.

\subsection{HJ equation as Legendrian condition}

The Hamilton--Jacobi equation (Theorem~\ref{thm:hj})
\begin{equation}\label{eq:hj-recall}
\partial_t S + H\!\left(t,y,\nabla_y S\right) = 0
\end{equation}
is the defining equation of the Legendrian submanifold $\mathcal{L} \subset (T^*E\times\mathbb{R},\Theta)$:
\begin{equation}\label{eq:legendrian}
\mathcal{L} = \{(t,y,\phi)\;:\;\phi = \nabla_y S,\; \Theta|_{\mathcal{L}} = 0\}.
\end{equation}
The condition $\Theta|_{\mathcal{L}} = 0$ is equivalent to $\varepsilon|_{\phi=\nabla_y S} = 0$ in the time-derivative direction, which is precisely \eqref{eq:hj-recall}. The rate function $\mathcal{S}_t$ (Theorem~\ref{thm:contact-ldp}) is the viscosity solution of this equation, so the large-deviation quasipotential is the generating function of the Legendrian submanifold.

\subsection{SCGF as contact eigenvalue}

The SCGF $\lambda(q) = \mu(q) - \mu(0)$ is defined through the principal eigenvalue $\mu(q)$ of the stationary tilted HJ equation (Proposition~\ref{prop:temporal-cancel}):
\begin{equation}\label{eq:eigenvalue-recall}
\gamma_0\ln\frac{\rho}{\rho_0} + v\cdot\nabla_y W_q + \tfrac12(\nabla_y W_q)^T A\,(\nabla_y W_q) + q\,e = \mu(q).
\end{equation}
This is the stationary version of the contact HJ equation for the tilted potential $H_q = H + q\,e$. The eigenvalue $\mu(q)$ is the rate of change of $S$ along the contact flow $X_{H_q}$:
\begin{equation}\label{eq:mu-as-contact-rate}
\mu(q) = \left.\frac{\dd S}{\dd s}\right|_{X_{H_q}} = \varepsilon_q\big|_{\phi=\nabla_y W_q},
\end{equation}
where $\varepsilon_q = \phi_i\dot y^i - H_q$ is the tilted constraint function. The convexity of $\lambda(q)$, classically derived from H\"older's inequality, follows geometrically from the linearity of $\varepsilon_q$ in $q$ and the variational characterization (Proposition~\ref{prop:var-mu}):
\begin{equation}\label{eq:convexity-geometric}
\mu(q) = \inf_{W\in C^1(Y)}\sup_{y\in Y}\;\varepsilon_q\big|_{\phi=\nabla_y W}.
\end{equation}
Since $\varepsilon_q$ is affine in $q$, the infimum--supremum of affine functions is convex.

\subsection{Fluctuation duality as contact automorphism}

The GC duality (Theorem~\ref{thm:gc-symmetry})
\begin{equation}\label{eq:gc-recall}
\lambda_{\mathrm{fwd}}(q) = \lambda_{\mathrm{rev}}(q+1) + \lambda_{\mathrm{fwd}}(-1)
\end{equation}
arises from the time-reversal involution $J: (t,y,\phi)\mapsto(t,y,-\phi-\nabla\ln\rho)$, which is a contact automorphism:
\begin{equation}\label{eq:J-contact}
J^*\Theta = \Theta^{\mathrm{rev}}, \qquad H_q^{\mathrm{stat}}\circ J = H_{q+1}^{\mathrm{rev}}
\end{equation}
(Proposition~\ref{prop:H-symmetry}). The $q$-shift $q\mapsto q+1$ is fixed by the fibre component $-\nabla\ln\rho$ of $J$ and the reversed drift $v^{\mathrm{rev}}=-v+Ag$, jointly determined by the Liouville identity $v\cdot\nabla\ln\rho=\sigma$ and the entropy production $e=\tfrac12g^TAg-\sigma$ (Definition~\ref{def:entropy-prod}); no degeneracy condition $Ag=0$ is required. Since the temporal gauge is a bounded coboundary (Remark~\ref{rmk:lnrho-coboundary}), no bivariate SCGF is needed and the one-dimensional duality \eqref{eq:gc-symmetry} is the complete statement.

The correspondence is structural, not analogical (ref. Tab.~\ref{tab:correspondence}): each classical object is defined by its contact-geometric counterpart, and the theorems of the preceding sections establish the equivalences rigorously. The contact structure $(\mathcal{E},\Theta)$, determined by the deterministic drift $v$, the invariant measure $\rho$, and the fluctuation tensor $A$, generates the complete large-deviation apparatus, consisting of the rate function, the SCGF, the HJ equation, and the fluctuation duality, as intrinsic geometric consequences.

\begin{table}[H]
\small
\centering
\renewcommand{\arraystretch}{1.4}
\caption{Correspondence between classical LDT objects and contact-geometric objects.}
\label{tab:correspondence}
\begin{tabularx}{\textwidth}{>{\raggedright\arraybackslash}X >{\raggedright\arraybackslash}X >{\raggedright\arraybackslash}X}
\toprule
\textbf{Classical LDT object} & \textbf{Contact-geometric object} & \textbf{Reference} \\
\midrule
Rate function $I(s)$ & Contact action $\tfrac{1}{t}\mathcal{S}_t = -\tfrac{1}{t}\int\Theta\big|_{\mathcal{L}}$ & Thm.~\ref{thm:contact-ldp} \\
SCGF $\lambda(q)$ & Principal contact eigenvalue $\mu(q) - \mu(0)$ & Prop.~\ref{prop:temporal-cancel} \\
HJ equation & Legendrian condition $\Theta|_{\mathcal{L}} = 0$ & Thm.~\ref{thm:hj} \\
Freidlin--Wentzell action & Constraint action $\int\varepsilon\,\dd s = -\int\Theta$ & Def.~\ref{def:two-point-action}; \cite{freidlin1998random} \\
G\"artner--Ellis convexity & Affine tilt: $\varepsilon_q = \varepsilon + q\,e$ & Prop.~\ref{prop:var-mu} \\
GC duality $\lambda_{\mathrm{fwd}}(q)=\lambda_{\mathrm{rev}}(q+1)+\lambda_{\mathrm{fwd}}(-1)$ & Contact automorphism $J^*\Theta = \Theta^{\mathrm{rev}}$ & Thm.~\ref{thm:gc-symmetry} \\
Path measure $\mathbb{P}_t$ & Projective limit of contact kernels & Def.~\ref{def:contact-path-measure} \\
Instanton (optimal path) & Contact characteristic $(\dot y, \dot\phi) = X_H$ & Eqs.~\eqref{eq:char-y}--\eqref{eq:char-S} \\
Non-trivial rate function & Non-degenerate contact structure $\Omega_C\propto\varepsilon\neq0$ & Rmk.~\ref{rmk:fluctuation-budget} \\
\bottomrule
\end{tabularx}
\end{table}

\section{Conclusion}\label{sec:conclusion}

This paper has developed a contact-geometric formulation of large deviation theory, in which the rate function, the scaled cumulant generating function, fluctuation symmetries, and the Hamilton--Jacobi structure all emerge from the contact 1-form of a stochastic vector bundle. The key results can be summarized as follows.

First, the contact manifold $(\mathcal{E}, \Theta)$ with contact 1-form $\Theta = H\,dt - \phi_i\,dy^i$ provides a unified geometric framework for stochastic dynamics. The constraint function $\varepsilon = \phi_i\dot{y}^i - H$ serves as a generalized Lagrangian, and the least constraint principle $\delta\int\varepsilon\,dt = 0$ yields the contact dynamical equations as extremality conditions. The contact flow preserves the contact volume form on $\{\varepsilon \neq 0\}$ and satisfies a Liouville-type theorem, giving the geometric counterpart of phase-space conservation in dissipative stochastic systems.

Second, the contact path measure, constructed as the projective limit of discrete-time $y$-step kernels with Gaussian fibre integration, satisfies a large deviation principle with rate function given by the constraint action. The G\"artner--Ellis theorem and a Gallavotti--Cohen-type fluctuation duality between forward and time-reversed ensembles are derived from the contact algebra without additional probabilistic assumptions. The central result is a Gallavotti--Cohen-type fluctuation duality (Theorem~\ref{thm:gc-symmetry}): the forward and reversed SCGFs of the entropy production $e=\tfrac12g^TAg-\sigma$ obey $\lambda_{\mathrm{fwd}}(q)=\lambda_{\mathrm{rev}}(q+1)+\lambda_{\mathrm{fwd}}(-1)$, with the $q$-shift $+1$ fixed by the time-reversal involution $J:(t,y,\phi)\mapsto(t,y,-\phi-\nabla\ln\rho)$ and the reversed drift $v^{\mathrm{rev}}=-v+Ag$. The classical one-dimensional GC symmetry is recovered in the reversible case $Ag=0$. For a second-order truncated contact potential $H = H^{(0)} + v\cdot\phi + \tfrac12\phi^T A\phi$, the rate function takes a Freidlin--Wentzell form, the SCGF satisfies a stationary eigenvalue equation, and the tilted Hamiltonian generates a contact Hamilton--Jacobi equation.

Third, the whole structure is governed by a single conserved quantity. The constraint function $\varepsilon$ is at once the conserved first integral that keeps the stochastic coupling $\xi=A\phi$ finite, the density of the two-point action that defines the rate function, and the coefficient of the contact volume form $\Omega_C\propto\varepsilon$, which is why the contact structure is non-degenerate precisely where $\varepsilon\neq0$. Finite stochasticity, a non-degenerate contact structure, and a nontrivial large-deviation rate function are therefore three manifestations of one fluctuation budget, and the mathematical limit $A\to0$ is the degenerate reversible limit in which this budget and the large-deviation structure vanish.

Several directions for future work are natural. The framework can be extended to higher-order truncations of the contact potential, which would describe non-Gaussian fluctuations beyond the quadratic approximation. The relation between the contact structure and the macroscopic fluctuation theory of irreversible processes, including the rate function $I(s)$ of the entropy production, deserves further exploration, particularly in the context of nonequilibrium steady states. Finally, the entropy production $e=\tfrac12g^TAg-\sigma$ and the duality \eqref{eq:gc-symmetry} suggest a deeper connection between contact geometry and the fluctuation theorems of nonequilibrium statistical mechanics: the entropy production, invisible to the classical reversible framework, serves as a quantitative diagnostic of irreversibility in stochastic systems.  

\section*{Acknowledgments}
The author thanks the Department of Hydraulic Engineering at Tsinghua University for providing the research environment that made this work possible.

\section*{Funding}
This research received no specific grant from any funding agency in the public, commercial, or not-for-profit sectors.

\section*{Conflict of Interest}
The author declares no conflict of interest.

\section*{Data Availability}
This is a theoretical study involving no experimental data. All mathematical derivations are contained within the manuscript.

\end{document}